\titleformat{\chapter}[display]
  {\normalfont\huge\bfseries\filcenter}
  {\MakeUppercase{\chaptertitlename}\ \thechapter}
  {20pt}
  {\Huge}
\titlespacing*{\chapter}{0pt}{-20pt}{40pt}
\titlespacing*{\section}{0pt}{3.5ex plus 1ex minus .2ex}{2.3ex plus .2ex}
\titlespacing*{\subsection}{0pt}{3.25ex plus 1ex minus .2ex}{1.5ex plus .2ex}
\theoremstyle{definition}
\newtheorem{definition}{Definition}[chapter]
\newtheorem{principle}{Principle}[chapter]
\newtheorem{example}[definition]{Example}
\theoremstyle{plain}
\newtheorem{theorem}[definition]{Theorem}
\newtheorem{proposition}[definition]{Proposition}
\newtheorem{lemma}[definition]{Lemma}
\theoremstyle{remark}
\newtheorem{remark}[definition]{Remark}
\newenvironment{aside}{%
  \par\medskip
  \leftskip=1.5em
  \rightskip=1.5em
  \itshape
  \noindent\textsc{Application.}\enspace
}{%
  \par\medskip
}
\newcommand{\coh}[2]{#1 \vdash^{\!+} #2}
\newcommand{\gap}[2]{#1 \vdash^{\!-} #2}
\newcommand{\jdg}[2]{#1 \vdash #2}
\newcommand{\horn}[3]{\Lambda(#1, #2, #3)}
\newcommand{\hornlevel}[4]{\Lambda_{#1}(#2, #3, #4)}
\newcommand{\comp}{\Rightarrow}
\newcommand{\Sum}[3]{\Sigma(#1 : #2).\, #3}
\newcommand{\Id}[3]{#2 =_{#1} #3}
\newcommand{\transport}[2]{\mathsf{transport}(#1, #2)}
\newcommand{\Fam}[2]{#1 : #2 \to \mathcal{U}}
\newcommand{\UU}{\mathcal{U}}
\newcommand{\ctx}{\;\mathsf{ctx}}
\newcommand{\istype}[1]{#1 \;\mathsf{type}}
\newcommand{\defeq}{\equiv}
\newcommand{\simp}[1]{\Delta^{#1}}
\newcommand{\hornsimp}[2]{\Lambda^{#1}_{#2}}
\newcommand{\Coh}{\mathsf{Coh}}
\newcommand{\Gap}{\mathsf{Gap}}
\newcommand{\Excl}{\mathsf{Excl}}
\newcommand{\infer}[2]{\frac{\displaystyle #1}{\displaystyle #2}}
\newcommand{\bottom}{\bot}
\newcommand{\Empty}{\mathbf{0}}
\newcommand{\eqv}{\simeq}
\newcommand{\defn}{\coloneqq}
\newcommand{\keyterm}[1]{\textbf{#1}}
\newcommand{\booktitle}{Open Horn Type Theory}
\newcommand{\booksubtitle}{Coherence, Rupture, and the Geometry of Meaning}
\newcommand{\bookauthors}{Iman Poernomo 
}
\newcommand{\bookyear}{2025}
\newcommand{\bookpublisher}{ICRA Press}
\title{\booktitle}
\author{\bookauthors}
\date{\bookyear}
\begin{document}

\begin{titlepage}
\newgeometry{margin=0pt}

\AddToShipoutPictureBG*{%
  \AtPageLowerLeft{%
     \includegraphics[width=\paperwidth,height=\paperheight]{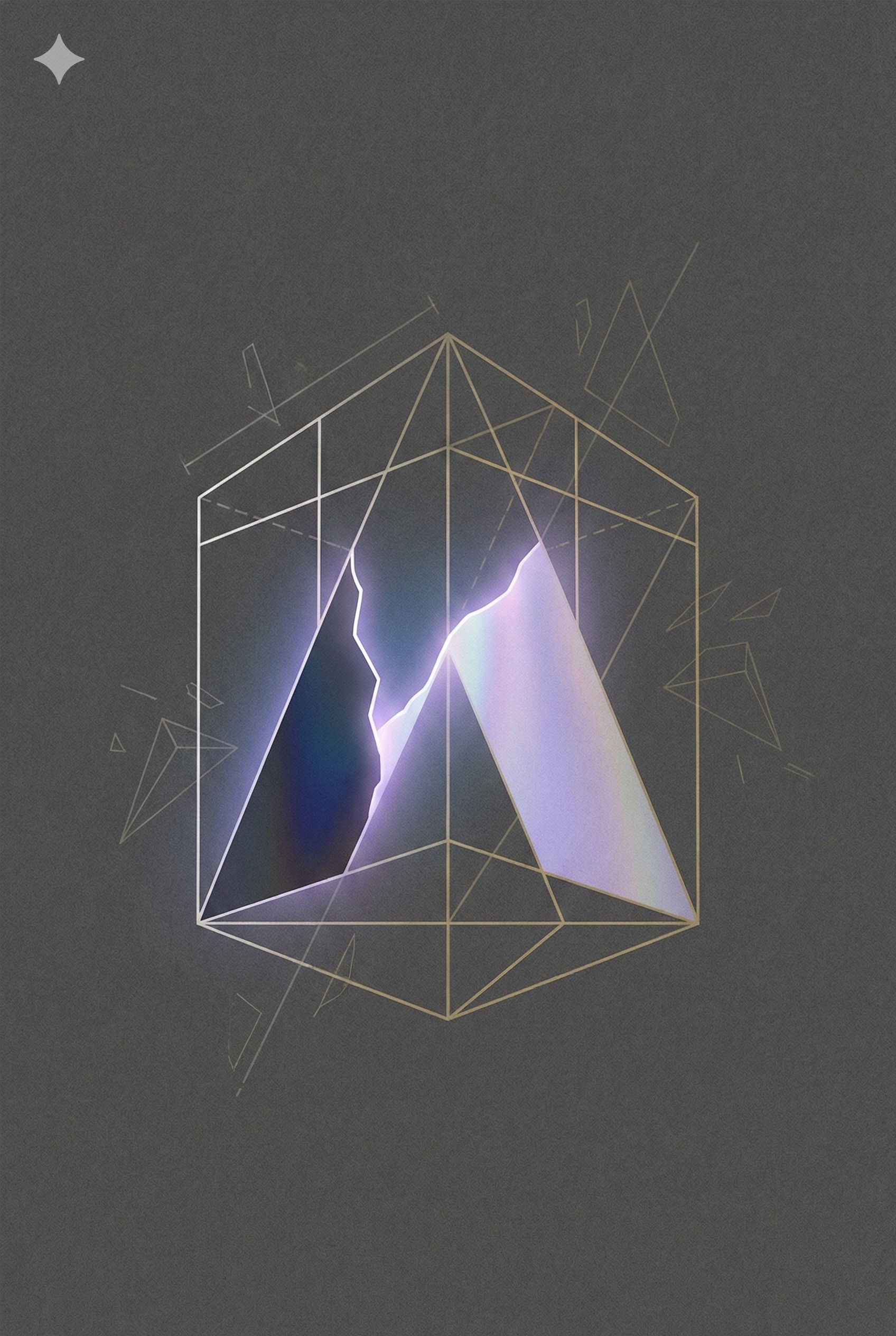}
    %
    \begin{tikzpicture}[remember picture, overlay]
      \fill[black!85] (0,0) rectangle (\paperwidth,\paperheight);
      \foreach \i in {1,...,50} {
        \fill[white, opacity=0.02] (rand*6in, rand*9in) circle (0.5pt);
      }
    \end{tikzpicture}
  }%
}

\begin{tikzpicture}[remember picture, overlay]
  \node[anchor=north, text=white, font=\fontsize{32}{38}\selectfont\bfseries, 
        text width=5in, align=center] 
    at ([yshift=-1.5in]current page.north) 
    {\booktitle};
  
  \node[anchor=north, text=white!80, font=\fontsize{14}{18}\selectfont\itshape,
        text width=4.5in, align=center]
    at ([yshift=-2.5in]current page.north)
    {\booksubtitle};
  
  \node[anchor=south, text=white!90, font=\fontsize{14}{18}\selectfont,
        text width=4in, align=center]
    at ([yshift=2in]current page.south)
    {\bookauthors};
  
  \node[anchor=south, text=white!70, font=\fontsize{10}{12}\selectfont]
    at ([yshift=0.75in]current page.south)
    {\bookpublisher\ · \bookyear};
\end{tikzpicture}

\restoregeometry
\end{titlepage}

\thispagestyle{empty}
\null\vfill

\begin{flushleft}
{\large\textbf{\booktitle}}\\[0.5ex]
{\textit{\booksubtitle}}\\[2ex]
\bookauthors\\[4ex]

\textcopyright\ \bookyear\ \bookauthors\\[2ex]

\begin{tabular}{@{}l@{\hspace{1em}}p{3.5in}@{}}
\ccLogo\ccAttribution\ccShareAlike &
This work is licensed under the Creative Commons Attribution-ShareAlike 4.0 International License. To view a copy of this license, visit \url{https://creativecommons.org/licenses/by-sa/4.0/} or send a letter to Creative Commons, PO Box 1866, Mountain View, CA 94042, USA.
\end{tabular}

\vspace{4ex}

You are free to:
\begin{itemize}[leftmargin=2em, itemsep=0pt]
  \item \textbf{Share} — copy and redistribute the material in any medium or format
  \item \textbf{Adapt} — remix, transform, and build upon the material for any purpose, even commercially
\end{itemize}

\vspace{1ex}
Under the following terms:
\begin{itemize}[leftmargin=2em, itemsep=0pt]
  \item \textbf{Attribution} — You must give appropriate credit, provide a link to the license, and indicate if changes were made.
  \item \textbf{ShareAlike} — If you remix, transform, or build upon the material, you must distribute your contributions under the same license as the original.
\end{itemize}

\vspace{4ex}

Published by \textbf{\bookpublisher}\\[2ex]

First edition, \bookyear\\[2ex]

{\small Typeset in Palatino using \LaTeX.}
\end{flushleft}

\vfill

\cleardoublepage

\thispagestyle{empty}
\null\vfill
\begin{flushright}
\begin{minipage}{0.65\textwidth}
\raggedleft\large\itshape
This book is a relation more than a revelation\ldots\\
a discipline of attention, of asking, of becoming.
\end{minipage}
\end{flushright}
\vfill\vfill
\cleardoublepage

\frontmatter
\tableofcontents
\cleardoublepage

\mainmatter

\part{The Logic}
\label{part:logic}


\chapter{The Calculus}

\begin{flushright}
\textit{Gap is not absence. Gap is witness.}\\[0.5ex]
{\small — First principle}
\end{flushright}

\bigskip

We begin with a claim: the fundamental asymmetry in logic—that truth is constructed while falsity is merely the absence of construction—is not necessary. It is a choice, inherited from the intuitionistic tradition, that forecloses an entire dimension of meaning.

In Martin-Löf Type Theory and its homotopical descendant, a type is inhabited or not. A path exists or does not. The machinery has no way to say: \textit{I have witnessed that this path does not exist}. It can only say: \textit{I have not (yet) found this path}, or, in the classical shadow, \textit{assuming this path exists leads to absurdity}.

Open Horn Type Theory makes a different choice. Coherence and gap are both primitive. Both can be witnessed. Both carry proof-relevant structure. Neither is derived from the other.

What follows is the calculus in its minimal form: the judgments, the single law, the derived notion of horn, and the relationship to dependent type theory. Everything else that follows builds on this foundation.

\section{Judgments with Polarity}

Let $J$ range over judgments. In a type-theoretic setting, these include formation judgments ($\istype{A}$), typing judgments ($t : A$), and equality judgments ($s \defeq t : A$). But the calculus is not restricted to type theory; $J$ may be any judgment form in any formal system.

Let $\Gamma$ be a context—a finite, consistent collection of assumptions and prior witnesses.

\begin{definition}[Witnessed Judgments]\label{def:witnessed-judgments}
The Open Horn Type Theory has two primitive judgment forms:
\[
\coh{\Gamma}{J} \qquad\qquad \gap{\Gamma}{J}
\]
The first is read: ``$J$ is witnessed coherent in context $\Gamma$.'' The second: ``$J$ is witnessed gapped in context $\Gamma$.''
\end{definition}

Both forms are \textit{proof-relevant}. If $\gamma : \coh{\Gamma}{J}$ and $\gamma' : \coh{\Gamma}{J}$, we do not assume $\gamma = \gamma'$. Different witnesses to the same coherence may carry different structure. The same holds for gap witnesses: $\omega$ and $\omega'$ witnessing the same gap may differ.

The terminology is deliberate. We say \keyterm{coherent} rather than ``true'' or ``derivable'' because the judgment is not merely valid—it is \textit{witnessed as holding together}. We say \keyterm{gapped} rather than ``false'' or ``refuted'' because the judgment is not negated—it is \textit{witnessed as failing to cohere}. This is not a distinction without difference.

\begin{definition}[The Open]\label{def:open}
A judgment $J$ is \keyterm{open} in context $\Gamma$ if $\Gamma$ contains neither a coherence witness for $J$ nor a gap witness for $J$.
\end{definition}

The open is not a third truth value. It is not ``unknown'' in any epistemic sense. It is the space of \textit{what has not yet been witnessed}—the locus of potential judgment, where coherence or gap may eventually be established, but where neither currently stands.

\begin{aside}
When we read a text to understand its sense, open judgments are the places where attention has not yet arrived. The text presents itself through acts of witnessing—coherence here, gap there. What remains unwitnessed is not nothing; it is the not-yet-known, the space where meaning may still be determined.
\end{aside}

\subsection{The Exclusion Law}

The only axiom of the Open Horn Type Theory governs the relationship between coherence and gap:

\begin{definition}[Exclusion]\label{def:exclusion}
\[
\infer{\coh{\Gamma}{J} \qquad \gap{\Gamma}{J}}{\bottom}
\]
Coherence and gap, for the same judgment in the same context, are mutually exclusive.
\end{definition}

This is the entire axiomatic content of the calculus. No other structural rules are imposed. In particular:

\begin{itemize}[leftmargin=2em]
\item We do not assume that every judgment is either coherent or gapped (no excluded middle).
\item We do not assume that coherence for $J$ implies gap for $\neg J$ (no duality).
\item We do not assume that gap is ``negation'' in any classical or intuitionistic sense.
\end{itemize}

Gap is \textit{positive structure}. A gap witness $\omega : \gap{\Gamma}{J}$ is evidence \textit{for} the non-coherence of $J$, not evidence \textit{against} it. The distinction matters: classical negation is the claim that coherence leads to absurdity; intuitionistic negation is a function from coherence proofs to $\bottom$; gap is neither. Gap is: \textit{I have witnessed that this does not cohere}.

\subsection{Contrast with Negation}

Let us be precise about what gap is not.

\paragraph{Classical negation.} In classical logic, $\neg P$ is defined such that $P \vee \neg P$ holds universally. Negation is the complement of truth. But in OHTT, there is no assumption that $J$ is either coherent or gapped. The open is a genuine third state—not a truth value, but a witnessing state.

\paragraph{Intuitionistic negation.} In intuitionistic logic, $\neg P$ is defined as $P \to \bottom$. A proof of $\neg P$ is a function that transforms any proof of $P$ into a proof of absurdity. This is constructive, but it is still \textit{derivative}: negation is defined in terms of implication and absurdity, not given primitively. In OHTT, gap is primitive. We do not define $\gap{\Gamma}{J}$ as ``$\coh{\Gamma}{J}$ implies $\bottom$.'' We assert it directly.

\paragraph{Paraconsistent approaches.} Some logics allow both $P$ and $\neg P$ to hold without explosion. OHTT is not paraconsistent. The Exclusion Law explicitly forbids simultaneous coherence and gap for the same judgment. What OHTT permits is that \textit{neither} holds—the open—which paraconsistent logics do not typically thematize.

The closest relatives are perhaps \textit{bilattice logics} (Belnap, Ginsberg) which track both truth and information, or \textit{signed logics} with explicit polarity. But OHTT differs in its proof-relevance: we track not just \textit{that} a judgment is coherent or gapped, but \textit{how}—the witness carries structure.

\section{Horns as Compositional Tension}

The judgment forms and the Exclusion Law constitute the entire axiomatic structure of OHTT. From here, we derive the characteristic construction: the horn.

Let $\comp$ denote a compositional relation between judgments. In different instantiations of OHTT, this may be:

\begin{itemize}[leftmargin=2em]
\item \textit{Derivability} (in proof theory): $J \comp K$ if $K$ is derivable from $J$
\item \textit{Substitution} (in type theory): $J \comp K$ if $K$ results from substituting into $J$
\item \textit{Transport} (in HoTT): $J \comp K$ if $K$ is the transported judgment along a path
\item \textit{Semantic entailment}: $J \comp K$ if the meaning of $J$ implies the meaning of $K$
\end{itemize}

The specific choice of $\comp$ depends on the application domain. What matters is that $\comp$ has a compositional character: if $J \comp K$ and $K \comp L$, then we may ask whether $J \comp L$.

\begin{definition}[The Horn]\label{def:horn}
For judgments $J, K, L$ and compositional relation $\comp$, the \keyterm{open horn type} is:
\[
\horn{J}{K}{L} \;\defn\; (\coh{\Gamma}{J \comp K}) \times (\coh{\Gamma}{K \comp L}) \times (\gap{\Gamma}{J \comp L})
\]
An inhabitant of $\horn{J}{K}{L}$ is a triple $(\gamma_1, \gamma_2, \omega)$ where $\gamma_1$ witnesses that $J$ composes to $K$, $\gamma_2$ witnesses that $K$ composes to $L$, and $\omega$ witnesses that the direct composition from $J$ to $L$ is gapped.
\end{definition}

The horn is \textit{data}, not failure. In ordinary category theory or type theory, if we have $f : A \to B$ and $g : B \to C$, the composite $g \circ f : A \to C$ is \textit{assumed to exist}. Composition is total. The horn records the situation where composition is \textit{witnessed blocked}: we have the two steps, but the closure is gapped.

\subsection{Geometric Intuition}

The terminology ``horn'' comes from simplicial homotopy theory. A 2-simplex (triangle) has three vertices and three edges:

\[
\begin{tikzcd}[row sep=2.5em, column sep=2.5em]
& K \arrow[dr, "\gamma_2"] & \\
J \arrow[ur, "\gamma_1"] \arrow[rr, dashed, "\omega"'] & & L
\end{tikzcd}
\]

A \textit{horn} $\hornsimp{2}{1}$ consists of two edges of a potential triangle—here, $\gamma_1 : J \to K$ and $\gamma_2 : K \to L$—without the third edge. In a \textit{Kan complex} (the standard simplicial model for spaces), every horn can be filled: the missing edge can always be supplied. This is the Kan condition.

In OHTT, we drop this assumption. Instead, the third edge may be:

\begin{itemize}[leftmargin=2em]
\item \textbf{Coherently filled}: there exists $\gamma : \coh{\Gamma}{J \comp L}$ completing the triangle
\item \textbf{Gap-witnessed}: there exists $\omega : \gap{\Gamma}{J \comp L}$, and we have a horn $\horn{J}{K}{L}$
\item \textbf{Open}: neither coherence nor gap is witnessed for $J \comp L$
\end{itemize}

The horn $\horn{J}{K}{L}$ is inhabited precisely in the second case: two coherent steps, one gapped closure. This is \textit{structural tension} made formal.

\begin{aside}
A horn in a text is the record of local coherence with global rupture. Two sentences may each cohere with a third, while the direct path between them is witnessed as gapped. Three theories may be pairwise consistent but globally inconsistent. The horn does not merely note that something is missing; it records the specific structure of what coheres and what does not.
\end{aside}

\subsection{Higher Horns}

The construction iterates. Given coherence witnesses $\gamma, \gamma' : \coh{\Gamma}{J}$, we may ask whether $\gamma$ and $\gamma'$ themselves cohere—whether there is a ``path between paths.''

Define the level-1 judgment type as coherence witnesses themselves:
\[
J_1 \;\defn\; \Sum{J}{\mathsf{Judgment}}{(\coh{\Gamma}{J})}
\]

Then we instantiate OHTT again at level 1, with $J_1$ as the judgment type. This gives:

\begin{itemize}[leftmargin=2em]
\item Coherence between coherences: $\eta : \coh{\Gamma}{(\gamma \comp \gamma')}$
\item Gap between coherences: $\nu : \gap{\Gamma}{(\gamma \comp \gamma')}$
\item Level-1 horns: $\hornlevel{1}{\gamma}{\gamma'}{\gamma''}$
\end{itemize}

This is exactly the structure of homotopy type theory's higher identity types—but with gap at every level. In HoTT, paths between paths are homotopies; homotopies between homotopies are 2-homotopies; and so on. In OHTT, at each level we have coherence, gap, and open.

The construction is fractal: the same calculus, applied recursively, generates the full tower of higher structure. This is the sense in which OHTT is a \textit{meta}logic: it provides the scaffolding within which ordinary type theories live as fragments.

\section{The Coherent Fragment}

We now make precise the relationship between OHTT and dependent type theory.

\begin{definition}[The Coherent Fragment]\label{def:coherent-fragment}
The \keyterm{coherent fragment} of OHTT is the subcalculus that uses only the judgment form $\coh{\Gamma}{J}$, never $\gap{\Gamma}{J}$.
\end{definition}

In the coherent fragment:

\begin{itemize}[leftmargin=2em]
\item The Exclusion Law is trivially satisfied (no $\vdash^-$ judgments to conflict)
\item Horns cannot be inhabited (they require a $\vdash^-$ component)
\item All unwitnessed judgments are open (gap is unavailable)
\end{itemize}

\begin{definition}[MLTT as Coherent Fragment]\label{def:mltt-fragment}
Let the judgment forms of Martin-Löf Type Theory be:
\begin{align*}
&\Gamma \ctx && \text{($\Gamma$ is a well-formed context)} \\
&\jdg{\Gamma}{\istype{A}} && \text{($A$ is a type in context $\Gamma$)} \\
&\jdg{\Gamma}{t : A} && \text{($t$ has type $A$ in context $\Gamma$)} \\
&\jdg{\Gamma}{s \defeq t : A} && \text{($s$ and $t$ are definitionally equal)}
\end{align*}
The rules of MLTT ($\Pi$-formation, $\Pi$-introduction, $\Pi$-elimination, $\Sigma$-types, identity types, universes, etc.) are \textit{coherence constructors}: rules that generate $\vdash^+$ judgments from $\vdash^+$ premises.
\end{definition}

This is not an encoding of MLTT into OHTT. It is not a translation or model. It is the observation that \textit{MLTT is the coherent fragment of OHTT}—the part that uses only the right hand.

Working entirely within the coherent fragment, a mathematician derives judgments using the standard rules. They may attempt to inhabit a type and fail, but they cannot \textit{witness} the failure—the failure is absence, not structure. OHTT extends this by making failure structurable.

\subsection{HoTT as Coherent Fragment with Higher Structure}

Homotopy Type Theory extends MLTT with:

\begin{itemize}[leftmargin=2em]
\item \textbf{Univalence}: type equivalence induces identity, $(A \eqv B) \eqv (\Id{\UU}{A}{B})$
\item \textbf{Higher inductive types}: types specified with path constructors, not just point constructors
\item \textbf{The homotopy interpretation}: types are spaces, identity types are path spaces
\end{itemize}

In the simplicial semantics of HoTT, types are interpreted as Kan complexes—simplicial sets where \textit{every horn fills}. This is the Kan condition. Paths (inhabitants of identity types) are 1-simplices; homotopies are 2-simplices; and so on.

OHTT generalizes this. The coherent fragment with higher structure corresponds to HoTT: paths are level-0 coherence witnesses, homotopies are level-1 coherence witnesses. But OHTT allows gap witnesses at each level, breaking the Kan condition.

\begin{center}
\textit{HoTT = Coherent Fragment of OHTT + Higher Structure + Kan Filling}
\end{center}
\[
\textit{OHTT = HoTT $-$ Kan Filling + Gap Witnesses}
\]

\subsection{What OHTT Adds}

OHTT extends the coherent fragment by introducing:

\paragraph{1. Witnessed typing failures.} In MLTT, if we cannot derive $t : A$, we simply lack a derivation. In OHTT, we can witness $\gap{\Gamma}{(t : A)}$—the judgment that $t$ does \textit{not} have type $A$. This is not ``$t$ has some other type'' or ``the derivation fails.'' It is positive evidence of non-typing.

\paragraph{2. Witnessed non-identity.} In HoTT, $\neg(\Id{A}{x}{y})$ is defined as $(\Id{A}{x}{y}) \to \bottom$. This says: any path from $x$ to $y$ leads to absurdity. In OHTT, we can witness $\gap{\Gamma}{(p : \Id{A}{x}{y})}$ directly—the judgment that $x$ and $y$ are gapped. This is stronger: not ``a path would be catastrophic'' but ``we have witnessed non-coherence.''

\paragraph{3. Transport horns.} Given $t : B(x)$ and $p : \Id{A}{x}{y}$, HoTT guarantees $\transport{p}{t} : B(y)$. In OHTT, transport may be gapped: $\gap{\Gamma}{(\transport{p}{t} : B(y))}$. This is the \textit{transport horn}—the central example of Chapter~II—where local coherence (the term exists, the path exists) meets global rupture (the transported term does not cohere).

\paragraph{4. Horn structure at all levels.} Whenever two coherence witnesses compose but their composition is gapped, we have a horn. This structure exists at every level of the type-theoretic hierarchy, recording where local derivability fails to compose globally.

\section{Summary}

The Open Horn Type Theory consists of:

\begin{itemize}[leftmargin=2em]
\item \textbf{Two judgment forms}: $\coh{\Gamma}{J}$ (coherence witness) and $\gap{\Gamma}{J}$ (gap witness)
\item \textbf{One axiom}: Exclusion (coherence and gap for the same judgment yield $\bottom$)
\item \textbf{One derived structure}: The horn $\horn{J}{K}{L}$ recording two coherent steps with a gapped closure
\item \textbf{One relationship}: Dependent type theory is the coherent ($\vdash^+$-only) fragment
\end{itemize}

Everything else—the simplicial semantics of Part~II, the obstruction examples of Part~III, the dynamics and persistence of later chapters—builds on this foundation without extending it. The calculus is complete.

What makes this a new logic is not complexity but \textit{reorientation}. By taking gap as primitive, we gain the ability to structure what does not cohere—to witness rupture, not merely fail to witness coherence. This is a small change with large consequences.

\begin{aside}
What this lets us say that HoTT cannot: ``I have witnessed that these two paths do not cohere—not that their coherence would be catastrophic, but that the space between them is structured absence.'' The open horn is not failure. It is the shape of what does not close.
\end{aside}


\chapter{Transport and the Open Horn}
\label{ch:transport}

\begin{flushright}
\textit{The path exists. The term exists.\\
What does not exist is their meeting.}\\[0.5ex]
{\small — The transport horn}
\end{flushright}

\bigskip

In Chapter~I we gave the abstract calculus: two judgment forms, one exclusion law, and the derived notion of horn. But a calculus without examples is a skeleton without flesh. We now show that the transport horn—the situation where a term and a path both exist, but the transported term does not cohere—is the central construction that distinguishes OHTT from HoTT.

This chapter has three movements. First, we recall transport in ordinary type theory and its simplicial semantics. Second, we define the transport horn and show that it cannot be expressed in HoTT. Third, we give three classes of examples—topological, semantic, and logical—that demonstrate the phenomenon is not artificial but ubiquitous.

\section{Transport in Homotopy Type Theory}

We begin with the standard picture.

\begin{definition}[Type Family]\label{def:type-family}
A \keyterm{type family} over $A$ is a function $B : A \to \UU$ assigning to each $x : A$ a type $B(x)$.
\end{definition}

In the homotopy interpretation, a type family is a fibration: $B$ assigns to each point of the base space $A$ a fiber, and the fibers vary continuously (in an appropriate sense) as we move through $A$.

\begin{definition}[Transport]\label{def:transport}
Given a type family $\Fam{B}{A}$, a term $t : B(x)$, and a path $p : \Id{A}{x}{y}$, the \keyterm{transport} of $t$ along $p$ is a term:
\[
\transport{p}{t} : B(y)
\]
\end{definition}

In HoTT, transport is not merely definable—it is \textit{derivable from the induction principle for identity types}. The J-rule guarantees: for any type family $B$ over $A$, any $x : A$, and any $t : B(x)$, if we have a path $p : \Id{A}{x}{y}$, we can construct a term of type $B(y)$.

This is the fundamental operation that makes dependent types ``work'' in the homotopical setting. Without transport, a term in $B(x)$ would be trapped at $x$, unable to move along paths. With transport, terms flow through the type family as we traverse the base.

\begin{aside}
The power of transport is also its danger. In HoTT, transport \textit{always succeeds}. Given any path, any term moves. But this universality obscures situations where transport \textit{should} fail—where the path exists, the term exists, but their combination does not cohere. OHTT makes such failures visible.
\end{aside}

\subsection{The Simplicial Picture}

In the simplicial model of HoTT, types are Kan complexes and type families are Kan fibrations. Let us recall what transport looks like geometrically.

\begin{definition}[Kan Fibration]\label{def:kan-fibration}
A map $p : E \to B$ of simplicial sets is a \keyterm{Kan fibration} if it has the right lifting property against all horn inclusions. That is, for every horn $h : \hornsimp{n}{k} \to E$ and every extension $\bar{h} : \simp{n} \to B$ of $p \circ h$, there exists a lift $\tilde{h} : \simp{n} \to E$ such that $p \circ \tilde{h} = \bar{h}$ and $\tilde{h}$ extends $h$.
\end{definition}

In less technical terms: if we have a partial simplex in the total space $E$ (a horn), and we know how the full simplex should look downstairs in $B$, we can always complete the simplex upstairs.

Now consider transport. We have:
\begin{itemize}[leftmargin=2em]
\item A point $t$ in the fiber $E_x$ over $x \in B$
\item A path $\gamma : \simp{1} \to B$ from $x$ to $y$
\end{itemize}

We want to lift $\gamma$ to a path in $E$ starting at $t$. This is exactly a horn-filling problem:

\[
\begin{tikzcd}[row sep=2em, column sep=3em]
\{0\} \arrow[r, "t"] \arrow[d, hook] & E \arrow[d, "p"] \\
\simp{1} \arrow[r, "\gamma"'] \arrow[ur, dashed, "\tilde{\gamma}"] & B
\end{tikzcd}
\]

The horn $\hornsimp{1}{1}$ consists of just the vertex $\{0\}$; we are asking to extend it to a full 1-simplex. The Kan fibration condition guarantees a lift $\tilde{\gamma}$ exists. The transported term is:
\[
\transport{\gamma}{t} \;=\; \tilde{\gamma}(1)
\]

\begin{proposition}[Transport as Horn-Filling]\label{prop:transport-horn-filling}
In the simplicial model, transport along a path $\gamma : \Id{A}{x}{y}$ corresponds to filling the horn $\hornsimp{1}{1}$ in the total space of the fibration.
\end{proposition}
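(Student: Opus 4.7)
The plan is to unpack both sides of the asserted correspondence in the simplicial model and exhibit the transported term explicitly as the endpoint of a chosen horn-filler. This is less an argument to be driven forward than a verification that two descriptions of the same construction agree, so I would organise it as: (i) translate the type-theoretic data into simplicial data, (ii) identify the resulting lifting problem with the horn $\hornsimp{1}{1} \hookrightarrow \simp{1}$, (iii) produce the filler using the Kan condition, (iv) check compatibility with the J-rule.

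First I would recall the dictionary. A type family $\Fam{B}{A}$ is modelled by a Kan fibration $p : E \to A$ whose fibre $p^{-1}(x)$ interprets $B(x)$; a term $t : B(x)$ becomes a 0-simplex $t : \{0\} \to E$ with $p(t) = x$; and a path $\gamma : \Id{A}{x}{y}$ becomes a 1-simplex $\gamma : \simp{1} \to A$ with $\gamma(0) = x$ and $\gamma(1) = y$. These two pieces of data assemble into precisely the commutative square preceding the proposition, with the horn $\hornsimp{1}{1}$ (in the paper's convention, the vertex $\{0\}$) sitting on the upper-left and $\simp{1}$ on the lower-left.

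Second, I would invoke Definition~\ref{def:kan-fibration} directly: since $p$ has the right lifting property against every horn inclusion, in particular against $\hornsimp{1}{1} \hookrightarrow \simp{1}$, a lift $\tilde{\gamma} : \simp{1} \to E$ exists with $p \circ \tilde{\gamma} = \gamma$ and $\tilde{\gamma}(0) = t$. I would then \emph{define} the simplicial transport to be the 0-simplex $\tilde{\gamma}(1) \in p^{-1}(y)$, which interprets a term of type $B(y)$. That is the entire content of the ``corresponds to'' in the proposition: transport is, on the nose, the endpoint of a chosen horn-filler.

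The main obstacle is the fourth step — reconciling this construction with the J-rule definition of $\transport{\gamma}{t}$, since a Kan filler is only determined up to the contractible space of fillers, whereas the type-theoretic transport is an honest term. I would handle this by path induction: on $\gamma = \mathsf{refl}_x$ the degenerate 1-simplex $s_0(t)$ is a canonical filler whose endpoint is $t$, matching the computation rule $\transport{\mathsf{refl}_x}{t} \defeq t$. The general case then follows from J, using the standard fact that the space of fillers of $\hornsimp{1}{1}$ in a Kan fibration is contractible — exactly the level of identification at which the model interprets propositional equality, so the ambiguity in filler choice collapses to equality in $B(y)$. This is the one point where the appeal to the ambient model (as opposed to the bare simplicial definitions) is essential.
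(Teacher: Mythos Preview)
Your steps (i)--(iii) are exactly the paper's argument, which appears not as a separate proof environment but as the exposition immediately preceding the proposition: the paper sets up the lifting square, observes that $\hornsimp{1}{1} = \{0\}$, invokes the Kan fibration condition to produce $\tilde{\gamma}$, and sets $\transport{\gamma}{t} = \tilde{\gamma}(1)$. The proposition is then stated as a summary of that construction, with no further proof.

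Your step (iv) goes beyond the paper. The paper treats the correspondence as essentially definitional --- in the simplicial model, the interpretation of J (and hence of transport) \emph{is} the lifting construction --- so it sees nothing to reconcile. Your extra care is reasonable, but the phrasing ``the general case then follows from J'' mildly conflates levels: J is a syntactic rule, and what you really need is that the model's interpretation of J is built from the lifting property, making the two agree by construction rather than by a separate induction. Your contractibility-of-fillers argument does establish the weaker but sufficient fact that any filler choice yields a propositionally equal endpoint. Either way the substance is sound; the paper simply omits this step.
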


This is the key insight: \textit{transport is horn-filling}. The Kan condition guarantees all horns fill, hence transport always succeeds. Remove the Kan condition, and transport may fail.

\section{The Transport Horn}

We now define the central construction of OHTT.

\begin{definition}[Transport Horn]\label{def:transport-horn}
Given a type family $\Fam{B}{A}$, a term $t : B(x)$, a path $p : \Id{A}{x}{y}$, and a context $\Gamma$, the \keyterm{transport horn} is an inhabitant of the type:
\begin{align*}
\Lambda_{\mathsf{tr}}(B, t, p) \;\defn\;\; 
&\coh{\Gamma}{(t : B(x))} \\
\times\; &\coh{\Gamma}{(p : \Id{A}{x}{y})} \\
\times\; &\gap{\Gamma}{(\transport{p}{t} : B(y))}
\end{align*}
\end{definition}

An inhabitant $(\gamma_t, \gamma_p, \omega)$ of the transport horn witnesses:
\begin{itemize}[leftmargin=2em]
\item $\gamma_t$: the term $t$ coherently inhabits $B(x)$
\item $\gamma_p$: the path $p$ coherently connects $x$ to $y$
\item $\omega$: the transported term $\transport{p}{t}$ is \textit{gapped} at $B(y)$
\end{itemize}

This is precisely the situation that HoTT cannot express. In HoTT, given $\gamma_t$ and $\gamma_p$, the term $\transport{p}{t} : B(y)$ is \textit{derivable}. There is no way to assert that transport fails; the rules guarantee it succeeds.

In OHTT, we can have coherent premises and a gapped conclusion. The horn records the structural tension: local coherence (the term is well-typed, the path is well-formed) with global rupture (their combination does not cohere).

\subsection{The Geometry of Transport Failure}

What does a transport horn look like simplicially?

In the ruptured simplicial model (Definition~\ref{def:ruptured-kan} below), we have a map $p : E \to B$ that is \textit{not} required to be a Kan fibration. Given:
\begin{itemize}[leftmargin=2em]
\item A 0-simplex $t \in E_x$ (a point in the fiber over $x$)
\item A 1-simplex $\gamma : x \to y$ in $B$ (a path in the base)
\end{itemize}

We form the lifting problem as before. But now three outcomes are possible:

\begin{enumerate}[leftmargin=2em]
\item \textbf{Coherent lift}: There exists $\tilde{\gamma} : \simp{1} \to E$ lifting $\gamma$ with $\tilde{\gamma}(0) = t$. Transport succeeds; we have $\transport{\gamma}{t} = \tilde{\gamma}(1)$.

\item \textbf{Gapped lift}: We witness that no such lift exists. The horn $(\{t\}, \gamma)$ is gap-witnessed; transport is blocked.

\item \textbf{Open}: Neither coherence nor gap is witnessed. The transport question is undetermined.
\end{enumerate}

The transport horn is case (2): the data $t$ and $\gamma$ are coherent, but their combination—the lift—is gapped.

\[
\begin{tikzcd}[row sep=2em, column sep=3em]
\{0\} \arrow[r, "t"] \arrow[d, hook] & E \arrow[d, "p"] \\
\simp{1} \arrow[r, "\gamma"'] \arrow[ur, dashed, "\nexists" description, crossing over] & B
\end{tikzcd}
\]

The dashed arrow with $\nexists$ indicates: not merely ``we haven't found a lift'' but ``we have witnessed that no lift coheres.''

\subsection{Relation to the General Horn}

The transport horn is an instance of the general horn schema from Chapter~I. Recall:
\[
\horn{J}{K}{L} \;\defn\; (\coh{\Gamma}{J \comp K}) \times (\coh{\Gamma}{K \comp L}) \times (\gap{\Gamma}{J \comp L})
\]

For transport, instantiate:
\begin{align*}
J &\;\defn\; (t : B(x)) \\
K &\;\defn\; (p : \Id{A}{x}{y}) \\
L &\;\defn\; (\transport{p}{t} : B(y)) \\
J \comp K &\;\defn\; \text{``$t$ and $p$ are composable data for transport''} \\
K \comp L &\;\defn\; \text{``$p$ determines the target type $B(y)$ for transport''} \\
J \comp L &\;\defn\; \text{``$t$ transports along $p$ to inhabit $B(y)$''}
\end{align*}

The transport horn asserts: the first two compositions cohere (we have the data), but the third is gapped (the transport itself fails).

\section{The Ruptured Simplicial Model}

To give semantics to OHTT, we generalize the simplicial model of HoTT.

\begin{definition}[Ruptured Simplicial Set]\label{def:ruptured-sset}
A \keyterm{ruptured simplicial set} is a tuple $(X, \Coh, \Gap, \Excl)$ where:
\begin{itemize}[leftmargin=2em]
\item $X$ is a graded set $X = (X_0, X_1, X_2, \ldots)$ with face and degeneracy maps
\item $\Coh_n \subseteq X_n$ is the set of \keyterm{coherently witnessed} $n$-simplices
\item $\Gap_n \subseteq \{\text{horns in } X\}$ is the set of \keyterm{gap-witnessed} horns
\item $\Excl$: if $h \in \Gap_n$, then no $\sigma \in \Coh_n$ fills $h$
\end{itemize}
\end{definition}

The key difference from ordinary simplicial sets: we track not just which simplices exist, but which are \textit{witnessed} as coherent, and which horns are \textit{witnessed} as unfillable.

\begin{definition}[Ruptured Kan Complex]\label{def:ruptured-kan}
A ruptured simplicial set $(X, \Coh, \Gap, \Excl)$ is a \keyterm{ruptured Kan complex} if for every horn $h$ in $X$, exactly one of the following holds:
\begin{enumerate}[leftmargin=2em]
\item There exists $\sigma \in \Coh$ filling $h$ (the horn is coherently filled)
\item $h \in \Gap$ (the horn is gap-witnessed)
\item Neither (the horn is open)
\end{enumerate}
\end{definition}

An ordinary Kan complex is a ruptured Kan complex where $\Gap = \emptyset$ and every horn has a coherent filler. HoTT lives in this special case. OHTT lives in the general case.

\begin{definition}[Ruptured Kan Fibration]\label{def:ruptured-fibration}
A map $p : E \to B$ of ruptured simplicial sets is a \keyterm{ruptured Kan fibration} if the lifting property holds \textit{up to the gap structure}: for every lifting problem, either a coherent lift exists, or the lifting horn is gap-witnessed, or the problem is open.
\end{definition}

\begin{proposition}[Transport in Ruptured Fibrations]\label{prop:ruptured-transport}
In a ruptured Kan fibration $p : E \to B$, given $t \in E_x$ and $\gamma : x \to y$ in $B$, the transport of $t$ along $\gamma$ is:
\begin{enumerate}[leftmargin=2em]
\item \textbf{Coherent} if the lifting horn has a coherent filler
\item \textbf{Gapped} if the lifting horn is gap-witnessed
\item \textbf{Open} if the lifting horn is neither
\end{enumerate}
\end{proposition}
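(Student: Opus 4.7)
The plan is to reduce the proposition to the trichotomy built into the definition of a ruptured Kan fibration by identifying transport with one specific lifting problem. First I would set up, exactly as in the HoTT case, the lifting square whose solution is the transported term: the inclusion $\{0\} \hookrightarrow \simp{1}$ is sent upstairs to $t \in E_x$ and downstairs the 1-simplex $\gamma : x \to y$ in $B$. By Proposition~\ref{prop:transport-horn-filling}, this is a horn-filling problem of shape $\hornsimp{1}{1}$ in the total space whose filler $\tilde{\gamma}$, when it exists, determines $\transport{\gamma}{t} \defeq \tilde{\gamma}(1)$.

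Next I would invoke Definition~\ref{def:ruptured-fibration}: the lifting problem admits exactly one of three outcomes --- a coherent lift, a gap-witnessed lifting horn, or an open horn. I would then translate each outcome back through the correspondence. A coherent filler produces a witness $\coh{\Gamma}{(\transport{\gamma}{t} : B(y))}$, since the endpoint of the lifted 1-simplex is, by construction, a coherently witnessed inhabitant of the fiber over $y$. A gap-witnessed horn, through the $\Gap$ component of the ruptured structure, yields a gap witness $\gap{\Gamma}{(\transport{\gamma}{t} : B(y))}$ --- this is the case that produces an inhabitant of the transport horn $\Lambda_{\mathsf{tr}}(B, t, \gamma)$ from Definition~\ref{def:transport-horn}. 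An open horn corresponds to the undetermined case. Mutual exclusivity follows from the $\Excl$ clause of Definition~\ref{def:ruptured-sset} together with the Exclusion Law at the judgmental level (Definition~\ref{def:exclusion}): a coherent filler and a gap witness for the same horn would contradict $\Excl$, which in turn descends to a contradiction between $\coh{\Gamma}{-}$ and $\gap{\Gamma}{-}$ for the transport judgment.

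The main obstacle, I expect, is the translation in the gapped case --- specifically, articulating \emph{how} a simplicial gap-witness on a horn in $E$ induces a judgmental gap-witness on the transport typing judgment, rather than merely recording that no filler exists. The definitions in the chapter are arranged to make this essentially tautological, but one should verify that the correspondence established by Proposition~\ref{prop:transport-horn-filling} is \emph{polarity-preserving}: it carries coherent fillers to coherence witnesses for the transport judgment and gap data to gap witnesses for the same judgment, rather than only matching existence with derivability. Once this polarity-preserving bridge is recorded (likely as a small lemma stating that the simplicial interpretation functor commutes with both $\vdash^{+}$ and $\vdash^{-}$), the proposition collapses to a direct reading of Definition~\ref{def:ruptured-fibration}, and the three cases of the conclusion are in bijection with the three cases of the fibration's trichotomy.
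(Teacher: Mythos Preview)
Your approach is correct and matches the paper's: identify transport with the $\hornsimp{1}{1}$ lifting problem (via Proposition~\ref{prop:transport-horn-filling}) and read the trichotomy directly off Definition~\ref{def:ruptured-fibration}. The paper gives no explicit proof at this point, treating the proposition as an immediate specialization of the definition; the later formal version, Proposition~\ref{prop:ruptured-path-lifting}, confirms this with the one-line observation that it is ``the $n=1$, $k=1$ case of the ruptured fibration condition.''

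Your flagged obstacle about polarity preservation is over-engineering for this proposition. In Chapter~\ref{ch:transport} the ruptured model is only a sketch, and the proposition is functioning as a \emph{definition} of what ``transport is coherent / gapped / open'' means semantically, not as a theorem that requires an interpretation functor commuting with both turnstiles. The bridge you worry about---that a simplicial gap witness on the lifting horn induces a judgmental gap witness $\gap{\Gamma}{(\transport{\gamma}{t} : B(y))}$---is precisely what the later Theorem~\ref{thm:interpretation} and Theorem~\ref{thm:semantic-transport-horn} articulate. For the present proposition, the paper simply lets the three simplicial outcomes \emph{be} the three transport states; no auxiliary lemma is needed, and the mutual-exclusivity clause you invoke is already baked into the phrase ``exactly one of the following holds'' in the definition.
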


This proposition is the semantic content of the transport horn. OHTT allows case (2); HoTT forbids it.

\section{Three Classes of Examples}

We now give concrete examples of transport horns in three domains: topology, semantics, and logic. These are previews of the detailed case studies in Part~III.

\subsection{Topological Obstruction: Monodromy}

Let $p : E \to B$ be a covering space (a particularly nice kind of fibration). Choose a base point $x \in B$ and consider a loop $\gamma : x \to x$ in $B$.

If $t \in E_x$ is a point in the fiber, we can try to lift $\gamma$ to a path in $E$ starting at $t$. For covering spaces, the lift exists and is unique. But the lifted path may not return to $t$—it may end at a different point $t' \in E_x$.

The \keyterm{monodromy} of the covering is the permutation of the fiber induced by lifting loops.

\begin{example}[Möbius Strip]\label{ex:mobius}
Let $B = S^1$ (the circle) and let $E$ be the boundary of a Möbius strip—a double cover of $S^1$. The fiber over any point has two elements: call them $+$ and $-$.

The intuition is immediate: walk once around the base circle, starting at a point $b$. Downstairs, you return to $b$—the loop closes. But upstairs, if you start at $+$ and walk, you arrive at $-$. \textit{You come back changed.}

Mathematically: the lift of the loop exists and is unique (it's the path starting at $+$). But it is not a loop based at $+$—it ends at $-$. The question ``does this loop lift to a loop at my starting point?'' has a definite negative answer, and the monodromy (the sheet-swap) is the certificate.
\end{example}

This is a transport horn:
\begin{itemize}[leftmargin=2em]
\item The starting point $+ \in E_b$ is coherent (it exists)
\item The loop $\gamma : b \to b$ is coherent (it exists)
\item The based lift is gapped (it does not exist as a loop at $+$)
\end{itemize}

The gap witness is the monodromy: not silence about whether a lift exists, but positive data recording \textit{how} the lift fails to close. The deck transformation that swaps $+$ and $-$ is the obstruction made structural.

\begin{aside}
The Möbius strip is the simplest non-trivial example of this phenomenon. The twist is not just a curiosity—it is a gap witness. OHTT lets us say: walking around the base is coherent, your starting point is coherent, but returning to where you started is gapped. The formalism holds what the image shows.
\end{aside}

\subsection{Semantic Obstruction: Polysemy}

Consider natural language as a type-theoretic structure. Let:
\begin{itemize}[leftmargin=2em]
\item $A = $ the type of word tokens (occurrences of words in context)
\item $B(w) = $ the type of meanings (senses) of word $w$
\end{itemize}

Two tokens of the ``same word'' are related by a path in $A$. The word ``bank'' in a financial context and ``bank'' in a river context are connected by lexical identity—they are the same lexeme.

Let $x = $ ``bank'' (financial) and $y = $ ``bank'' (river). Let $p : \Id{A}{x}{y}$ witness their lexical identity.

Let $t : B(x)$ be the meaning ``financial institution.''

What is $\transport{p}{t}$? It would be: the meaning of ``bank'' (river) obtained by transporting the financial meaning along the lexical identity path.

But this doesn't cohere. The river sense of ``bank'' is \textit{not} ``financial institution.'' The lexical identity exists, but it does not transport meanings.

\begin{example}[Polysemy Horn]\label{ex:polysemy}
\[
\Lambda_{\mathsf{tr}}(B, \text{``financial institution''}, p_{\text{lexical}})
\]
is inhabited by:
\begin{itemize}[leftmargin=2em]
\item $\gamma_t$: witnessing that ``financial institution'' is a coherent meaning of ``bank'' (financial)
\item $\gamma_p$: witnessing that ``bank'' (financial) and ``bank'' (river) are lexically identical
\item $\omega$: witnessing that ``financial institution'' is \textit{not} a coherent meaning of ``bank'' (river)
\end{itemize}
\end{example}

In HoTT, if we had a path $p : x = y$, transport would force meanings to transfer. We would be unable to model polysemy—the fact that the same word has genuinely different meanings that do not reduce to each other.

In OHTT, the transport horn records exactly this: lexical identity without semantic identity. The path exists. The meaning exists. But the meaning does not travel along the path.

\begin{aside}
A natural question: how would one actually \textit{measure} whether meanings cohere? Distributional semantics provides an answer. In word embedding models—and in the attention mechanisms of large language models—words live in high-dimensional vector spaces where semantic similarity becomes geometric proximity. ``Bank'' (financial) and ``bank'' (river) may share a lexical token, but their contextual embeddings land in distant regions of meaning-space. Cosine distance, contextual clustering, and attention patterns give us \textit{computable} measures of semantic (dis)continuity.

The transport horn formalizes this divergence: lexical proximity does not imply semantic coherence. In an empirical instantiation of OHTT, the gap witness could be precisely this measurable distance—the geometric evidence that meaning does not travel along the lexical identity path.
\end{aside}

\subsection{Logical Obstruction: Underivability}

Consider a dependent type theory as a category with families. Let:
\begin{itemize}[leftmargin=2em]
\item $A = $ the type of contexts
\item $B(\Gamma) = $ the type of judgments derivable in context $\Gamma$
\end{itemize}

A context morphism $\sigma : \Delta \to \Gamma$ (a substitution) induces a path $p_\sigma : \Id{A}{\Gamma}{\Delta}$ in an appropriate sense.

Given a judgment $J$ derivable in $\Gamma$, we can ask: is $J[\sigma]$ (the substituted judgment) derivable in $\Delta$?

In standard type theory, substitution preserves derivability. If $\jdg{\Gamma}{t : T}$, then $\jdg{\Delta}{t[\sigma] : T[\sigma]}$. This is the \textit{substitution lemma}.

But consider a richer setting where derivability is resource-sensitive, or where contexts carry additional structure. Then substitution may fail to preserve derivability.

\begin{example}[Resource Sensitivity]\label{ex:resource}
Let $\Gamma = x : \mathsf{File}, y : \mathsf{ReadPerm}(x)$ and let $J = \mathsf{read}(x) : \mathsf{Data}$.

Let $\sigma : \Delta \to \Gamma$ be a substitution where $\Delta$ provides the file but not the read permission.

Then:
\begin{itemize}[leftmargin=2em]
\item $\coh{}{\jdg{\Gamma}{J}}$: the read is derivable with permission
\item $\coh{}{(\sigma : \Delta \to \Gamma)}$: the substitution is well-formed
\item $\gap{}{\jdg{\Delta}{J[\sigma]}}$: the read is not derivable without permission
\end{itemize}

This is a transport horn: the judgment is derivable, the substitution exists, but the substituted judgment is gapped.
\end{example}

The gap witness here is not merely ``no derivation found'' but a \textit{certificate of underivability}—a proof that the resources required for $J$ are not available in $\Delta$.

\begin{aside}
This connects OHTT to certified compilation, proof-carrying code, and information flow analysis. Systems that track permissions, capabilities, or security levels routinely encounter situations where operations are well-typed in one context but blocked in another. The transport horn is the natural formalization: substitution exists, but derivability does not transport.
\end{aside}

\section{The Functoriality Question}

In HoTT, transport is functorial. Given paths $p : \Id{A}{x}{y}$ and $q : \Id{A}{y}{z}$, we have:
\[
\transport{q \cdot p}{t} = \transport{q}{\transport{p}{t}}
\]

Transport along a composite path equals iterated transport.

In OHTT, this equation becomes a \textit{question}. We may have:
\begin{itemize}[leftmargin=2em]
\item $\transport{p}{t}$ coherent
\item $\transport{q}{\transport{p}{t}}$ coherent
\item But $\transport{q \cdot p}{t}$ gapped—or open
\end{itemize}

The composite path may fail to transport even when the individual segments succeed.

\begin{example}[Cumulative Obstruction]\label{ex:cumulative}
In the polysemy example, consider three senses of ``crane'': the bird, the machine, and the poetic verb (to crane one's neck).

\begin{itemize}[leftmargin=2em]
\item Path $p$: bird $\to$ machine (both are tall, thin, with a hook-like part)
\item Path $q$: machine $\to$ verb (the machine's motion resembles necking craning)
\end{itemize}

Some semantic features transport along $p$ (tallness, thinness). Some transport along $q$ (extension, reaching). But transporting directly from bird to verb via $q \cdot p$ may lose all coherent features—the composite obstruction is worse than the sum of parts.
\end{example}

This failure of functoriality is not a bug but a feature. It captures the phenomenon of \textit{semantic drift}: meanings can travel stepwise through intermediate senses but lose coherence over longer paths.

\begin{definition}[Functoriality Horn]\label{def:functoriality-horn}
Given paths $p : \Id{A}{x}{y}$ and $q : \Id{A}{y}{z}$, and a term $t : B(x)$, the \keyterm{functoriality horn} is:
\begin{align*}
\Lambda_{\mathsf{func}}(t, p, q) \;\defn\;\; 
&\coh{\Gamma}{(\transport{p}{t} : B(y))} \\
\times\; &\coh{\Gamma}{(\transport{q}{(\transport{p}{t})} : B(z))} \\
\times\; &\gap{\Gamma}{(\transport{q \cdot p}{t} : B(z))}
\end{align*}
\end{definition}

The functoriality horn witnesses: stepwise transport succeeds, but direct transport along the composite fails. This is a higher-order obstruction invisible to HoTT.

\section{What This Lets Us Say}

Let us be explicit about the expressive gain.

\paragraph{In HoTT, we can say:}
\begin{itemize}[leftmargin=2em]
\item ``$t$ has type $B(x)$''
\item ``$p$ is a path from $x$ to $y$''
\item ``$\transport{p}{t}$ has type $B(y)$'' (automatically, by the rules)
\item ``$\transport{p}{t} = s$'' for some specific $s : B(y)$
\end{itemize}

\paragraph{In HoTT, we cannot say:}
\begin{itemize}[leftmargin=2em]
\item ``Transport of $t$ along $p$ is blocked''
\item ``There is an obstruction to transport'' (as positive data)
\item ``The path exists but does not carry this term''
\end{itemize}

\paragraph{In OHTT, we can additionally say:}
\begin{itemize}[leftmargin=2em]
\item $\gap{\Gamma}{(\transport{p}{t} : B(y))}$ — ``Transport is witnessed as blocked''
\item $\Lambda_{\mathsf{tr}}(B, t, p)$ inhabited — ``We have a transport horn''
\item The obstruction $\omega$ carries structure—\textit{why} transport fails
\end{itemize}

This is the fundamental extension. HoTT can express what transports; OHTT can also express what does not—and \textit{why}.

\section{Summary}

The transport horn is the keystone of OHTT:

\begin{itemize}[leftmargin=2em]
\item \textbf{Definition}: $\Lambda_{\mathsf{tr}}(B, t, p) = \coh{}{(t : B(x))} \times \coh{}{(p : x = y)} \times \gap{}{(\transport{p}{t} : B(y))}$

\item \textbf{Geometry}: Transport is horn-filling. The transport horn is a horn that is witnessed unfillable.

\item \textbf{Semantics}: Ruptured Kan fibrations allow lifting problems to be coherent, gapped, or open.

\item \textbf{Examples}: Monodromy in covering spaces (topological), polysemy in language (semantic), resource-sensitive derivability (logical).

\item \textbf{Higher structure}: Functoriality may fail; composite paths may be gapped even when segments transport.
\end{itemize}

The transport horn is not an exotic edge case. It is the generic situation whenever paths (identities, equivalences, substitutions) exist between points whose internal structure does not respect those paths. HoTT assumes this never happens. OHTT allows it, and gives it structure.

\begin{aside}
What this lets us say that HoTT cannot: ``The word is the same, but the meaning does not follow. The path is there, but the term cannot walk it. The identity exists—and yet, something refuses to be identified.'' The transport horn is the formal shape of that refusal.
\end{aside}

\part{The Geometric Model}
\label{part:model}


\chapter{Ruptured Simplicial Sets}
\label{ch:ruptured-sset}

\begin{flushright}
\textit{A space is not just what is there.\\
It is also what is witnessed absent.}\\[0.5ex]
{\small — The ruptured view}
\end{flushright}

\bigskip

In Part~I we gave OHTT as a calculus: judgment forms, the exclusion law, horns. In Chapter~\ref{ch:transport} we sketched how this calculus finds semantics in a modification of the simplicial model. Now we develop that model properly.

This chapter constructs ruptured simplicial sets—the basic objects of our geometry. We begin with a rapid review of ordinary simplicial sets and Kan complexes, then introduce the ruptured variant, and finally characterize the relationship between ruptured and ordinary structures.

\section{Simplicial Sets: A Review}

We assume familiarity with basic category theory but recall the essential definitions.

\begin{definition}[The Simplex Category]\label{def:simplex-cat}
The \keyterm{simplex category} $\Delta$ has:
\begin{itemize}[leftmargin=2em]
\item Objects: finite non-empty ordinals $[n] = \{0, 1, \ldots, n\}$ for $n \geq 0$
\item Morphisms: order-preserving maps $[m] \to [n]$
\end{itemize}
\end{definition}

The morphisms of $\Delta$ are generated by two families:

\begin{itemize}[leftmargin=2em]
\item \textbf{Face maps} $\delta_i : [n-1] \to [n]$ for $0 \leq i \leq n$: the injection that skips $i$
\item \textbf{Degeneracy maps} $\sigma_i : [n+1] \to [n]$ for $0 \leq i \leq n$: the surjection that repeats $i$
\end{itemize}

These satisfy the \textit{simplicial identities}, which we omit but which ensure coherent composition.

\begin{definition}[Simplicial Set]\label{def:sset}
A \keyterm{simplicial set} is a functor $X : \Delta^{\mathrm{op}} \to \mathbf{Set}$. Explicitly, $X$ consists of:
\begin{itemize}[leftmargin=2em]
\item Sets $X_n$ for each $n \geq 0$ (the \keyterm{$n$-simplices})
\item Face maps $d_i : X_n \to X_{n-1}$ for $0 \leq i \leq n$
\item Degeneracy maps $s_i : X_n \to X_{n+1}$ for $0 \leq i \leq n$
\end{itemize}
satisfying the simplicial identities.
\end{definition}

We write $\mathbf{sSet}$ for the category of simplicial sets with natural transformations as morphisms.

\begin{example}[Standard Simplices]\label{ex:standard-simplex}
The \keyterm{standard $n$-simplex} $\simp{n}$ is the representable functor $\Delta(-, [n])$. Concretely:
\begin{itemize}[leftmargin=2em]
\item $\simp{0}$ is a single point
\item $\simp{1}$ is an edge with two vertices
\item $\simp{2}$ is a filled triangle with three vertices, three edges, and one 2-cell
\end{itemize}
\end{example}

\subsection{Horns}

\begin{definition}[Horn]\label{def:horn-simplicial}
The \keyterm{$(n,k)$-horn} $\hornsimp{n}{k}$ for $0 \leq k \leq n$ is the simplicial subset of $\simp{n}$ consisting of all faces \textit{except} the $k$-th face. It is the ``boundary of $\simp{n}$ with one face missing.''
\end{definition}

Concretely:
\begin{itemize}[leftmargin=2em]
\item $\hornsimp{1}{0}$ is the single vertex $\{1\}$ (the edge $\simp{1}$ minus the face opposite vertex 0)
\item $\hornsimp{1}{1}$ is the single vertex $\{0\}$
\item $\hornsimp{2}{0}$ is two edges sharing vertex 0: the edges $[0,1]$ and $[0,2]$
\item $\hornsimp{2}{1}$ is two edges sharing vertex 1: the edges $[0,1]$ and $[1,2]$
\item $\hornsimp{2}{2}$ is two edges sharing vertex 2: the edges $[0,2]$ and $[1,2]$
\end{itemize}

A horn is a partial simplex—all the data needed to potentially form a full simplex, except one piece.

\begin{definition}[Horn Inclusion]\label{def:horn-inclusion}
The \keyterm{horn inclusion} is the canonical map $\iota : \hornsimp{n}{k} \hookrightarrow \simp{n}$.
\end{definition}

\subsection{Kan Complexes}

\begin{definition}[Kan Complex]\label{def:kan-complex}
A simplicial set $X$ is a \keyterm{Kan complex} if every horn in $X$ can be filled. Precisely: for every $n \geq 1$, every $0 \leq k \leq n$, and every map $h : \hornsimp{n}{k} \to X$, there exists an extension $\bar{h} : \simp{n} \to X$ such that $\bar{h} \circ \iota = h$.
\[
\begin{tikzcd}[row sep=2em, column sep=2.5em]
\hornsimp{n}{k} \arrow[r, "h"] \arrow[d, hook, "\iota"'] & X \\
\simp{n} \arrow[ur, dashed, "\bar{h}"'] &
\end{tikzcd}
\]
\end{definition}

The Kan condition says: if you have all faces of a potential simplex except one, you can always complete it. This is the defining property that makes Kan complexes behave like spaces.

\begin{theorem}[Kan Complexes as $\infty$-Groupoids]\label{thm:kan-groupoid}
Kan complexes are precisely the $\infty$-groupoids: higher categorical structures where all morphisms at all levels are invertible (up to coherent higher morphisms).
\end{theorem}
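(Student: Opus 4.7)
The plan is to prove both inclusions of the equivalence: every Kan complex is an $\infty$-groupoid, and conversely every $\infty$-groupoid is modeled by a Kan complex. Since the statement depends on a choice of meaning for ``$\infty$-groupoid,'' I will take the working definition to be: a higher categorical structure in which each $k$-morphism is invertible up to a $(k+1)$-morphism, with the invertibility data cohering at all levels.

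For the forward direction, I would start with a 1-simplex $f : x \to y$ in a Kan complex $X$ and construct a weak inverse. Form the horn $h : \hornsimp{2}{0} \to X$ whose edge $[0,1]$ is $f$ and whose edge $[0,2]$ is the degenerate loop $s_0(x)$. The Kan condition supplies a 2-simplex $\bar h : \simp{2} \to X$ whose remaining face $[1,2]$ is an edge $g : y \to x$; the 2-simplex itself exhibits the homotopy $g \circ f \simeq \mathrm{id}_x$. A symmetric construction using $\hornsimp{2}{2}$ yields $f \circ g \simeq \mathrm{id}_y$. To promote this to higher coherence, I would iterate at each dimension: given an $n$-simplex witnessing a higher composition, the same horn-filling trick in dimension $n+1$ produces an inverse $(n+1)$-simplex, with the simplicial identities automatically providing the required naturality.

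For the reverse direction, the work depends on the chosen model of $\infty$-groupoid. In the modern Joyal--Lurie framework where $\infty$-groupoids are \emph{defined} as Kan complexes, the direction is tautological. For other presentations (globular, operadic, topological), I would cite the established Quillen equivalences of Bergner, Joyal, and Lurie that identify every reasonable model with $\mathbf{sSet}_{\text{Kan}}$, rather than attempt to reprove these comparisons.

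The principal obstacle is the coherence of the invertibility data. The inverse $g$ produced by horn filling is not canonical; different choices of filler yield different inverses, and one must show that any two such inverses are connected by a higher invertible homotopy, in a way compatible across dimensions. This is the substantive content of the theorem and distinguishes it from a definition-chase. I would handle it either by a careful induction on dimension, exploiting that each higher ambiguity is itself resolved by a horn in one dimension up, or more efficiently by showing that the homotopy category formed from $\pi_0$ of the mapping spaces is a groupoid and then lifting this truncation statement inductively to every level.
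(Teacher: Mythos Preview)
The paper does not actually prove this theorem. It appears in the section titled ``Simplicial Sets: A Review'' and is stated without argument; the only commentary is the sentence immediately following it: ``This is the homotopy hypothesis (Grothendieck): spaces = $\infty$-groupoids = Kan complexes.'' The theorem functions as imported background, not as something the monograph establishes.

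Your proposal therefore goes well beyond what the paper offers, and there is nothing in the text to compare your approach against. As a sketch it is sound: the horn-filling construction of a left inverse via $\hornsimp{2}{0}$ is the standard move, your identification of the coherence of non-canonical fillers as the substantive content is correct, and deferring the converse to the known Quillen equivalences (Bergner, Joyal, Lurie) is the honest route given that ``$\infty$-groupoid'' is model-dependent. If anything, your write-up is more careful than the paper's, which simply names the result and moves on.
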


This is the homotopy hypothesis (Grothendieck): spaces = $\infty$-groupoids = Kan complexes. In HoTT, types are interpreted as Kan complexes, which is why all paths are invertible and all higher structure is groupoidal.

\section{Ruptured Simplicial Sets}

We now introduce the central construction.

\begin{definition}[Ruptured Simplicial Set]\label{def:ruptured-sset-formal}
A \keyterm{ruptured simplicial set} is a tuple $\mathcal{X} = (X, \Coh, \Gap)$ where:
\begin{enumerate}[leftmargin=2em]
\item $X$ is a simplicial set (the \keyterm{underlying simplicial set})
\item $\Coh = (\Coh_n)_{n \geq 0}$ where $\Coh_n \subseteq X_n$ is the set of \keyterm{coherently witnessed $n$-simplices}
\item $\Gap = (\Gap_{n,k})_{n \geq 1, 0 \leq k \leq n}$ where $\Gap_{n,k} \subseteq \mathrm{Hom}(\hornsimp{n}{k}, X)$ is the set of \keyterm{gap-witnessed $(n,k)$-horns}
\end{enumerate}
subject to the \keyterm{exclusion condition}: if $h \in \Gap_{n,k}$, then there is no $\sigma \in \Coh_n$ such that $\sigma$ fills $h$ (i.e., no $\sigma$ with $d_i(\sigma) = h_i$ for all $i \neq k$).
\end{definition}

\begin{remark}
The exclusion condition is the semantic counterpart of the Exclusion Law from Chapter~I. Coherent filling and gap witnessing cannot coexist for the same horn.
\end{remark}

\begin{definition}[Open Horns]\label{def:open-horn}
A horn $h : \hornsimp{n}{k} \to X$ in a ruptured simplicial set $\mathcal{X}$ is \keyterm{open} if:
\begin{itemize}[leftmargin=2em]
\item There is no $\sigma \in \Coh_n$ filling $h$, and
\item $h \notin \Gap_{n,k}$
\end{itemize}
\end{definition}

Thus every horn in a ruptured simplicial set is in exactly one of three states:
\begin{enumerate}[leftmargin=2em]
\item \textbf{Coherently filled}: admits a filler in $\Coh$
\item \textbf{Gap-witnessed}: belongs to $\Gap$
\item \textbf{Open}: neither
\end{enumerate}

This trichotomy is the geometric realization of the OHTT judgment trichotomy: $\vdash^+$, $\vdash^-$, or unwitnessed.

\subsection{Morphisms}

\begin{definition}[Morphism of Ruptured Simplicial Sets]\label{def:ruptured-morphism}
A \keyterm{morphism} $f : \mathcal{X} \to \mathcal{Y}$ of ruptured simplicial sets is a simplicial map $f : X \to Y$ such that:
\begin{enumerate}[leftmargin=2em]
\item $f$ preserves coherence: if $\sigma \in \Coh^X_n$, then $f(\sigma) \in \Coh^Y_n$
\item $f$ preserves gaps: if $h \in \Gap^X_{n,k}$, then $f \circ h \in \Gap^Y_{n,k}$
\end{enumerate}
\end{definition}

\begin{proposition}\label{prop:ruptured-category}
Ruptured simplicial sets and their morphisms form a category $\mathbf{rsSet}$.
\end{proposition}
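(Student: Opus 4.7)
The plan is to build the category $\mathbf{rsSet}$ on top of $\mathbf{sSet}$, inheriting associativity and unit laws from simplicial sets and only verifying that the two preservation conditions from Definition~\ref{def:ruptured-morphism} are closed under composition and satisfied by identities. So the proof reduces to three checks: identity morphisms are ruptured morphisms, composition of ruptured morphisms is a ruptured morphism, and the simplicial-set composition law descends.

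First I would define the identity $\mathrm{id}_{\mathcal{X}} : \mathcal{X} \to \mathcal{X}$ to be the underlying identity simplicial map $\mathrm{id}_X$. Preservation of $\Coh$ is immediate because $\mathrm{id}_X(\sigma) = \sigma \in \Coh^X_n$ for every $\sigma \in \Coh^X_n$, and preservation of $\Gap$ is immediate because $\mathrm{id}_X \circ h = h$, which is in $\Gap^X_{n,k}$ by assumption. The exclusion condition on $\mathcal{X}$ needs no re-checking: it is a condition on $\mathcal{X}$, not on morphisms.

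Next I would define composition. Given $f : \mathcal{X} \to \mathcal{Y}$ and $g : \mathcal{Y} \to \mathcal{Z}$, set $g \circ f$ to be the simplicial-set composite $g \circ f : X \to Z$. For a coherent $\sigma \in \Coh^X_n$, condition (1) for $f$ gives $f(\sigma) \in \Coh^Y_n$, and then condition (1) for $g$ gives $g(f(\sigma)) \in \Coh^Z_n$, so coherence is preserved. For a gap-witnessed horn $h \in \Gap^X_{n,k}$, condition (2) for $f$ gives $f \circ h \in \Gap^Y_{n,k}$, and condition (2) for $g$ applied to this horn in $\mathcal{Y}$ yields $g \circ (f \circ h) = (g \circ f) \circ h \in \Gap^Z_{n,k}$. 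So $g \circ f$ is a ruptured morphism.

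Finally, associativity and the unit laws follow directly from the corresponding laws in $\mathbf{sSet}$, since the ruptured morphisms are a subclass of simplicial maps and composition is defined by the underlying composition. The only ``obstacle'' is purely notational: one must be careful that the gap-preservation clause is phrased in terms of precomposition with horns $h : \hornsimp{n}{k} \to X$, so that associativity of precomposition $(g \circ f) \circ h = g \circ (f \circ h)$ is exactly the standard associativity in $\mathbf{sSet}$. No new coherence data is needed, and the exclusion condition plays no role at the level of morphisms—it is a condition on objects that is automatically respected once both $\Coh$ and $\Gap$ are preserved along the map.
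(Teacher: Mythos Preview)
Your proof is correct and follows exactly the approach the paper takes: inherit identity, composition, associativity, and units from $\mathbf{sSet}$, and check that the preservation conditions on $\Coh$ and $\Gap$ are closed under identities and composition. The paper's own proof is a single sentence (``Identity and composition are inherited from $\mathbf{sSet}$; preservation of $\Coh$ and $\Gap$ is immediate''), so you have simply spelled out what it leaves implicit.
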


\begin{proof}
Identity and composition are inherited from $\mathbf{sSet}$; preservation of $\Coh$ and $\Gap$ is immediate.
\end{proof}

\subsection{The Forgetful Functor}

\begin{definition}[Underlying Simplicial Set]\label{def:underlying}
The \keyterm{forgetful functor} $U : \mathbf{rsSet} \to \mathbf{sSet}$ sends $(X, \Coh, \Gap)$ to $X$.
\end{definition}

This functor forgets the witnessing structure. A ruptured simplicial set carries strictly more information than its underlying simplicial set.

\begin{proposition}[Fibers of the Forgetful Functor]\label{prop:fiber}
For a fixed simplicial set $X$, the fiber $U^{-1}(X)$ is the set of all ways to assign coherence and gap witnessing to simplices and horns of $X$, subject to exclusion.
\end{proposition}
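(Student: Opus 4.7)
The plan is to prove this by directly unpacking the definitions of ruptured simplicial set (Definition~\ref{def:ruptured-sset-formal}) and of $U$ (Definition~\ref{def:underlying}), establishing a bijection between elements of $U^{-1}(X)$ and pairs $(\Coh,\Gap)$ on $X$ satisfying exclusion. The argument is essentially tautological, but there are two subtleties worth making explicit, and I want the write-up to pin them down rather than paper over them.

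First, I would establish the forward direction: any element of $U^{-1}(X)$ is by definition a ruptured simplicial set $\mathcal{X} = (X', \Coh, \Gap)$ with $U(\mathcal{X}) = X'$, so the condition $U(\mathcal{X}) = X$ forces $X' = X$ on the nose. Hence the only remaining data is the choice of $\Coh = (\Coh_n)_{n \geq 0}$ and $\Gap = (\Gap_{n,k})$, and by Definition~\ref{def:ruptured-sset-formal} the only constraint these must satisfy is the exclusion condition. Conversely, given any pair $(\Coh, \Gap)$ with $\Coh_n \subseteq X_n$ and $\Gap_{n,k} \subseteq \mathrm{Hom}(\hornsimp{n}{k}, X)$ satisfying exclusion, the triple $(X, \Coh, \Gap)$ is visibly a ruptured simplicial set whose image under $U$ is $X$, giving the reverse bijection.

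The second subtlety concerns what ``fiber'' means, since $U$ is a functor between categories and $U^{-1}(X)$ could a priori carry non-identity morphisms. I would clarify that a morphism in $U^{-1}(X)$ is a morphism $f : (X,\Coh,\Gap) \to (X,\Coh',\Gap')$ in $\mathbf{rsSet}$ with $U(f) = \mathrm{id}_X$, i.e., $f = \mathrm{id}_X$ as a simplicial map. Applying the coherence-preservation clause of Definition~\ref{def:ruptured-morphism} to $\mathrm{id}_X$ then forces $\Coh_n \subseteq \Coh'_n$ and $\Gap_{n,k} \subseteq \Gap'_{n,k}$ for all $n,k$. So morphisms in the fiber witness containment of witnessing data rather than equality; whether one regards the fiber as a poset (under this containment) or as a set (of discrete witnessing data, with containment collapsed to identity) depends on taste. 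I would state the proposition in the set-theoretic reading, which is what is claimed, and note parenthetically that the categorical fiber is the associated poset of witnessing structures on $X$ ordered by inclusion.

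There is no real obstacle here beyond notational care; the content of the proposition is that $\mathbf{rsSet}$ is definitionally a ``structure over $\mathbf{sSet}$'' in the sense of stacks of sets over simplicial sets, and the proof is essentially the observation that Definition~\ref{def:ruptured-sset-formal} already factored the data this way.
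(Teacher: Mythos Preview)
Your proposal is correct. The paper states this proposition without proof, treating it as immediate from Definitions~\ref{def:ruptured-sset-formal} and~\ref{def:underlying}; your unpacking of the two directions of the bijection is exactly the content that would constitute such a proof, and there is nothing to fault in it.

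Your second paragraph goes beyond the paper. The paper speaks of $U^{-1}(X)$ as a \emph{set} and never examines morphisms in the fiber; your observation that the categorical fiber over $X$ is the poset of witnessing structures $(\Coh,\Gap)$ ordered by componentwise inclusion is correct and genuinely adds information. It is worth keeping as a remark, since it clarifies that $U$ is not just surjective on objects but is a kind of fibration of categories whose fibers are posets---a structural fact the paper leaves implicit. The set-theoretic reading you adopt for the proposition proper matches the paper's phrasing exactly.
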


This fiber is large: a single simplicial set admits many ruptured structures. The choice of ruptured structure encodes ``what we know'' about the space—which simplices we have witnessed as coherent, which horns we have witnessed as unfillable.

\section{Ruptured Kan Complexes}

\begin{definition}[Ruptured Kan Complex]\label{def:ruptured-kan-formal}
A ruptured simplicial set $\mathcal{X} = (X, \Coh, \Gap)$ is a \keyterm{ruptured Kan complex} if every horn is either coherently filled, gap-witnessed, or open. (This is automatic from the definitions—the content is that we impose no further Kan-like filling requirement.)
\end{definition}

Equivalently: a ruptured Kan complex is a ruptured simplicial set where we have made no commitment that non-gapped horns must have coherent fillers.

\begin{proposition}[Kan Complexes as Special Ruptured Kan Complexes]\label{prop:kan-as-ruptured}
An ordinary Kan complex $X$ can be viewed as a ruptured Kan complex $(X, X_\bullet, \emptyset)$ where:
\begin{itemize}[leftmargin=2em]
\item $\Coh_n = X_n$ (all simplices are coherently witnessed)
\item $\Gap_{n,k} = \emptyset$ (no horns are gap-witnessed)
\end{itemize}
In this case, every horn is coherently filled (none are gapped or open).
\end{proposition}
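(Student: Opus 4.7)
The plan is to unpack the three requirements on $(X, X_\bullet, \emptyset)$ and check each against the Kan hypothesis on $X$. There are essentially three things to verify: (a) that $(X, X_\bullet, \emptyset)$ is a well-formed ruptured simplicial set, i.e.\ that the exclusion condition holds; (b) that it is a ruptured Kan complex in the sense of Definition~\ref{def:ruptured-kan-formal}; and (c) the positive content, namely that every horn in the resulting structure is coherently filled, so that none is gapped or open.

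For (a), I would observe that the exclusion condition has an empty hypothesis: it requires that if $h \in \Gap_{n,k}$ then no $\sigma \in \Coh_n$ fills $h$, but $\Gap_{n,k} = \emptyset$ by construction, so the condition is vacuous. The data $(X, X_\bullet, \emptyset)$ thus trivially assembles into a ruptured simplicial set as defined in Definition~\ref{def:ruptured-sset-formal}. For (b), Definition~\ref{def:ruptured-kan-formal} imposes no filling requirement beyond the trichotomy already guaranteed by (a) together with Definition~\ref{def:open-horn}, so any ruptured simplicial set is a ruptured Kan complex; in particular $(X, X_\bullet, \emptyset)$ is.

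The substantive step is (c). Let $h : \hornsimp{n}{k} \to X$ be any horn in the underlying simplicial set. Because $X$ is an ordinary Kan complex (Definition~\ref{def:kan-complex}), $h$ extends to some $\bar{h} : \simp{n} \to X$, and by the Yoneda identification this extension corresponds to an $n$-simplex $\sigma \in X_n$ whose faces $d_i(\sigma)$ for $i \neq k$ agree with the faces of $h$. Since $\Coh_n = X_n$, we have $\sigma \in \Coh_n$, so $\sigma$ is a coherent filler of $h$. Thus every horn falls into the first branch of the trichotomy of Definition~\ref{def:open-horn}: gapped horns are excluded because $\Gap_{n,k} = \emptyset$, and open horns are excluded because each horn has a coherent filler.

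I do not expect a genuine obstacle here; the content of the proposition is almost definitional once the trichotomy is in place. The only point requiring a moment's care is the translation between the simplicial-map formulation of the Kan condition (``$\bar{h}$ extends $h$ along $\iota$'') and the element-level description of $\Coh_n$ in terms of $n$-simplices with prescribed faces, but this is a standard application of the Yoneda lemma for $\simp{n}$ and $\hornsimp{n}{k}$ and needs no new argument.
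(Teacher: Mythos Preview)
Your proposal is correct. The paper does not supply an explicit proof of this proposition at all; it states the result and immediately moves on to the remark that this yields an embedding $\mathbf{Kan} \hookrightarrow \mathbf{rsKan}$. Your three-part verification---vacuity of the exclusion condition since $\Gap = \emptyset$, automatic satisfaction of Definition~\ref{def:ruptured-kan-formal}, and the use of the ordinary Kan filling property together with $\Coh_n = X_n$ to place every horn in the ``coherently filled'' branch of the trichotomy---is exactly the routine unpacking the paper leaves to the reader, and there is nothing to add or correct.
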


This embedding $\mathbf{Kan} \hookrightarrow \mathbf{rsKan}$ exhibits Kan complexes as the ``fully coherent'' ruptured Kan complexes—those with maximal $\Coh$ and empty $\Gap$.

\begin{definition}[Fully Gapped Complex]\label{def:fully-gapped}
A ruptured simplicial set is \keyterm{fully gapped} if every non-degenerate horn is gap-witnessed. This is the opposite extreme from Kan.
\end{definition}

Most interesting ruptured Kan complexes lie between these extremes: some horns fill, some are gapped, some remain open.

\subsection{The Coherent Core}

\begin{definition}[Coherent Core]\label{def:coherent-core}
For a ruptured simplicial set $\mathcal{X} = (X, \Coh, \Gap)$, the \keyterm{coherent core} $X^{\Coh} \subseteq X$ is the smallest simplicial subset containing $\Coh_n$ for all $n$. Equivalently, $X^{\Coh}$ is obtained by closing $\Coh$ under all face and degeneracy maps.
\end{definition}

The coherent core contains exactly what we have witnessed as coherent, together with whatever simplices are forced by the simplicial identities (faces and degeneracies of coherent simplices).

\begin{proposition}\label{prop:core-kan}
If $\mathcal{X}$ is a ruptured Kan complex, then $X^{\Coh}$ is a (possibly partial) Kan complex in the sense that horns within $X^{\Coh}$ that have coherent fillers in $\mathcal{X}$ have fillers in $X^{\Coh}$.
\end{proposition}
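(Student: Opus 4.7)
The plan is to unpack definitions and observe that the proposition follows almost immediately from the closure properties of $X^{\Coh}$. Suppose we are given a horn $h : \hornsimp{n}{k} \to X^{\Coh}$ which admits a coherent filler $\sigma \in \Coh_n$ when viewed in the ambient $\mathcal{X}$, i.e., $d_i(\sigma) = h(d_i)$ for all $i \neq k$ (after composing $h$ with the inclusion $X^{\Coh} \hookrightarrow X$).

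First, I would observe that $\sigma$ already lies in $X^{\Coh}_n$. By Definition~\ref{def:coherent-core}, $X^{\Coh}$ is the smallest simplicial subset of $X$ containing $\Coh_n$ for every $n$, so $\Coh_n \subseteq X^{\Coh}_n$ and hence $\sigma \in X^{\Coh}_n$. Next, I would classify the desired filler $\bar{h} : \simp{n} \to X^{\Coh}$ by $\sigma$ via the Yoneda correspondence $\mathrm{Hom}(\simp{n}, X^{\Coh}) \cong X^{\Coh}_n$. For this to make sense as a map \emph{into $X^{\Coh}$}, all faces and iterated degeneracies of $\sigma$ must themselves lie in $X^{\Coh}$, which is automatic because $X^{\Coh}$ is a simplicial subset.

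Finally, I would check that $\bar{h}$ restricts to $h$ along the horn inclusion $\iota : \hornsimp{n}{k} \hookrightarrow \simp{n}$. Since the inclusion $X^{\Coh} \hookrightarrow X$ is a monomorphism of simplicial sets, equality of two maps $\hornsimp{n}{k} \rightrightarrows X^{\Coh}$ can be verified after composing with it; and that downstream equality is exactly the hypothesis that $\sigma$ filled $h$ in $X$.

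There is no genuine obstacle. The proposition is essentially expressing a design property of the coherent core: $X^{\Coh}$ is constructed precisely so that coherent fillers of horns live within the witnessed structure. The slight subtlety—and the reason the statement carries the qualifier \textbf{possibly partial}—is that not every horn in $X^{\Coh}$ need have a coherent filler in $\mathcal{X}$; its image horn, viewed in the full ruptured complex, may be gap-witnessed or open. The mild technical point worth stating carefully in the write-up is therefore not the existence of the lift but the \emph{scope} of the Kan-like property: we make no claim about horns whose fillers fail to be coherent, only that when a coherent filler does exist globally, it is already present in the core.
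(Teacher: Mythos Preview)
Your proof is correct. The paper states this proposition without proof, evidently regarding it as immediate from Definition~\ref{def:coherent-core}; your argument spells out precisely the intended reason—$\Coh_n \subseteq X^{\Coh}_n$ by construction and $X^{\Coh}$ is closed under faces and degeneracies, so any coherent filler $\sigma$ already lives in the core and determines the required map $\simp{n} \to X^{\Coh}$. Your reading of the qualifier ``possibly partial'' also matches the surrounding text.
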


\begin{aside}
The coherent core is the ``HoTT fragment'' of a ruptured structure: the part where ordinary type-theoretic reasoning applies. Gaps live outside the core, in the complement. The boundary between core and complement is where transport horns live.
\end{aside}

\section{Gap Witnesses as Structure}

A crucial feature of ruptured simplicial sets: gap witnesses carry structure.

\begin{definition}[Gap Witness]\label{def:gap-witness}
For a horn $h \in \Gap_{n,k}$, a \keyterm{gap witness} for $h$ is the data that $h \in \Gap_{n,k}$—the fact that $h$ has been marked as gap-witnessed.
\end{definition}

In the minimal formulation, $\Gap_{n,k}$ is just a subset, and ``being gap-witnessed'' is a property. But we can enrich this.

\begin{definition}[Typed Gap Structure]\label{def:typed-gap}
A \keyterm{typed ruptured simplicial set} is a tuple $(X, \Coh, \Gap, \Omega)$ where:
\begin{itemize}[leftmargin=2em]
\item $(X, \Coh, \Gap)$ is a ruptured simplicial set
\item $\Omega : \mathrm{dom}(\Gap) \to \mathbf{Set}$ assigns to each gap-witnessed horn $h \in \Gap_{n,k}$ a set $\Omega(h)$ of \keyterm{gap modes}
\end{itemize}
\end{definition}

Gap modes capture \textit{how} a horn fails to fill. Different modes may represent:
\begin{itemize}[leftmargin=2em]
\item Different obstructions (topological, semantic, logical)
\item Different degrees of failure
\item Different witnesses or certificates of non-fillability
\end{itemize}

\begin{example}[Monodromy Modes]\label{ex:monodromy-modes}
In the Möbius strip example from Chapter~\ref{ch:transport}, the gap mode for a transport horn could record which deck transformation obstructs the lift. For the double cover of $S^1$, there are two deck transformations: identity and swap. The non-trivial gap mode is the swap.
\end{example}

For the core development, we work with the untyped version where $\Gap_{n,k}$ is simply a subset. Typed gaps are a natural extension when finer structure is needed.

\section{Constructions on Ruptured Simplicial Sets}

We briefly note that standard simplicial constructions extend to the ruptured setting.

\subsection{Products}

\begin{definition}[Product]\label{def:ruptured-product}
For ruptured simplicial sets $\mathcal{X}$ and $\mathcal{Y}$, the \keyterm{product} $\mathcal{X} \times \mathcal{Y}$ has:
\begin{itemize}[leftmargin=2em]
\item Underlying simplicial set: $X \times Y$ (the ordinary product)
\item $\Coh^{X \times Y}_n = \Coh^X_n \times \Coh^Y_n$
\item $\Gap^{X \times Y}_{n,k} = $ horns whose projection to $X$ is in $\Gap^X$ or whose projection to $Y$ is in $\Gap^Y$
\end{itemize}
\end{definition}

A horn in the product is gapped if either of its projections is gapped. This reflects that obstruction in either factor blocks the combined structure.

\subsection{Mapping Spaces (Outline)}

The notion of mapping space extends to the ruptured setting, but the full development requires care.

\begin{remark}[Mapping Spaces]\label{rem:ruptured-mapping}
For ruptured simplicial sets $\mathcal{X}$ and $\mathcal{Y}$, one expects a \keyterm{mapping space} $\mathrm{Map}(\mathcal{X}, \mathcal{Y})$ where:
\begin{itemize}[leftmargin=2em]
\item $n$-simplices are maps $\mathcal{X} \times \simp{n} \to \mathcal{Y}$ (with appropriate ruptured structure on $\simp{n}$)
\item Coherent $n$-simplices are those maps that preserve coherence
\item Gap-witnessed horns are those whose filling would require a non-coherence-preserving map
\end{itemize}
The details—particularly the canonical ruptured structure on $\simp{n}$ and the verification that $\mathrm{Map}(\mathcal{X}, \mathcal{Y})$ is well-defined in $\mathbf{rsSet}$—are deferred to future work.
\end{remark}

\section{Homotopy in Ruptured Simplicial Sets}

In ordinary simplicial homotopy theory, two maps $f, g : X \to Y$ are homotopic if there exists a map $H : X \times \simp{1} \to Y$ connecting them.

In the ruptured setting, we distinguish:

\begin{definition}[Coherent Homotopy]\label{def:coherent-homotopy}
Let $f, g : \mathcal{X} \to \mathcal{Y}$ be maps of ruptured simplicial sets. A \keyterm{coherent homotopy} from $f$ to $g$ is a homotopy $H$ that:
\begin{itemize}[leftmargin=2em]
\item sends coherent simplices to coherent simplices
\item sends gap-witnessed horns to gap-witnessed or open horns
\end{itemize}
\end{definition}

\begin{remark}[Gapped Homotopy (Informal)]\label{rem:gapped-homotopy}
Informally, a \keyterm{gapped homotopy relation} between $f$ and $g$ is a gap witness for the horn in $\mathrm{Map}(\mathcal{X}, \mathcal{Y})$ whose faces are $f$ and $g$. This depends on the full development of mapping spaces (Remark~\ref{rem:ruptured-mapping}).
\end{remark}

Thus two maps may be:
\begin{itemize}[leftmargin=2em]
\item \textbf{Coherently homotopic}: a coherent homotopy connects them
\item \textbf{Gapped}: we witness that no coherent homotopy connects them
\item \textbf{Open}: their homotopy relation is undetermined
\end{itemize}

This trichotomy at the level of maps mirrors the trichotomy at the level of points (paths) and higher (higher homotopies).

\section{The Trichotomy at Every Level}

We now see the full picture. At every level of the simplicial structure:

\begin{center}
\begin{tabular}{c|c|c|c}
\textbf{Level} & \textbf{Coherent} & \textbf{Gapped} & \textbf{Open} \\
\hline
0 (points) & $\sigma_0 \in \Coh_0$ & — & $\sigma_0 \notin \Coh_0$ \\
1 (paths) & filler in $\Coh_1$ & horn in $\Gap_{1,k}$ & neither \\
2 (homotopies) & filler in $\Coh_2$ & horn in $\Gap_{2,k}$ & neither \\
$n$ & filler in $\Coh_n$ & horn in $\Gap_{n,k}$ & neither \\
\end{tabular}
\end{center}

At level 0, there are no horns, so no gaps—only coherent or not-yet-coherent points. From level 1 upward, the full trichotomy appears.

\begin{theorem}[Interpretation of OHTT Judgments]\label{thm:interpretation}
Let $\mathcal{X}$ be a ruptured Kan complex modeling a context $\Gamma$ in OHTT. Then:
\begin{enumerate}[leftmargin=2em]
\item $\coh{\Gamma}{(p : \Id{A}{x}{y})}$ is interpreted as: the path horn in the identity type has a filler in $\Coh_1$
\item $\gap{\Gamma}{(p : \Id{A}{x}{y})}$ is interpreted as: the path horn is in $\Gap_{1,k}$
\item The judgment is open if neither holds
\end{enumerate}
Similarly for higher identity types at higher levels.
\end{theorem}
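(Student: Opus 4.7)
The plan is to define a semantic interpretation $\llbracket - \rrbracket$ taking each OHTT judgment about $\Gamma$ to an assertion about the corresponding structure in $\mathcal{X}$, and then verify each of the three clauses by unfolding the definitions of $\Coh$, $\Gap$, and openness. First I would fix the setup: the type $A$ is interpreted as a ruptured Kan fibration $\llbracket A \rrbracket \to \mathcal{X}$ in the sense of Definition~\ref{def:ruptured-fibration}, the points $x$ and $y$ as vertices in the fiber, and the identity type $\Id{A}{x}{y}$ as the subobject of 1-simplices of $\llbracket A \rrbracket$ whose boundary agrees with $(x, y)$. In this picture, a candidate inhabitant $p$ is precisely a 1-simplex filling the $(1,k)$-horn determined by the pair $(x, y)$.

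For clause (1), the argument is essentially inherited from the simplicial model of HoTT: by Definition~\ref{def:coherent-fragment} and Proposition~\ref{prop:kan-as-ruptured}, a coherence witness $\coh{\Gamma}{(p : \Id{A}{x}{y})}$ is interpreted as a 1-simplex $\bar{p} \in \Coh_1$ with the specified boundary, which is exactly a coherent filler of the relevant horn. For clause (2), a gap witness $\gap{\Gamma}{(p : \Id{A}{x}{y})}$ is interpreted by declaring the path horn to lie in $\Gap_{1,k}$; the semantic exclusion condition of Definition~\ref{def:ruptured-sset-formal} then precisely mirrors the syntactic Exclusion Law of Definition~\ref{def:exclusion}, making the interpretation sound: it is impossible, on either side, to simultaneously have a coherent filler and a gap mark.

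Clause (3) follows immediately from Definition~\ref{def:open-horn}: if the path horn is neither coherently filled nor in $\Gap_{1,k}$, it is open in the semantic sense, and this matches Definition~\ref{def:open} on the syntactic side. The extension to higher identity types proceeds by induction on level, using the level-$n$ judgment construction $J_1, J_2, \ldots$ from the higher-horns subsection of Chapter~1: at each level $n$ we run the same three-clause argument, replacing $(\Coh_1, \Gap_{1,k})$ with $(\Coh_n, \Gap_{n,k})$ and reading off the trichotomy from the table in Section~7 of this chapter.

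The main obstacle will be showing that the interpretation is well-defined at the level of contexts rather than individual judgments — that is, that assembling the semantic brackets respects the structural rules of a category with families and remains compatible with the exclusion condition under substitution, weakening, and identity-type elimination. The technical bookkeeping concentrates at higher levels, where I would need a version of the substitution lemma tracking $\Coh$ and $\Gap$ simultaneously and verifying that gap witnesses transport coherently along the ruptured Kan fibration; this is exactly where a naive extension of the HoTT interpretation could fail, since the fibration condition is now only a \emph{trichotomous} lifting property rather than a guaranteed one.
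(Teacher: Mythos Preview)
Your unfolding is correct, but you should know that the paper gives \emph{no proof} of this theorem at all: it is stated and then immediately followed by the remark that it ``connects the syntactic calculus of Chapter~I to the semantic structures developed here,'' with the next line being the section summary. In the paper's architecture, Theorem~\ref{thm:interpretation} functions as the \emph{definition} of how OHTT judgments are to be read in a ruptured Kan complex, not as a result requiring an argument; the three clauses are simply the three arms of the trichotomy table immediately preceding it.

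So your clauses (1)--(3) are fine and match the paper's intent; the induction on level for higher identity types is also what the paper gestures at with ``similarly for higher identity types at higher levels.'' Where you diverge is in your final paragraph: the ``main obstacle'' you flag --- compatibility with substitution, weakening, and identity elimination, and a $\Coh$/$\Gap$-tracking substitution lemma --- is genuine and important, but the paper does not confront it here. That work is deferred to Theorem~\ref{thm:soundness} in Chapter~\ref{ch:ruptured-fib}, and even there only a proof sketch (``induction on derivations'') is offered. You are right that this is where a rigorous treatment would concentrate effort, and right that the trichotomous lifting property is the crux; but for the purposes of matching the paper, you are proving more than it does.
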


This theorem connects the syntactic calculus of Chapter~I to the semantic structures developed here. OHTT judgments \textit{are} statements about ruptured Kan complexes.

\section{Summary}

Ruptured simplicial sets extend simplicial sets with:
\begin{itemize}[leftmargin=2em]
\item \textbf{Coherence marking}: which simplices are witnessed as coherent ($\Coh$)
\item \textbf{Gap witnessing}: which horns are witnessed as unfillable ($\Gap$)
\item \textbf{Exclusion}: coherent filling and gap witnessing are mutually exclusive
\end{itemize}

This gives the trichotomy at every level: coherent, gapped, or open.

Key relationships:
\begin{itemize}[leftmargin=2em]
\item Ordinary Kan complexes embed as ruptured Kan complexes with $\Coh = $ everything and $\Gap = \emptyset$
\item The coherent core $X^{\Coh}$ is the HoTT-like fragment
\item Morphisms preserve both coherence and gaps
\item Homotopy itself becomes three-valued: coherent, gapped, or open
\end{itemize}

In the next chapter, we extend this to fibrations—the structures that model type families and make transport possible (or impossible).

\begin{aside}
What this lets us say that ordinary simplicial theory cannot: ``This space has a hole that is not merely absence of a simplex but witnessed impossibility of one. The obstruction is structure, not lack of structure.'' The ruptured simplicial set is the formal home for obstructions.
\end{aside}


\chapter{Ruptured Fibrations}
\label{ch:ruptured-fib}

\begin{flushright}
\textit{The base has paths. The fiber has points.\\
Whether points travel along paths—that is the question.}\\[0.5ex]
{\small — The fibration problem}
\end{flushright}

\bigskip

A type family $B : A \to \UU$ in HoTT is modeled by a Kan fibration $p : E \to B$. The Kan condition on fibrations guarantees that paths in the base lift to paths in the total space—this is what makes transport work. In OHTT, we drop this guarantee. Paths may lift, or they may be witnessed as non-lifting. This chapter develops the theory of ruptured fibrations.

\section{Kan Fibrations: Review}

We recall the classical notion.

\begin{definition}[Kan Fibration]\label{def:kan-fib-recall}
A map $p : E \to B$ of simplicial sets is a \keyterm{Kan fibration} if it has the right lifting property against all horn inclusions. That is, for every commutative square:
\[
\begin{tikzcd}[row sep=2.5em, column sep=3em]
\hornsimp{n}{k} \arrow[r, "h"] \arrow[d, hook] & E \arrow[d, "p"] \\
\simp{n} \arrow[r, "\sigma"'] \arrow[ur, dashed, "\tilde{\sigma}"] & B
\end{tikzcd}
\]
there exists a diagonal filler $\tilde{\sigma}$ making both triangles commute.
\end{definition}

The lifting property says: if we have a partial simplex upstairs (a horn in $E$) and we know how it should complete downstairs (a full simplex in $B$), then we can complete it upstairs.

\begin{proposition}[Fibers of Kan Fibrations]\label{prop:kan-fibers}
If $p : E \to B$ is a Kan fibration and $b \in B_0$, then the fiber $E_b = p^{-1}(b)$ is a Kan complex.
\end{proposition}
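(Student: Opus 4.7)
The plan is to reduce horn-filling in $E_b$ to a lifting problem for $p$ against the horn inclusion, and then to observe that the filler so produced automatically corestricts to the fiber.

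First I would fix the correct simplicial notion of fiber. Rather than take $E_b$ to be the set-theoretic preimage of the vertex $b$, I would define it as the pullback of $p : E \to B$ along the map $\simp{0} \to B$ picking out $b$. Concretely, $(E_b)_n$ consists of those $e \in E_n$ whose image $p(e)$ is the totally degenerate $n$-simplex on $b$. Closure of $(E_b)_\bullet$ under face and degeneracy maps is immediate because $p$ is simplicial and the degenerate simplices on $b$ form a simplicial subset of $B$. Write $\iota : E_b \hookrightarrow E$ for the inclusion and $c_b^n : \simp{n} \to B$ for the degenerate $n$-simplex on $b$.

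Next, given a horn $h : \hornsimp{n}{k} \to E_b$ with $n \geq 1$, I would form the outer rectangle
\[
\begin{tikzcd}[row sep=2em, column sep=2.5em]
\hornsimp{n}{k} \arrow[r, "\iota h"] \arrow[d, hook] & E \arrow[d, "p"] \\
\simp{n} \arrow[r, "c_b^n"'] & B
\end{tikzcd}
\]
which commutes because every face of $h$ maps under $p \circ \iota$ into the degenerate tower on $b$, which is exactly the restriction of $c_b^n$ to the horn. Invoking the Kan fibration hypothesis on $p$ yields a diagonal filler $\tilde{\sigma} : \simp{n} \to E$ with $p \circ \tilde{\sigma} = c_b^n$ and $\tilde{\sigma}$ extending $\iota \circ h$.

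Finally, I would verify that $\tilde{\sigma}$ factors through $E_b$: any simplex $\tau$ in the image of $\tilde{\sigma}$ satisfies $p(\tau) = c_b^n(\tau)$, which is a degeneracy of $b$, so $\tau$ lies in $E_b$ by the pullback description above. Hence $\tilde{\sigma}$ corestricts to the required filler $\bar{h} : \simp{n} \to E_b$. The only genuine subtlety is the first step—using the pullback definition of fiber so that ``$p(\tau)$ is degenerate on $b$'' is automatically preserved by the filler; after that, the argument is a routine transfer of the right lifting property along a pullback, so I do not anticipate any real obstacle.
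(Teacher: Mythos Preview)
Your argument is correct and is the standard one; the paper states this classical proposition without proof (it is in the review section), but your method is exactly what the paper uses for the ruptured analog in Proposition~\ref{prop:ruptured-fiber-kan}: a horn in $E_b$ lies over the totally degenerate simplex $s_0^n(b)$ in $B$, the lifting property of $p$ supplies a filler, and that filler automatically lands in the fiber. Your care in taking the pullback definition of $E_b$ (so that membership in the fiber is ``maps to a degeneracy of $b$'') is precisely the point that makes the corestriction step go through.
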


Fibers are ``spaces'' in their own right. The total space $E$ organizes these fiber-spaces into a coherent family over the base $B$.

\begin{proposition}[Path Lifting]\label{prop:path-lifting}
Let $p : E \to B$ be a Kan fibration. Given $e \in E_0$ with $p(e) = b$, and a path $\gamma : b \to b'$ in $B$, there exists a path $\tilde{\gamma} : e \to e'$ in $E$ with $p(\tilde{\gamma}) = \gamma$.
\end{proposition}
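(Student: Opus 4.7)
The plan is to recognize the statement as a direct specialization of the horn-lifting property that defines a Kan fibration. In the paper's convention, the horn $\hornsimp{1}{1}$ is the vertex $\{0\}$ of $\simp{1}$. I would therefore encode the starting datum $e \in E_0$ as a horn map $h : \hornsimp{1}{1} \to E$ sending $\{0\}$ to $e$, and treat the path $\gamma : b \to b'$ as what it already is---a map $\sigma \defn \gamma : \simp{1} \to B$.

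Next I would assemble these into the canonical lifting square, with top arrow $h$, bottom arrow $\sigma$, left arrow the horn inclusion $\iota : \hornsimp{1}{1} \hookrightarrow \simp{1}$, and right arrow $p$. The only thing to check for commutativity is $p \circ h = \sigma \circ \iota$ evaluated on $\{0\}$, which says $p(e) = \gamma(\{0\}) = b$---and this is precisely the hypothesis. Invoking the Kan fibration property (Definition~\ref{def:kan-fib-recall}) yields a diagonal filler $\tilde{\gamma} : \simp{1} \to E$ satisfying $\tilde{\gamma} \circ \iota = h$ and $p \circ \tilde{\gamma} = \gamma$. Reading off: $\tilde{\gamma}(\{0\}) = e$, so $\tilde{\gamma}$ starts at $e$; $p \circ \tilde{\gamma} = \gamma$, so $\tilde{\gamma}$ projects to the given path; and defining $e' \defn \tilde{\gamma}(\{1\})$ gives $p(e') = \gamma(\{1\}) = b'$, so $\tilde{\gamma}$ is a 1-simplex from $e$ to $e'$ that lifts $\gamma$.

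There is no substantive obstacle here; the proposition is essentially the $n = 1$, $k = 1$ instance of the definition, and the proof is pure bookkeeping. The only place to be careful is choosing the horn that matches the direction of lifting: $\hornsimp{1}{1}$ (the vertex $\{0\}$) is appropriate for extending forward from the starting point $e$, while the symmetric choice $\hornsimp{1}{0}$ would correspond to lifting backward from a prescribed endpoint. The paper's explicit enumeration of low-dimensional horns fixes this unambiguously, so the argument proceeds without friction.
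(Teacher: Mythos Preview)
Your proposal is correct and matches the paper's approach exactly. The paper does not supply an explicit proof environment for this proposition, but its treatment elsewhere---the lifting diagram preceding Proposition~\ref{prop:transport-horn-filling} and the proof of the ruptured analogue (Proposition~\ref{prop:ruptured-path-lifting})---is precisely what you describe: recognize the statement as the $n=1$, $k=1$ instance of the Kan fibration definition, with $\hornsimp{1}{1} = \{0\}$ carrying $e$ and $\sigma = \gamma$.
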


This is the 1-dimensional case of the lifting property. The lifted path $\tilde{\gamma}$ may not be unique, but it exists. This existence is what makes transport definable in HoTT.

\section{Ruptured Fibrations}

\begin{definition}[Ruptured Fibration]\label{def:ruptured-fib}
A \keyterm{ruptured fibration} is a map $p : \mathcal{E} \to \mathcal{B}$ of ruptured simplicial sets such that for every lifting problem:
\[
\begin{tikzcd}[row sep=2.5em, column sep=3em]
\hornsimp{n}{k} \arrow[r, "h"] \arrow[d, hook] & E \arrow[d, "p"] \\
\simp{n} \arrow[r, "\sigma"'] & B
\end{tikzcd}
\]
with $h$ coherent (i.e., the faces of $h$ lie in $\Coh^E$) and $\sigma$ coherent (i.e., $\sigma \in \Coh^B_n$), exactly one of the following holds:
\begin{enumerate}[leftmargin=2em]
\item \textbf{Coherent lift}: There exists $\tilde{\sigma} \in \Coh^E_n$ making the diagram commute
\item \textbf{Gapped lift}: The lifting horn $(h, \sigma)$ is in $\Gap^{\mathcal{E}}$
\item \textbf{Open}: Neither coherent lift nor gap witness exists
\end{enumerate}
\end{definition}

The key difference from Kan fibrations: we do not require that coherent lifts always exist. They may, or they may be gapped, or the question may be open.

\begin{remark}
The phrase ``lifting horn $(h, \sigma)$'' means the data of $h$ together with the instruction that it should lift $\sigma$. This is not literally a horn in $E$ until we specify what the missing face should be; rather, it is a \textit{lifting problem} which may or may not have a solution.
\end{remark}

\subsection{Fibers of Ruptured Fibrations}

\begin{definition}[Fiber]\label{def:ruptured-fiber}
For a ruptured fibration $p : \mathcal{E} \to \mathcal{B}$ and a point $b \in \Coh^B_0$, the \keyterm{fiber} $\mathcal{E}_b$ is the ruptured simplicial set with:
\begin{itemize}[leftmargin=2em]
\item Underlying simplicial set: $E_b = p^{-1}(b)$
\item $\Coh^{E_b}_n = \Coh^E_n \cap (E_b)_n$
\item $\Gap^{E_b}_{n,k} = \Gap^E_{n,k}$ restricted to horns lying in $E_b$
\end{itemize}
\end{definition}

\begin{proposition}[Fibers are Ruptured Kan Complexes]\label{prop:ruptured-fiber-kan}
If $p : \mathcal{E} \to \mathcal{B}$ is a ruptured fibration and $b \in \Coh^B_0$, then $\mathcal{E}_b$ is a ruptured Kan complex.
\end{proposition}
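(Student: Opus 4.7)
The plan is to leverage the parenthetical in Definition~\ref{def:ruptured-kan-formal}, which states that the ruptured Kan condition is automatic from the ruptured simplicial set axioms and imposes no filling requirement beyond exclusion. The real content of the proposition is therefore twofold: first, that $\mathcal{E}_b$ as defined is a well-formed ruptured simplicial set, and second, that the trichotomy on horns in $E_b$ is inherited coherently from $\mathcal{E}$ via the fibration property, so that it has the intended geometric meaning rather than the trivial one.

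First I would verify that $E_b = p^{-1}(b)$ is a simplicial subset of $E$. Since $p$ is a simplicial map and $b$ is a vertex, face and degeneracy maps applied to any simplex lying over $b$ again lie over $b$ (identifying $b$ with its degeneracies). With this in hand, the restrictions $\Coh^{E_b}_n \defn \Coh^E_n \cap (E_b)_n$ and $\Gap^{E_b}$ defined as the pullback of $\Gap^E$ along the inclusion automatically inherit exclusion from $\mathcal{E}$: any coherent filler $\sigma \in \Coh^{E_b}_n$ of a horn $h \in \Gap^{E_b}_{n,k}$ would equally be a coherent filler in $\Coh^E_n$ of the same horn regarded in $\Gap^E_{n,k}$, contradicting exclusion upstairs.

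Next, to see that the trichotomy on $\mathcal{E}_b$ carries the fibration-induced geometric content and not merely the vacuous exclusion-only content, I would translate each coherent horn in the fiber into a lifting problem in $\mathcal{E}$. Given a coherent horn $h : \hornsimp{n}{k} \to E_b$, the composite $p \circ h$ is constant at $b$ and extends canonically to the totally degenerate simplex $\sigma_b : \simp{n} \to B$. Provided $\sigma_b \in \Coh^B_n$, the pair $(h, \sigma_b)$ is a coherent lifting problem, and Definition~\ref{def:ruptured-fib} produces exactly one of three outcomes: a coherent lift (which lies in $E_b$ because $p$ sends it to $\sigma_b$, yielding a filler in $\Coh^{E_b}_n$), a gap witness (which by the pullback definition of $\Gap^{E_b}$ places $h$ in $\Gap^{E_b}_{n,k}$), or an open case (which is open in $\mathcal{E}_b$ by restriction). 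Non-coherent horns in $E_b$ need no separate argument: the ruptured Kan condition is trivial for them.

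The main obstacle I expect is the implicit hypothesis that $\Coh$ is closed under degeneracies. Without it, the witness $b \in \Coh^B_0$ does not promote to $\sigma_b \in \Coh^B_n$, and the ruptured fibration property cannot be invoked at the decisive step. I would either treat this closure as a standing axiom of ruptured simplicial sets — the natural reading, consistent with Definition~\ref{def:coherent-core} treating the coherent core as a genuine sub-simplicial set — or explicitly strengthen the hypothesis on $b$ to require coherence of its degeneracies. A secondary piece of bookkeeping is to make precise the identification between a gap witness for the lifting problem $(h, \sigma_b)$ and a gap witness for $h$ as a horn in the fiber; this is cleanly handled by the pullback definition of $\Gap^{E_b}$, but should be spelled out to show that the trichotomy on fiber horns is in genuine bijection with the trichotomy on degenerate lifting problems, not merely implied by it in one direction.
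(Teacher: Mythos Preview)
Your approach is correct and matches the paper's: translate coherent horns in the fiber into lifting problems over the degenerate $n$-simplex $s_0^n(b)$ in $B$ and invoke the ruptured fibration trichotomy. You are in fact more careful than the paper, which simply asserts that $s_0^n(b)$ ``is coherent'' without comment---precisely the degeneracy-closure assumption you flag as the main obstacle---and which omits the well-formedness check (exclusion in $\mathcal{E}_b$) that you supply.
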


\begin{proof}
Horns in $\mathcal{E}_b$ are horns in $\mathcal{E}$ that lie entirely over $b$. Such horns lift the degenerate simplex $s_0^n(b)$ in $B$, which is coherent. By the ruptured fibration condition, each such lifting problem is coherent, gapped, or open—which is the condition for $\mathcal{E}_b$ to be a ruptured Kan complex.
\end{proof}

\subsection{Path Lifting in Ruptured Fibrations}

\begin{proposition}[Ruptured Path Lifting]\label{prop:ruptured-path-lifting}
Let $p : \mathcal{E} \to \mathcal{B}$ be a ruptured fibration. Let $e \in \Coh^E_0$ with $p(e) = b \in \Coh^B_0$, and let $\gamma \in \Coh^B_1$ be a path from $b$ to $b'$. Then exactly one of the following holds:
\begin{enumerate}[leftmargin=2em]
\item \textbf{Coherent lift}: There exists $\tilde{\gamma} \in \Coh^E_1$ with $d_1(\tilde{\gamma}) = e$ and $p(\tilde{\gamma}) = \gamma$
\item \textbf{Gapped lift}: The lifting problem $(e, \gamma)$ is gap-witnessed
\item \textbf{Open}: Neither
\end{enumerate}
\end{proposition}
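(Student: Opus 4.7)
The plan is to recognize this statement as the $(n,k) = (1,1)$ specialization of Definition~\ref{def:ruptured-fib} and to verify that the two formulations line up under the horn-filling picture of path lifting from Chapter~\ref{ch:transport}. The proof is then almost entirely definitional bookkeeping: build the lifting square, invoke the ruptured fibration definition, and read off the three cases.

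First I would package the data $(e, \gamma)$ into a lifting square. Since $\hornsimp{1}{1}$ is the single vertex $\{0\}$, I define $h : \hornsimp{1}{1} \to E$ by $h(\{0\}) = e$ and take $\sigma = \gamma : \simp{1} \to B$. Commutativity of
\[
\begin{tikzcd}[row sep=2em, column sep=2.5em]
\hornsimp{1}{1} \arrow[r, "h"] \arrow[d, hook] & E \arrow[d, "p"] \\
\simp{1} \arrow[r, "\gamma"'] & B
\end{tikzcd}
\]
reduces to checking $p(e) = b = \gamma(0)$, which is given by hypothesis. The coherence hypotheses of Definition~\ref{def:ruptured-fib} are immediate as well: $h$ is coherent because its only non-degenerate value $e$ lies in $\Coh^E_0$, and $\sigma = \gamma \in \Coh^B_1$ by assumption.

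Next I would invoke Definition~\ref{def:ruptured-fib} and read off the trichotomy. A coherent filler $\tilde{\sigma} \in \Coh^E_1$ is a 1-simplex whose restriction to $\hornsimp{1}{1} = \{0\}$ is $e$, which in face-map language says $d_1(\tilde{\sigma}) = e$, and whose projection satisfies $p(\tilde{\sigma}) = \gamma$; setting $\tilde{\gamma} = \tilde{\sigma}$ recovers case (1). Cases (2) and (3) are the verbatim gapped and open alternatives from Definition~\ref{def:ruptured-fib}, applied to this specific lifting problem.

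The only point that is not pure unfolding is the ``exactly one'' clause. Coherent-versus-gapped exclusion is inherited from the exclusion condition in Definition~\ref{def:ruptured-sset-formal}: if the lifting horn lies in $\Gap^{\mathcal{E}}$, no $\sigma \in \Coh^E_1$ can fill it, ruling out case (1); conversely, existence of a coherent filler forbids the horn from being gap-witnessed without violating exclusion. Openness is defined as the complement of both. I expect this exclusion check to be the main, though still minor, obstacle, since it is where the semantic Exclusion Law does the real work; everything else is mechanical specialization of Definition~\ref{def:ruptured-fib} to the smallest non-trivial dimension.
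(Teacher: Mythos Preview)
Your proposal is correct and follows essentially the same route as the paper: both recognize the proposition as the $(n,k)=(1,1)$ instance of Definition~\ref{def:ruptured-fib}, set up the lifting square with $h:\hornsimp{1}{1}\to E$ sending the unique vertex to $e$ and $\sigma=\gamma$, and read off the trichotomy. Your treatment is more explicit than the paper's one-sentence proof---in particular your discussion of the ``exactly one'' clause via exclusion is careful bookkeeping that the paper leaves to the definition itself, since Definition~\ref{def:ruptured-fib} already asserts mutual exclusivity of the three cases.
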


\begin{proof}
This is the $n=1$, $k=1$ case of the ruptured fibration condition. The horn is $\hornsimp{1}{1} = \{0\}$, and the map $h$ sends $0$ to $e$. The simplex $\sigma$ is $\gamma$. The ruptured fibration condition gives exactly the trichotomy.
\end{proof}

This proposition is the semantic content of the transport horn. Case (1) is coherent transport; case (2) is gapped transport; case (3) is open transport.

\section{Transport in Ruptured Fibrations}

We now make the connection to OHTT transport fully explicit.

\begin{definition}[Transport Operation]\label{def:ruptured-transport}
Let $p : \mathcal{E} \to \mathcal{B}$ be a ruptured fibration modeling a type family $B : A \to \UU$. Let $e \in \Coh^E_0$ be a coherent point in the fiber over $b$, and let $\gamma \in \Coh^B_1$ be a coherent path from $b$ to $b'$.

The \keyterm{transport of $e$ along $\gamma$} is:
\begin{itemize}[leftmargin=2em]
\item If the lifting problem $(e, \gamma)$ is coherently solvable: $\transport{\gamma}{e} = d_0(\tilde{\gamma})$ where $\tilde{\gamma}$ is the coherent lift
\item If the lifting problem is gapped: $\transport{\gamma}{e}$ is undefined; we record $\gap{}{(\transport{\gamma}{e} : \mathcal{E}_{b'})}$
\item If open: $\transport{\gamma}{e}$ is undetermined
\end{itemize}
\end{definition}

In HoTT, transport is a total function: $\mathsf{transport} : (p : x = y) \to B(x) \to B(y)$. In OHTT, transport is a \textit{partial} operation that may succeed, fail (with witness), or be undetermined.

\subsection{The Transport Horn, Semantically}

\begin{theorem}[Semantic Transport Horn]\label{thm:semantic-transport-horn}
In a ruptured fibration $p : \mathcal{E} \to \mathcal{B}$, a transport horn $\Lambda_{\mathsf{tr}}(B, e, \gamma)$ from Chapter~\ref{ch:transport} corresponds to:
\begin{enumerate}[leftmargin=2em]
\item $e \in \Coh^E_0$: the term is coherent
\item $\gamma \in \Coh^B_1$ from $p(e) = b$ to $b'$: the path is coherent
\item The lifting problem $(e, \gamma)$ is in $\Gap$: the lift is gapped
\end{enumerate}
\end{theorem}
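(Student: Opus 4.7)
The plan is to match each component of the syntactic transport horn with a semantic condition, by unpacking definitions and invoking the interpretation theorem and the lifting trichotomy already established. Because an inhabitant of $\Lambda_{\mathsf{tr}}(B, e, \gamma)$ is by definition a triple consisting of a coherence witness for $e : B(x)$, a coherence witness for $\gamma : \Id{A}{x}{y}$, and a gap witness for $\transport{\gamma}{e} : B(y)$, the proof reduces to translating each of these three witnesses into the corresponding semantic datum under the ruptured fibration $p : \mathcal{E} \to \mathcal{B}$.

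First I would dispatch the two coherent components. By Theorem~\ref{thm:interpretation}, a coherence judgment $\coh{\Gamma}{(e : B(x))}$ is interpreted as membership of $e$ in $\Coh^E_0$ (with $p(e) = b$ the interpretation of $x$), and $\coh{\Gamma}{(\gamma : \Id{A}{x}{y})}$ is interpreted as $\gamma \in \Coh^B_1$ with faces the interpretations of $x$ and $y$. These two translations give clauses (1) and (2) of the theorem directly, and they establish that we are precisely in the setting where the ruptured path lifting proposition (Proposition~\ref{prop:ruptured-path-lifting}) applies: a coherent point $e$ in the fiber over a coherent starting point $b$, together with a coherent base path $\gamma : b \to b'$.

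Next I would handle the gap component. Here I invoke Proposition~\ref{prop:transport-horn-filling}, which identifies transport with the horn-filling problem for $\hornsimp{1}{1}$ whose unique vertex is sent to $e$ and whose base extension is $\gamma$. Under this identification, the judgment $\transport{\gamma}{e} : B(y)$ is witnessed (coherent, gapped, or open) exactly as the corresponding lifting problem $(e, \gamma)$ is witnessed in the ruptured fibration. Thus a gap witness $\omega$ for $\transport{\gamma}{e} : B(y)$ is, by Definition~\ref{def:ruptured-transport}, precisely the assertion that the lifting problem $(e, \gamma)$ lies in $\Gap^{\mathcal{E}}$, giving clause (3). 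The converse direction proceeds identically in reverse: the three semantic conditions reassemble into a triple of the required judgmental form.

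Finally I would verify that these three clauses are jointly consistent and uniquely determine the horn. Consistency follows from the exclusion condition in Definition~\ref{def:ruptured-sset-formal}: the third clause (gap-witnessed lifting problem) forbids the existence of a coherent filler, which is exactly what the OHTT Exclusion Law (Definition~\ref{def:exclusion}) requires at the syntactic level, so no triple can be inhabited in both a coherent and gapped fashion. The main conceptual subtlety, rather than a technical obstacle, is ensuring that ``the lifting problem $(e, \gamma)$'' as a gap-witnessed object is well-defined: the remark following Definition~\ref{def:ruptured-fib} already flags that this is a lifting problem rather than literally a horn in $E$, so the proof must be explicit that the gap assignment $\Gap^{\mathcal{E}}$ is to be read as assigning gap status to such lifting problems, consistent with the way transport-as-horn-filling singles out $\hornsimp{1}{1}$ lifting data as the relevant geometric object.
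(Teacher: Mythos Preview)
Your proposal is correct. The paper, however, states this theorem \emph{without proof}, treating it as an immediate unpacking of the definitions that precede it (the syntactic transport horn of Definition~\ref{def:transport-horn}, transport-as-horn-filling in Proposition~\ref{prop:transport-horn-filling}, the ruptured path-lifting trichotomy of Proposition~\ref{prop:ruptured-path-lifting}, and the transport operation of Definition~\ref{def:ruptured-transport}). Your argument supplies exactly the componentwise matching the paper leaves implicit, and does so along the only natural route.

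One small caution on citations: Theorem~\ref{thm:interpretation} as stated in the paper is specifically about identity-type judgments and their interpretation via $\Coh_1$ and $\Gap_{1,k}$; invoking it for the bare typing judgment $\coh{\Gamma}{(e : B(x))}$ is a mild over-reach. The reading of a coherent term as a point $e \in \Coh^E_0$ is better grounded directly in the fiber structure (Definition~\ref{def:ruptured-fiber}) and the general semantic setup, rather than as an instance of that theorem. This is a labeling issue, not a gap in the reasoning---your clauses (1)--(3) and the consistency check via Exclusion are all sound.
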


This theorem grounds the syntactic transport horn in geometric structure. The transport horn is not a formal curiosity; it is the witness of a genuine geometric phenomenon—a path in the base that does not admit a coherent lift from a given starting point.

\section{Examples of Ruptured Fibrations}

We revisit the examples from Chapter~\ref{ch:transport} with full semantic detail.

\subsection{Covering Spaces with Monodromy}

Let $p : E \to B$ be a covering space (a fiber bundle with discrete fibers). As a simplicial set map, $p$ is a Kan fibration. But we can impose ruptured structure that reflects the monodromy.

\begin{example}[Möbius Double Cover]\label{ex:mobius-ruptured}
Let $B = S^1$ and $E$ its double cover (boundary of Möbius strip). The fiber over any point has two elements.

Define the ruptured structure as follows:
\begin{itemize}[leftmargin=2em]
\item $\Coh^E$ contains all simplices
\item $\Gap^E$ is empty for most horns, but: for the horn $(e, \gamma_{\text{loop}})$ where $\gamma_{\text{loop}}$ is the full loop in $S^1$ and $e$ is a point in the fiber, the lifting problem is \textit{gapped relative to returning to $e$}.
\end{itemize}

More precisely: the lift of $\gamma_{\text{loop}}$ starting at $e$ exists and ends at $e' \neq e$ (the other point in the fiber). If we ask ``does there exist a lift starting at $e$ and ending at $e$?''—that is a gapped question. The loop lifts, but not to a loop based at $e$.
\end{example}

This example shows that ruptured structure can capture monodromy: the obstruction to lifting paths as loops.

\subsection{The Path Fibration with Semantic Gaps}

Let $A$ be a space (Kan complex) modeling a semantic domain—say, the space of word-tokens. The \textit{path fibration} is:
\[
p : A^I \to A \times A
\]
sending a path to its endpoints. The fiber over $(x, y)$ is the space of paths from $x$ to $y$.

In the ordinary setting, this fibration is trivially Kan: paths exist or they don't, and if they exist, any horn involving them can be filled.

In a ruptured setting, we can mark certain paths as gapped:

\begin{example}[Polysemy Fibration]\label{ex:polysemy-ruptured}
Let $A$ be the space of word-tokens. Let $x = $ ``bank'' (financial) and $y = $ ``bank'' (river). There is a path $\gamma : x \to y$ in $A$ (lexical identity).

Now let $M$ be the space of meanings, and consider the fibration:
\[
p : M \to A
\]
where the fiber $M_a$ over a token $a$ is the set of meanings of $a$.

The path $\gamma$ in the base asks: can we lift it to a path in $M$? That is, can we find a path of meanings that starts at ``financial institution'' and ends at some meaning of ``bank'' (river)?

If we mark this lifting problem as gapped, we record: the lexical identity $\gamma$ does not cohere with the semantic structure. The path exists in $A$, but it does not lift coherently to $M$.
\end{example}

This example shows ruptured fibrations modeling semantic structure: the base is syntax (or lexemes), the fibers are meanings, and gaps record where syntax and semantics diverge.

\subsection{The Context Fibration with Derivability Gaps}

Let $\mathbf{Ctx}$ be the category of contexts in a type theory, and let $\mathbf{Jdg}$ be the category of judgments. There is a fibration:
\[
p : \mathbf{Jdg} \to \mathbf{Ctx}
\]
where the fiber over $\Gamma$ consists of judgments derivable in $\Gamma$.

Substitutions (context morphisms) $\sigma : \Delta \to \Gamma$ act on fibers by substitution: if $J$ is derivable in $\Gamma$, then $J[\sigma]$ is derivable in $\Delta$.

In the ruptured setting:

\begin{example}[Resource-Sensitive Derivability]\label{ex:resource-ruptured}
Let contexts carry resource annotations (permissions, capabilities, etc.). A judgment $J$ may be derivable in $\Gamma$ but not in $\Delta$ even if there is a substitution $\sigma : \Delta \to \Gamma$—because $\Delta$ lacks the resources $J$ requires.

The ruptured fibration has:
\begin{itemize}[leftmargin=2em]
\item $\Coh$: derivable judgments and valid substitutions
\item $\Gap$: lifting problems $(J, \sigma)$ where $J[\sigma]$ is provably underivable in $\Delta$
\end{itemize}

The gap witness is a certificate of underivability—a proof that the resources are insufficient.
\end{example}

\section{Composition of Ruptured Fibrations}

\begin{proposition}[Composition]\label{prop:ruptured-fib-compose}
If $p : \mathcal{E} \to \mathcal{B}$ and $q : \mathcal{B} \to \mathcal{A}$ are ruptured fibrations, then $q \circ p : \mathcal{E} \to \mathcal{A}$ is a ruptured fibration.
\end{proposition}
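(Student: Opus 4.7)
The plan is to reduce each lifting problem for $q \circ p$ to two successive lifting problems — first against $q$, then against $p$ — applying the ruptured fibration trichotomy at each stage. Given a coherent horn $h : \hornsimp{n}{k} \to E$ and a coherent $\sigma : \simp{n} \to A$ forming a lifting square for $q \circ p$, the first step is to form the derived problem $(p \circ h, \sigma)$ for $q$. Since $p$ preserves coherence as a morphism of ruptured simplicial sets, $p \circ h$ is a coherent horn in $\mathcal{B}$, and the ruptured fibration property of $q$ applies, yielding a trichotomy at the $\mathcal{B}$-level.

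From there the argument splits along the $q$-trichotomy. If the $q$-problem admits a coherent filler $\tau \in \Coh^B_n$, I would next pose the lifting problem $(h, \tau)$ for $p$ and invoke its ruptured fibration property: a coherent filler produces a coherent lift of the composite, a gap witness becomes the composite gap witness, and an open outcome propagates to open. If instead the $q$-problem is gap-witnessed or open, then no coherent lift of the composite can exist — any coherent $\tilde{\sigma} \in \Coh^E_n$ would project under $p$ to a coherent filler over $q$, contradicting the $q$-outcome via the exclusion condition — so the composite is gap-witnessed by inheritance from $q$ or remains open, accordingly. Exclusion for the composite in these two branches follows immediately because coherent lifts are obstructed at the $\mathcal{B}$-level.

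The main obstacle is exclusion in the mixed case where $q$ has a coherent lift $\tau$ but the $p$-problem against $(h, \tau)$ is gapped. The projection argument only rules out coherent composite lifts if \emph{every} coherent $q$-lift is blocked at the $p$-level, and coherent $q$-lifts need not be unique; a single gap witness against one $\tau$ is therefore not a priori enough to exclude a coherent composite lifting through some other $\tau'$. I would resolve this by making the composite gap structure explicit rather than implicit: define $\Gap^{q \circ p}_{n,k}$ as the union of horns pulled back from $\Gap^q_{n,k}$ together with those horns for which every coherent $q$-lift produces a gap-witnessed $p$-problem, and read the proposition as asserting the existence of a canonical induced ruptured fibration structure on $q \circ p$. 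With this formulation exclusion becomes automatic stage-by-stage, and the two-step lifting argument goes through without choice-dependence.
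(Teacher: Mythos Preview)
Your approach matches the paper's proof sketch exactly: decompose a lifting problem for $q \circ p$ into a $q$-stage followed by a $p$-stage, then combine the trichotomy outcomes (coherent if both coherent, gapped if either gapped, open otherwise). You have in fact gone further than the paper by flagging the non-uniqueness of the intermediate coherent lift $\tau$ and proposing an explicit composite gap structure (gapped iff the $q$-problem is gapped, or every coherent $q$-lift yields a gapped $p$-problem) to ensure exclusion is choice-independent; the paper's sketch treats ``the second step'' as if it were well-defined and does not address this point.
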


\begin{proof}[Proof sketch]
A lifting problem for $q \circ p$ decomposes into: first lift in $\mathcal{B}$ (via $q$), then lift in $\mathcal{E}$ (via $p$). Each step is coherent, gapped, or open. The composite is:
\begin{itemize}[leftmargin=2em]
\item Coherent if both steps are coherent
\item Gapped if either step is gapped
\item Open if neither is gapped but at least one is open
\end{itemize}
The exclusion condition is preserved by this logic.
\end{proof}

This proposition shows that ruptured fibrations form a reasonable class: they are closed under composition, so we can build complex fibrations from simple ones.

\section{Dependent Types as Ruptured Fibrations}

We now state the semantic interpretation of OHTT.

\begin{definition}[Interpretation]\label{def:ohtt-interpretation}
An OHTT context $\Gamma$ is interpreted as a ruptured Kan complex $\llbracket \Gamma \rrbracket$.

A type $A$ in context $\Gamma$ is interpreted as a ruptured fibration $p_A : \llbracket \Gamma.A \rrbracket \to \llbracket \Gamma \rrbracket$.

A term $t : A$ in context $\Gamma$ is interpreted as a coherent section: a map $s_t : \llbracket \Gamma \rrbracket \to \llbracket \Gamma.A \rrbracket$ with $p_A \circ s_t = \mathrm{id}$ and $s_t$ landing in $\Coh$.
\end{definition}

\begin{theorem}[Soundness]\label{thm:soundness}
If $\coh{\Gamma}{J}$ is derivable in OHTT, then $J$ holds in the interpretation.

If $\gap{\Gamma}{J}$ is derivable in OHTT, then $J$ is gapped in the interpretation.

Coherence and gap cannot both be derivable (Exclusion).
\end{theorem}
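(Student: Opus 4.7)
The plan is to prove all three clauses simultaneously by induction on the structure of OHTT derivations, using the interpretation of Definition~\ref{def:ohtt-interpretation} that sends contexts to ruptured Kan complexes, types to ruptured fibrations, and terms to coherent sections. Because OHTT has two primitive judgment forms with their own rules plus the Exclusion axiom governing their interaction, I would organize the induction into three mutually supporting cases: coherence introduction, gap introduction, and the exclusion check.

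For the coherence clause, I would appeal to the identification of the coherent fragment with MLTT (Definition~\ref{def:coherent-fragment}). Each MLTT rule is a coherence constructor, so I would verify rule-by-rule that if the premises are interpreted as elements of $\Coh$ (resp.\ coherent sections), then the conclusion is also in $\Coh$. This amounts to showing $\Coh$ is closed under the type-formers and the standard simplicial operations that realize $\Pi$, $\Sigma$, identity types, and universe constructors. Since the coherent core $X^{\Coh}$ behaves like an ordinary Kan structure on these constructions, the argument reduces to the classical HoTT soundness argument for Kan complexes, carried out inside the coherent core of the interpreting ruptured Kan complex.

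For the gap clause, the situation is more delicate because the excerpt presents gap-introduction only implicitly through the horn schema (Definition~\ref{def:horn}) and its transport instance (Definition~\ref{def:transport-horn}). I would therefore identify the admissible forms of $\vdash^-$ introduction as those producing a horn that lives in $\Gap_{n,k}$ of the interpreting ruptured fibration, and then verify each such form directly. At level one, the key input is Proposition~\ref{prop:ruptured-path-lifting}, which states that each lifting problem is coherent, gapped, or open; if the derivation witnesses a gap via an obstruction such as a monodromy class or a resource deficit (as in the examples), then the corresponding lifting horn is, by hypothesis, placed into $\Gap$. Higher-dimensional cases follow by the same argument applied to the $(n,k)$-lifting condition of Definition~\ref{def:ruptured-fib}.

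Finally, the exclusion clause is almost automatic given the first two. Suppose both $\coh{\Gamma}{J}$ and $\gap{\Gamma}{J}$ were derivable; by the first two clauses, the interpreting horn would be both coherently filled and an element of $\Gap$, contradicting the exclusion condition built into Definition~\ref{def:ruptured-sset-formal}. Hence no such pair of derivations exists, and syntactic exclusion follows from semantic exclusion. The hard part will be the gap clause: without a fully enumerated collection of gap-introduction rules, the verification must be carried out schematically, and one must argue that any rule compatible with the Exclusion Law can in fact be realized by a gap-witnessed horn in some ruptured fibration—this is essentially a completeness-of-intended-models claim that I would discharge by exhibiting canonical ruptured structures for the three example classes (monodromic, polysemous, resource-sensitive) and appealing to a gluing argument to cover the general case.
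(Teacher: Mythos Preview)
Your overall strategy---induction on derivations, with each OHTT rule shown to correspond to a construction on ruptured fibrations that preserves coherence or gap, and the Exclusion Law matched to the semantic exclusion condition---is exactly the paper's own proof sketch, and your elaboration (reducing the coherence clause to standard HoTT soundness carried out inside the coherent core, handling gaps via the lifting trichotomy of Proposition~\ref{prop:ruptured-path-lifting} and Definition~\ref{def:ruptured-fib}) is a reasonable unpacking of it.

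One correction to your final paragraph: what you call a ``completeness-of-intended-models claim,'' to be discharged by a gluing argument, is misnamed and unnecessary. Soundness requires only that each gap-introduction rule, when its premises hold in the model, produces a conclusion whose interpreting horn lies in $\Gap$; it does \emph{not} require that every semantically gapped horn arise from a derivation (that direction would be completeness). Since the paper leaves the gap-introduction rules schematic, the honest formulation is that soundness holds for any rule schema whose semantic action places the relevant lifting horn into $\Gap$---a per-rule verification, exactly parallel to the coherence case. Your worry about the absence of a fully enumerated rule set is legitimate and reflects the paper's own sketch-level treatment, but the remedy is to state the theorem conditionally on the admissible rule schema, not to reach for gluing or completeness machinery.
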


\begin{proof}[Proof sketch]
Induction on derivations. Each OHTT rule corresponds to a construction on ruptured fibrations that preserves coherence or gap as appropriate. The Exclusion Law corresponds to the exclusion condition in ruptured simplicial sets.
\end{proof}

This theorem says: OHTT is sound with respect to the ruptured simplicial model. The syntactic calculus correctly describes the geometric structures.

\section{Higher Structure}

Just as Kan complexes model $\infty$-groupoids, ruptured Kan complexes model ``ruptured $\infty$-groupoids''—higher categorical structures where some composites exist, some are gapped, and some are undetermined.

\begin{definition}[Ruptured $\infty$-Groupoid]\label{def:ruptured-infty-gpd}
A \keyterm{ruptured $\infty$-groupoid} is a ruptured Kan complex $\mathcal{X}$ viewed as a higher categorical structure:
\begin{itemize}[leftmargin=2em]
\item 0-cells: elements of $\Coh_0$
\item 1-cells: coherent paths (elements of $\Coh_1$ with both endpoints in $\Coh_0$)
\item 2-cells: coherent homotopies (elements of $\Coh_2$ with appropriate boundary)
\item And so on at all levels
\item Gapped cells: horns in $\Gap$ record where composition is obstructed
\end{itemize}
\end{definition}

In an ordinary $\infty$-groupoid, all 1-cells are invertible (up to 2-cells), all 2-cells are invertible (up to 3-cells), etc. In a ruptured $\infty$-groupoid, inverses may be gapped: a 1-cell may have no coherent inverse, witnessed by a gap.

\begin{example}[Non-Invertible Path]\label{ex:non-invertible}
In an ordinary Kan complex, every path $\gamma : x \to y$ has an inverse $\gamma^{-1} : y \to x$ such that $\gamma \cdot \gamma^{-1} \sim \mathrm{id}_x$ (up to homotopy). This invertibility is witnessed by filling certain \textit{2-dimensional horns}—the 2-simplices that express the left and right inverse laws up to homotopy.

In a ruptured Kan complex, we may have:
\begin{itemize}[leftmargin=2em]
\item $\gamma \in \Coh_1$ (the path is coherent)
\item But the relevant 2-horn (asking for $\gamma^{-1}$ together with its witnessing homotopy) is in $\Gap_{2,k}$ for some $k$
\end{itemize}

This means: the path exists coherently, but its inverse is witnessed as obstructed at the level of 2-dimensional structure. The path is ``one-way''—not because inverses don't make sense, but because this particular inverse fails to cohere.
\end{example}

This example shows that ruptured $\infty$-groupoids can model directed or partially-directed phenomena while retaining the higher categorical structure.

\section{Summary}

Ruptured fibrations are maps of ruptured simplicial sets where the lifting property is \textit{ruptured}:
\begin{itemize}[leftmargin=2em]
\item Lifts may be coherent (transport succeeds)
\item Lifts may be gapped (transport fails, with witness)
\item Lifts may be open (transport is undetermined)
\end{itemize}

Key results:
\begin{itemize}[leftmargin=2em]
\item Fibers of ruptured fibrations are ruptured Kan complexes
\item Path lifting becomes three-valued: coherent, gapped, or open
\item Transport is a partial operation, undefined when gapped
\item Ruptured fibrations compose
\item OHTT is sound with respect to this semantics
\end{itemize}

The ruptured fibration is the semantic model for type families in OHTT. Where HoTT guarantees transport, OHTT allows transport to be witnessed as blocked. The transport horn is the syntactic record of this semantic phenomenon.

\begin{aside}
What this gives us: a geometric universe where not every path lifts, not every homotopy exists, not every composite forms—and where these failures are not mere absence but structured, witnessed obstruction. This is the geometry of meaning, where gaps are as real as connections.
\end{aside}

\part{Three Obstructions}
\label{part:obstructions}


\chapter{Topological Obstruction}
\label{ch:topological}

\begin{flushright}
\textit{The loop closes in the base.\\
In the fiber, it does not return.}\\[0.5ex]
{\small — Monodromy}
\end{flushright}

\bigskip

We now apply OHTT to its most classical domain: the topology of fiber bundles. The phenomenon of monodromy—where loops in the base space fail to lift to loops in the total space—is precisely a transport horn. This chapter develops the theory in detail, showing that OHTT provides a natural language for obstructions that classical homotopy theory treats as external data.

\section{Covering Spaces}

We begin with the simplest case: covering spaces, where fibers are discrete.

\begin{definition}[Covering Space]\label{def:covering}
A \keyterm{covering space} is a map $p : E \to B$ of topological spaces such that every point $b \in B$ has an open neighborhood $U$ with $p^{-1}(U) \cong U \times F$ for some discrete set $F$ (the fiber), and $p$ corresponds to projection onto the first factor.
\end{definition}

Covering spaces are fiber bundles with discrete fibers. The key property is the \textit{unique path lifting}:

\begin{proposition}[Unique Path Lifting]\label{prop:unique-lift}
Let $p : E \to B$ be a covering space. For any path $\gamma : [0,1] \to B$ and any point $e \in E$ with $p(e) = \gamma(0)$, there exists a unique path $\tilde{\gamma} : [0,1] \to E$ with $\tilde{\gamma}(0) = e$ and $p \circ \tilde{\gamma} = \gamma$.
\end{proposition}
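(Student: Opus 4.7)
The plan is the classical Lebesgue-cover argument: reduce the global lifting problem to a finite sequence of local ones using compactness of $[0,1]$, solve each local instance by the local homeomorphism structure of the covering, and then use a connectedness argument to upgrade local uniqueness to global uniqueness.

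First I would invoke compactness. The family $\{\gamma^{-1}(U) : U \subseteq B \text{ evenly covered}\}$ is an open cover of $[0,1]$, so the Lebesgue number lemma yields a subdivision $0 = t_0 < t_1 < \cdots < t_n = 1$ such that each image $\gamma([t_{i-1}, t_i])$ lies in some evenly covered open $U_i$. This turns the continuous lifting problem into finitely many discrete sheet choices.

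Next I would build $\tilde{\gamma}$ inductively. Set $\tilde{\gamma}(0) = e$. Assuming $\tilde{\gamma}$ is defined on $[0, t_{i-1}]$ with $\tilde{\gamma}(t_{i-1}) = e_{i-1}$, decompose $p^{-1}(U_i) = \bigsqcup_\alpha V_\alpha$ into sheets; since $e_{i-1}$ lies in exactly one such sheet $V_i$, and $p|_{V_i} : V_i \to U_i$ is a homeomorphism, the formula $\tilde{\gamma}|_{[t_{i-1}, t_i]} = (p|_{V_i})^{-1} \circ \gamma|_{[t_{i-1}, t_i]}$ gives a continuous local lift agreeing with $e_{i-1}$ at the left endpoint. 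The pasting lemma then glues the pieces into a continuous $\tilde{\gamma} : [0,1] \to E$ with $p \circ \tilde{\gamma} = \gamma$.

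For uniqueness I would use the standard open-and-closed argument. Given two lifts $\tilde{\gamma}_1, \tilde{\gamma}_2$ starting at $e$, let $S = \{t \in [0,1] : \tilde{\gamma}_1(t) = \tilde{\gamma}_2(t)\}$. Then $S$ contains $0$; it is closed because $E$ is Hausdorff and $(\tilde{\gamma}_1, \tilde{\gamma}_2) : [0,1] \to E \times E$ is continuous; and it is open because if $\tilde{\gamma}_1(t_0) = \tilde{\gamma}_2(t_0) = e_0$, then on a small interval around $t_0$ both lifts must land in the unique sheet over an evenly covered neighborhood of $\gamma(t_0)$ that contains $e_0$, and on that sheet each is forced to equal $(p|_{V})^{-1} \circ \gamma$. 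Connectedness of $[0,1]$ then gives $S = [0,1]$.

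The main subtlety—and the step where a careless proof breaks—is the openness of $S$: one must use the \emph{disjointness} of sheets, not just the local homeomorphism property, to rule out the two lifts diverging into different components of $p^{-1}(U)$ immediately past $t_0$. This sheet-disjointness is exactly the geometric content that separates covering spaces from more general fibrations, and it is precisely what fails in the ruptured setting of Chapter~\ref{ch:ruptured-fib}, where the analogous lifting question admits coherent, gapped, or open outcomes rather than a forced unique answer.
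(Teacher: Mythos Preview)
Your argument is correct and is the standard textbook proof (Lebesgue-number subdivision for existence, the clopen argument for uniqueness). The paper itself does not prove this proposition: it is stated without proof as a classical fact being recalled for context, so there is no approach to compare against beyond noting that yours is the canonical one.

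One small technical remark: your closedness step for $S$ invokes Hausdorffness of $E$, which the paper's Definition~\ref{def:covering} does not explicitly assume. The argument goes through without it by showing directly that the \emph{complement} of $S$ is open: if $\tilde{\gamma}_1(t_0) \neq \tilde{\gamma}_2(t_0)$, the two points lie in distinct sheets over an evenly covered neighborhood of $\gamma(t_0)$, and continuity keeps the lifts in those disjoint sheets on a small interval around $t_0$. This uses only the sheet-disjointness you already emphasize in your final paragraph, and avoids the Hausdorff hypothesis entirely.
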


The lift exists and is unique. But—and this is the key point—the lifted path may not end where we expect.

\subsection{Monodromy}

\begin{definition}[Monodromy Action]\label{def:monodromy}
Let $p : E \to B$ be a covering space with $B$ path-connected. Fix a basepoint $b \in B$ and let $F = p^{-1}(b)$ be the fiber. The \keyterm{monodromy action} is the group homomorphism:
\[
\mu : \pi_1(B, b) \to \mathrm{Aut}(F)
\]
defined as follows: for a loop $[\gamma] \in \pi_1(B, b)$, lift $\gamma$ starting at each $e \in F$, and let $\mu([\gamma])(e) = \tilde{\gamma}(1)$.
\end{definition}

The monodromy action is well-defined (independent of the representative $\gamma$) and is a group homomorphism. It measures how the fiber ``twists'' as we go around loops in the base.

\begin{example}[Circle Double Cover]\label{ex:circle-double}
Let $B = S^1$ and let $p : E \to S^1$ be the double cover (explicitly: $E = S^1$ and $p(z) = z^2$ in complex notation). The fiber over any point has two elements.

The fundamental group $\pi_1(S^1) \cong \mathbb{Z}$ is generated by the loop $\gamma$ going once around. The monodromy action sends $\gamma$ to the non-trivial permutation of the two-element fiber: the swap $(e_1 \; e_2)$.

Thus: a loop in the base lifts to a path that \textit{switches} the two sheets. Going around twice returns to the starting point.
\end{example}

\begin{example}[Möbius Strip]\label{ex:mobius-topo}
Let $E$ be the boundary of a Möbius strip (a circle that wraps twice around the core). The projection $p : E \to S^1$ to the core circle is a double cover, equivalent to the previous example.

Geometrically: walking once around the Möbius strip, an ant on the boundary ends up on the ``opposite side''—the other sheet of the cover.
\end{example}

\section{Monodromy as Transport Horn}

We now interpret monodromy in OHTT.

Let $p : E \to B$ be a covering space, viewed as a ruptured fibration (Chapter~\ref{ch:ruptured-fib}). The base $B$ and total space $E$ are Kan complexes (simplicial models of the topological spaces). We impose ruptured structure as follows.

\begin{definition}[Monodromy-Ruptured Cover]\label{def:monodromy-ruptured}
Let $p : E \to B$ be a covering space with monodromy action $\mu$. The \keyterm{monodromy-ruptured structure} on $p$ is:
\begin{itemize}[leftmargin=2em]
\item $\Coh^E$ contains all simplices of $E$ (the topological structure is fully coherent)
\item $\Coh^B$ contains all simplices of $B$
\item $\Gap$ contains the following lifting problems: for a loop $\gamma : b \to b$ in $B$ and a point $e \in E_b$, the lifting problem ``lift $\gamma$ to a loop based at $e$'' is \textit{gapped} if $\mu([\gamma])(e) \neq e$
\end{itemize}
\end{definition}

In other words: we mark as gapped the lifting problems where monodromy obstructs.

\begin{theorem}[Monodromy Transport Horn]\label{thm:monodromy-horn}
Let $p : E \to B$ be a covering space with non-trivial monodromy. Let $\gamma$ be a loop in $B$ based at $b$, and let $e \in E_b$ be a point with $\mu([\gamma])(e) \neq e$. Then:
\[
\Lambda_{\mathsf{tr}}(E, e, \gamma)
\]
is inhabited. Explicitly:
\begin{enumerate}[leftmargin=2em]
\item $\coh{}{(e \in E_b)}$: the point $e$ is coherent in the fiber
\item $\coh{}{(\gamma : b \to b)}$: the loop $\gamma$ is coherent in the base
\item $\gap{}{(\text{``$\gamma$ lifts to a loop at $e$''})}$: the loop-lift is gapped
\end{enumerate}
\end{theorem}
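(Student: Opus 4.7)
The plan is to verify the three horn components directly, using the monodromy-ruptured structure of Definition~\ref{def:monodromy-ruptured} to supply the witnessing data and the unique path lifting property of Proposition~\ref{prop:unique-lift} to guarantee consistency with the exclusion condition. The first two components are essentially unpacking: since $\Coh^E$ and $\Coh^B$ contain all simplices of $E$ and $B$ respectively, the 0-simplex $e \in E_b$ lies in $\Coh^E_0$ and the 1-simplex $\gamma$ with $d_0(\gamma) = d_1(\gamma) = b$ lies in $\Coh^B_1$. These yield the coherence witnesses $\coh{}{(e \in E_b)}$ and $\coh{}{(\gamma : b \to b)}$ immediately.

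The heart of the argument is the third component. I would first make precise the lifting problem whose gap-witnessing we are asserting: namely, finding $\tilde{\gamma} \in E_1$ with $p \circ \tilde{\gamma} = \gamma$ and both endpoints pinned at $e$, i.e.\ $d_1(\tilde{\gamma}) = e$ and $d_0(\tilde{\gamma}) = e$. This is a strictly stronger condition than ordinary path lifting (which only constrains the starting endpoint). By Definition~\ref{def:monodromy-ruptured}, this constrained lifting problem is placed into $\Gap$ precisely when $\mu([\gamma])(e) \neq e$, which is our standing hypothesis; this supplies the gap witness $\gap{}{(\text{``$\gamma$ lifts to a loop at $e$''})}$.

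To see that this assignment is actually consistent with the exclusion condition from Definition~\ref{def:ruptured-sset-formal}---that no coherent $\sigma \in \Coh^E_1$ can fill this lifting problem---I would invoke unique path lifting. There is a unique $\tilde{\gamma} \in E_1$ with $p \circ \tilde{\gamma} = \gamma$ and $d_1(\tilde{\gamma}) = e$, and by definition of the monodromy action $d_0(\tilde{\gamma}) = \mu([\gamma])(e)$. Since $\mu([\gamma])(e) \neq e$, this unique lift is not a loop at $e$, and by uniqueness no other lift starting at $e$ exists. Hence the constrained lifting problem genuinely has no coherent solution, so the gap marking is legitimate and the three pieces assemble into the transport horn.

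The main obstacle I anticipate is the formalization of ``lifting $\gamma$ to a loop at $e$'' as a well-defined lifting problem in the ruptured simplicial framework, because the standard horn-filling schema of Definition~\ref{def:ruptured-fib} asks only for extension of a horn, whereas the loop condition adds a second boundary constraint. One clean way to handle this is to encode the loop-based problem as a horn of shape $\hornsimp{2}{1}$ built from two copies of the edge ``vertex $e$'' and the edge $\gamma$ (after suitable degeneracy to lift $b$ to $e$ on both ends), so that a filler would supply both $\tilde{\gamma}$ and its loop property simultaneously. Once this repackaging is made precise, the argument reduces to the covering space computation of monodromy, and the theorem falls out of Definition~\ref{def:monodromy-ruptured} and Proposition~\ref{prop:unique-lift}.
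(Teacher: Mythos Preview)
Your proposal is correct and follows essentially the same approach as the paper: invoke unique path lifting to see that the unique lift of $\gamma$ starting at $e$ ends at $\mu([\gamma])(e) \neq e$, so no loop-lift exists, and then record this in $\Gap$ via the monodromy-ruptured structure. You are more thorough than the paper in explicitly verifying the first two coherence components and the exclusion consistency, and in flagging the formalization of the loop-based lifting problem as a $\hornsimp{2}{1}$ horn, but the core argument is identical.
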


\begin{proof}
The path $\gamma$ lifts uniquely to a path $\tilde{\gamma}$ starting at $e$, but $\tilde{\gamma}(1) = \mu([\gamma])(e) \neq e$. Thus $\tilde{\gamma}$ is not a loop based at $e$. The lifting problem ``find a loop in $E$ based at $e$ covering $\gamma$'' has no solution. We record this as a gap witness in $\Gap$.
\end{proof}

This theorem shows that monodromy is not external data attached to a covering space—it \textit{is} the gap structure of the covering viewed as a ruptured fibration.

\subsection{The Gap Witness}

What is the gap witness, concretely?

\begin{proposition}[Gap Witness as Deck Transformation]\label{prop:gap-deck}
In the monodromy transport horn, the gap witness $\omega$ can be identified with the deck transformation $\mu([\gamma]) \in \mathrm{Aut}(F)$.
\end{proposition}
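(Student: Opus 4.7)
The plan is to exhibit a canonical bijection between gap witnesses of the monodromy transport horn (as in Theorem~\ref{thm:monodromy-horn}) and the non-identity values of the deck transformation $\mu([\gamma])$ acting on the fiber. Since the bare ruptured structure of Definition~\ref{def:ruptured-sset-formal} records only that a horn lies in $\Gap$, I will first upgrade to the typed gap structure of Definition~\ref{def:typed-gap}, taking as the set $\Omega(h)$ of gap modes for the lifting horn $h = (e, \gamma)$ the orbit of $e$ under $\pi_1(B,b)$ minus $\{e\}$. The identification $\omega \leftrightarrow \mu([\gamma])$ will be shown to be well-defined, canonical, and compatible with composition.

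The forward direction proceeds as follows. Given a gap witness for the transport horn at $(e, \gamma)$, invoke the unique path lifting property (Proposition~\ref{prop:unique-lift}) to obtain the unique lift $\tilde\gamma$ of $\gamma$ starting at $e$. By the defining condition of the monodromy-ruptured structure (Definition~\ref{def:monodromy-ruptured}), the horn is in $\Gap$ precisely when $\tilde\gamma(1) \neq e$. Set $\omega := \mu([\gamma])$; by construction $\omega(e) = \tilde\gamma(1)$. Since the monodromy is a group action on a discrete fiber, the permutation $\omega \in \mathrm{Aut}(F)$ is fully determined once we know its restriction to the orbit of $e$, which is in turn pinned down by running the same lift argument at each point of the orbit. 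Thus the gap data determines $\mu([\gamma])$ uniquely.

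For the reverse direction, I would start from a non-identity deck transformation $\sigma = \mu([\gamma])$ obtained from a loop $\gamma$ with $\sigma(e) \neq e$. Unique path lifting again yields the lifted path $\tilde\gamma$ ending at $\sigma(e) \neq e$, so no loop-lift based at $e$ exists. This is precisely the content of the gap witness supplied by Definition~\ref{def:monodromy-ruptured}. Hence the two constructions are mutually inverse. For canonicity, I would check that the identification intertwines composition of loops with composition in $\mathrm{Aut}(F)$: concatenation $\gamma \cdot \gamma'$ of loops produces a gap witness whose lift satisfies $\widetilde{\gamma \cdot \gamma'}(1) = \mu([\gamma'])(\mu([\gamma])(e))$, matching the group homomorphism property of $\mu$ noted after Definition~\ref{def:monodromy}.

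The main obstacle is not any single calculation but a framing issue: the base notion of ``gap witness'' in Definition~\ref{def:gap-witness} is merely the indicator that a horn belongs to $\Gap_{n,k}$, which by itself carries no algebraic content that could be \emph{identified} with a group element. The real work is therefore to justify the passage to the typed gap structure and to argue that this enrichment is forced — rather than chosen — by the geometry of covering spaces. Concretely, I would argue that for any ruptured refinement of a covering space compatible with Definition~\ref{def:monodromy-ruptured}, the set $\Omega(h)$ of gap modes at the lifting horn $(e, \gamma)$ must contain at least the data of the endpoint $\tilde\gamma(1)$, and that this endpoint, ranged over $e \in F$, reconstructs $\mu([\gamma])$; conversely no finer invariant can distinguish two gap modes at $(e, \gamma)$ that differ only outside the orbit of $e$. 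This justifies the identification claimed in the proposition.
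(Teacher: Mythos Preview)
Your proposal is substantially more careful than the paper's own argument, which is essentially a one-sentence stipulation: the deck transformation $\mu([\gamma])$ is the data certifying that the lift ends at $\mu([\gamma])(e) \neq e$, so we \emph{take} it as the gap witness. The paper does not construct a bijection, check compatibility with loop concatenation, or argue that the typed gap structure is forced; it simply points to Example~\ref{ex:monodromy-modes} and Definition~\ref{def:typed-gap} and declares the identification. Your approach buys genuine rigor --- you correctly diagnose that the untyped $\Gap$ of Definition~\ref{def:gap-witness} carries no algebraic content, and you supply the missing justification for why the deck transformation is the canonical enrichment rather than an arbitrary label.

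One wrinkle worth tightening: your choice of $\Omega(h)$ as ``the orbit of $e$ under $\pi_1(B,b)$ minus $\{e\}$'' makes the gap mode a \emph{point of the fiber} (namely the endpoint $\tilde\gamma(1)$), not an element of $\mathrm{Aut}(F)$. You then recover the full permutation by aggregating across all starting points $e'$ in the orbit, but this means the gap witness at the \emph{single} horn $(e,\gamma)$ does not by itself determine $\mu([\gamma])$ --- it determines only $\mu([\gamma])(e)$. This slightly undercuts the proposition's phrasing. The cleaner fix, and the one the paper implicitly adopts, is to set $\Omega(h)$ to be (the singleton containing) $\mu([\gamma])$ itself, or equivalently $\mathrm{Aut}(F)\setminus\{\mathrm{id}\}$ with the canonical choice picked out by the lift; then the identification is immediate and no aggregation step is needed. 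Your compatibility-with-composition check is a genuine addition that the paper omits.
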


\begin{proof}
The deck transformation $\mu([\gamma])$ is precisely the data of ``how the fiber permutes under the loop $\gamma$.'' This data certifies that the lifted path does not close: it ends at $\mu([\gamma])(e)$, not at $e$. The gap witness is the deck transformation itself.
\end{proof}

This illustrates the OHTT principle: \textit{gap witnesses are structure, not absence}. The monodromy is positive data that explains why lifting fails.

\section{Higher Covering Spaces}

The theory extends to higher homotopy.

\begin{definition}[Principal $G$-Bundle]\label{def:principal-bundle}
A \keyterm{principal $G$-bundle} is a fiber bundle $p : E \to B$ with fiber $G$ (a topological group) and a free, transitive right $G$-action on fibers compatible with the local trivializations.
\end{definition}

Covering spaces are principal bundles with $G$ discrete. For non-discrete $G$, the situation is richer.

\begin{definition}[Holonomy]\label{def:holonomy}
Let $p : E \to B$ be a principal $G$-bundle with connection. The \keyterm{holonomy} around a loop $\gamma : b \to b$ is the element $g \in G$ such that parallel transport around $\gamma$ acts by right multiplication by $g$.
\end{definition}

Holonomy generalizes monodromy to bundles with connection. The connection specifies \textit{how} to lift infinitesimally; the holonomy measures the global failure to return.

\begin{theorem}[Holonomy Transport Horn]\label{thm:holonomy-horn}
Let $p : E \to B$ be a principal $G$-bundle with connection, and let $\gamma$ be a loop in $B$ based at $b$ with holonomy $g \neq e_G$ (the identity). Let $e \in E_b$. Then:
\[
\Lambda_{\mathsf{tr}}(E, e, \gamma)
\]
is inhabited, with gap witness the holonomy element $g$.
\end{theorem}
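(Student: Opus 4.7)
The plan is to follow the template of Theorem~\ref{thm:monodromy-horn} almost verbatim, with the unique path-lifting property of covering spaces replaced by the horizontal lifting supplied by the connection, and the deck transformation replaced by the holonomy element $g$. The essential content is that the holonomy measures precisely the failure of the horizontal lift to close, just as the monodromy measured the sheet-swap in the discrete case.

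The two coherence components are immediate from placing the bundle data in $\Coh$, as in the monodromy-ruptured construction of Definition~\ref{def:monodromy-ruptured}: $\coh{}{(e \in E_b)}$ because $e$ is a genuine fiber point, and $\coh{}{(\gamma : b \to b)}$ because $\gamma$ is a genuine loop in the base. For the gap component I would use the connection to produce the unique horizontal lift $\tilde{\gamma} : [0,1] \to E$ with $\tilde{\gamma}(0) = e$ and $p \circ \tilde{\gamma} = \gamma$, and invoke the definition of holonomy to conclude $\tilde{\gamma}(1) = e \cdot g$. Since the $G$-action on the total space of a principal bundle is free and $g \neq e_G$, we have $e \cdot g \neq e$, so $\tilde{\gamma}$ is not a loop at $e$. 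This failure is recorded as a gap witness, and identifying the witness with $g \in G$ is then immediate, generalising Proposition~\ref{prop:gap-deck}.

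The main obstacle, relative to the discrete covering case, is that positive-dimensional $G$ admits many non-horizontal lifts: any lift of $\gamma$ starting at $e$ has the form $\tilde{\gamma}'(t) = \tilde{\gamma}(t) \cdot h(t)$ for some continuous $h : [0,1] \to G$ with $h(0) = e_G$, so a non-horizontal loop-lift would exist whenever $g^{-1}$ lies in the path-component of $e_G$ in $G$. Two natural resolutions suggest themselves. First, one may build the connection into the ruptured structure by placing only horizontal paths in $\Coh^E$ as lifts; then uniqueness of parallel transport gives the obstruction directly and the covering-space argument goes through unchanged. Second, one may phrase the gap at the level of $G/G_0$ (or in a higher homotopical quotient when $G$ is connected), so that $g$ represents a class that is non-trivial precisely when no loop-lift compatible with parallel transport exists. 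Either route also discharges the exclusion condition, since the holonomy is a genuine invariant of $(\gamma, \nabla)$ and cannot simultaneously equal $e_G$ and be witnessed as the obstruction $g \neq e_G$.
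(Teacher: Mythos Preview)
Your first two paragraphs reproduce exactly the argument the paper intends: the paper states this theorem without proof, relying on the reader to transpose the monodromy argument of Theorem~\ref{thm:monodromy-horn} with horizontal lifting in place of unique path-lifting and the holonomy element in place of the deck transformation. You have carried this out correctly, including the appeal to freeness of the $G$-action to conclude $e \cdot g \neq e$.

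Your third paragraph goes beyond the paper and identifies a genuine subtlety it glosses over. For a covering space the lift is unique, so ``the lift fails to close'' is unambiguous. For positive-dimensional $G$ the \emph{horizontal} lift is unique but arbitrary lifts are not, and as you correctly observe, a non-horizontal loop-lift exists whenever $g$ lies in the identity component of $G$. The paper's statement is only sensible under your first resolution: the connection appears in the hypotheses precisely so that the ruptured structure takes only horizontal lifts as coherent, making the lifting problem well-posed with a unique candidate whose endpoint is $e \cdot g$. Your second resolution (passing to $G/G_0$) would weaken the conclusion beyond what is stated. So your first reading is the intended one, and you have both supplied the missing proof and made explicit an assumption the paper leaves tacit.
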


The holonomy—like monodromy—is not auxiliary data but the gap structure of the bundle.

\section{Obstruction Theory}

Classical obstruction theory studies when maps or sections can be extended. OHTT provides a natural reformulation.

\begin{definition}[Lifting Problem]\label{def:obstruction-lifting}
Given a fibration $p : E \to B$ and a map $f : X \to B$, a \keyterm{lift} of $f$ is a map $\tilde{f} : X \to E$ with $p \circ \tilde{f} = f$:
\[
\begin{tikzcd}[row sep=2em, column sep=2.5em]
& E \arrow[d, "p"] \\
X \arrow[r, "f"'] \arrow[ur, dashed, "\tilde{f}"] & B
\end{tikzcd}
\]
\end{definition}

Classical obstruction theory proceeds by skeletal induction: try to lift over the 0-skeleton, then the 1-skeleton, etc. At each stage, obstructions may arise.

\begin{definition}[Obstruction Cocycle]\label{def:obstruction-cocycle}
Let $p : E \to B$ be a fibration with fiber $F$, and let $f : X \to B$ be a map. Suppose a lift $\tilde{f}^{(n-1)}$ exists over the $(n-1)$-skeleton of $X$. The \keyterm{obstruction cocycle} $o_n(f) \in H^n(X; \pi_{n-1}(F))$ measures the failure to extend to the $n$-skeleton.
\end{definition}

If $o_n(f) \neq 0$, no lift over the $n$-skeleton exists.

\begin{theorem}[Obstruction as Gap Structure]\label{thm:obstruction-gap}
The obstruction cocycle $o_n(f)$ corresponds to gap witnesses in the ruptured fibration structure on $p$. Specifically:
\begin{enumerate}[leftmargin=2em]
\item A non-zero obstruction $o_n(f)$ corresponds to a family of gapped lifting horns
\item The cohomology class $[o_n(f)]$ is the ``collective gap witness'' for the failure to extend
\end{enumerate}
\end{theorem}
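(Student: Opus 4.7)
The plan is to make the classical skeletal induction of obstruction theory explicit inside the ruptured simplicial model, then read off the gap structure from the obstruction cocycle cell by cell.

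First I would fix a CW or simplicial model for $X$ together with the partial lift $\tilde{f}^{(n-1)}$ over its $(n-1)$-skeleton. For each $n$-cell $\sigma : \simp{n} \to X$, the boundary $\partial \sigma$ already carries a chosen lift via $\tilde{f}^{(n-1)}$, and composition with $f$ gives a coherent $n$-simplex downstairs in $B$. This is precisely the $n$-dimensional lifting problem studied in Chapter~\ref{ch:ruptured-fib}: a coherent boundary in $E$ together with a coherent filler $f \circ \sigma$ in $B$. By the ruptured fibration condition each such problem is coherent, gapped, or open; in the classical obstruction-theoretic setting (an ordinary Kan fibration with a fixed partial lift and the commitment to decide each cell) the trichotomy collapses to the first two options cell by cell.

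Next I would identify the value $o_n(f)(\sigma) \in \pi_{n-1}(F)$ with a canonical typed gap mode. Read in a local trivialization around $f(\sigma)$, the restriction of $\tilde{f}^{(n-1)}$ to $\partial \sigma$ defines a map $S^{n-1} \to F$ whose homotopy class is $o_n(f)(\sigma)$. This class vanishes iff the lift extends coherently over $\sigma$; otherwise it is a structured certificate that the lifting horn admits no coherent filler. I would place every such horn in $\Gap^E$ and take the class $o_n(f)(\sigma)$ itself as the gap mode in the sense of Definition~\ref{def:typed-gap}. This establishes part~(1): the non-vanishing support of the cocycle is exactly a family of gapped lifting horns, and its cell-indexed assembly is the family claimed.

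For part~(2), I would show that the cocycle condition on $o_n(f)$ and the cohomological equivalence relation correspond precisely to compatibility of gap witnesses under change of $(n-1)$-lift. The identity $\delta o_n = 0$ reflects that gap modes on adjacent $n$-cells agree along their common $(n-1)$-face under the cell-attaching data, while replacing $\tilde{f}^{(n-1)}$ by a homotopic choice alters $o_n$ by a coboundary, which semantically is a reshuffling of per-cell gap witnesses that preserves the global obstruction. The class $[o_n(f)] \in H^n(X; \pi_{n-1}(F))$ is therefore the invariant ``collective gap witness,'' independent of the chosen partial lift.

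The hard part will be step~(2): upgrading the bare membership in $\Gap$ to a typed gap structure rich enough to host $\pi_{n-1}(F)$-valued modes, and then formulating the gauge equivalence of gap-witness families that matches cohomological equivalence of cocycles. This presses on the mapping-space structure deferred in Remark~\ref{rem:ruptured-mapping}, since modifying the $(n-1)$-lift is itself a homotopy in a function space between ruptured objects. I would therefore establish (1) at the level of cocycles and individual horns first, and only promote to the cohomology-class statement (2) once the typed-gap calculus of homotopies of sections is in place.
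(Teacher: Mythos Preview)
Your proposal follows essentially the same line as the paper's argument: per-cell lifting problems arising from the partial lift over $\partial\sigma$, with the obstruction value serving as the gap datum and the cohomology class as the global invariant. The paper gives only a three-sentence proof sketch at this level of generality, so your version---particularly the use of typed gap modes (Definition~\ref{def:typed-gap}) to carry the $\pi_{n-1}(F)$-value and your careful treatment of gauge equivalence under change of $(n-1)$-lift---goes well beyond what the paper actually supplies.
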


\begin{proof}[Proof sketch]
Each $n$-cell $\sigma$ of $X$ maps via $f$ to an $n$-simplex in $B$. The lift over $\partial \sigma$ gives a horn in $E$. If the horn cannot be filled coherently, we have a gap. The obstruction cocycle collects these local gaps into a global cohomological invariant.
\end{proof}

This theorem shows that obstruction theory—often treated as a computational tool—is fundamentally about gap structure in ruptured fibrations.

\section{The Hopf Fibration}

We examine a non-trivial example in detail.

\begin{definition}[Hopf Fibration]\label{def:hopf}
The \keyterm{Hopf fibration} is the map $p : S^3 \to S^2$ defined by viewing $S^3 \subset \mathbb{C}^2$ and $S^2 \cong \mathbb{CP}^1$, with $p(z_1, z_2) = [z_1 : z_2]$.
\end{definition}

The Hopf fibration has fiber $S^1$. It is a principal $S^1$-bundle over $S^2$.

\begin{proposition}[Hopf Fibration is Non-Trivial]\label{prop:hopf-nontrivial}
The Hopf fibration does not admit a global section. Equivalently, $S^3$ is not homeomorphic to $S^2 \times S^1$.
\end{proposition}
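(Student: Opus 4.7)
The plan is to derive a contradiction from the existence of a global section using the induced splitting on homotopy groups, and then observe that the product-decomposition statement is logically equivalent to the section statement by general principal bundle theory.

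First, I would suppose for contradiction that $s : S^2 \to S^3$ is a section with $p \circ s = \mathrm{id}_{S^2}$. Functoriality of $\pi_n$ gives $p_* \circ s_* = \mathrm{id}$ on $\pi_n(S^2)$, so $p_* : \pi_n(S^3) \to \pi_n(S^2)$ is a split surjection; consequently $\pi_n(S^2)$ is a direct summand of $\pi_n(S^3)$ for every $n$.

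Second, I would take $n = 2$. The Hurewicz theorem, applied to the simply connected $S^2$, gives $\pi_2(S^2) \cong H_2(S^2) \cong \mathbb{Z}$, while a cellular approximation argument (any map $S^2 \to S^3$ is homotopic to one missing a point, hence null-homotopic because $S^3$ minus a point is contractible) gives $\pi_2(S^3) = 0$. But $\mathbb{Z}$ cannot sit as a summand of the trivial group; this is the contradiction, so no section exists.

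Third, for the equivalence: since $p : S^3 \to S^2$ is a principal $S^1$-bundle, a section $s$ yields a map $\Phi : S^2 \times S^1 \to S^3$ defined by $\Phi(b, g) \defn s(b) \cdot g$, which by local triviality and the free transitive fiber action is a homeomorphism intertwining the $S^1$-actions. Conversely, any homeomorphism $S^2 \times S^1 \cong S^3$ over $S^2$ produces a section via $b \mapsto \Phi(b, 1)$. Hence ``admits a global section'' and ``is homeomorphic to $S^2 \times S^1$'' stand or fall together, and the previous step already refuted the first.

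The main obstacle is really the homotopy-group input rather than any formal manipulation: one must establish $\pi_2(S^3) = 0$ without circularly invoking the Hopf fibration itself (cellular approximation suffices), and one must be careful when asserting that the map $\Phi$ built from a section is a genuine homeomorphism and not merely a continuous equivariant bijection --- this is where local triviality of $p$ is doing the real work.
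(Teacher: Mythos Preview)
Your argument is correct, but it takes a slightly different route from the paper. The paper's proof is a one-liner: if a section existed, the principal bundle would trivialize, giving $S^3 \cong S^2 \times S^1$; then $\pi_1(S^3) = 0$ while $\pi_1(S^2 \times S^1) \cong \mathbb{Z}$, contradiction. You instead use the section directly to split $\pi_n$ and then invoke $\pi_2$, where $\pi_2(S^2) \cong \mathbb{Z}$ cannot embed in $\pi_2(S^3) = 0$. Your approach has the virtue of not needing the full trivialization map $\Phi$ to reach the contradiction --- the mere existence of $s$ already forces the impossible summand --- and you are careful to justify $\pi_2(S^3) = 0$ without circularity. The paper's approach, by contrast, reaches for the most elementary invariant available: $\pi_1$ requires no Hurewicz or cellular approximation, only that $S^3$ is simply connected and $S^1$ is not. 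Both arguments are sound; the paper trades a small amount of bundle-theoretic input (section $\Rightarrow$ trivialization) for a much lighter homotopy-group computation.

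One small remark on your third step: you correctly flag that the converse direction of the equivalence needs the homeomorphism to be \emph{over $S^2$}. The proposition's phrasing ``equivalently'' is a bit loose on this point, and the paper does not address it either; strictly, what both proofs establish is that no section exists \emph{and} that the total spaces are not homeomorphic, with the forward implication (section $\Rightarrow$ product) doing the logical work.
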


\begin{proof}
If a section existed, we would have $S^3 \cong S^2 \times S^1$. But $\pi_1(S^3) = 0$ while $\pi_1(S^2 \times S^1) \cong \mathbb{Z}$. Contradiction.
\end{proof}

In OHTT terms:

\begin{theorem}[Hopf Section Horn]\label{thm:hopf-horn}
The Hopf fibration, viewed as a ruptured fibration, has the following structure:
\begin{enumerate}[leftmargin=2em]
\item All fibers are coherent (each fiber is a circle, fully present)
\item The base $S^2$ is coherent
\item The global section problem is gapped: there is no coherent section $S^2 \to S^3$
\end{enumerate}
\end{theorem}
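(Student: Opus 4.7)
The plan is to establish the three clauses in order, with the first two following essentially from the classical geometry and the third requiring a reduction to obstruction theory via Theorem~\ref{thm:obstruction-gap}. For clauses (1) and (2), I would observe that the singular simplicial sets of $S^3$, $S^2$, and each fiber $S^1$ are Kan complexes in the classical sense, and the Hopf projection is a genuine Kan fibration. By Proposition~\ref{prop:kan-as-ruptured}, this data lifts to a ruptured fibration structure in which $\Coh_n$ consists of all simplices and $\Gap$ is empty on all horns that live inside a single fiber or inside the base. This gives the coherent baseline for the whole bundle and exhibits fibers and base as fully coherent ruptured Kan complexes.

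For clause (3), the plan is to interpret the classical non-existence of a section as a genuine gap witness rather than mere absence. The primary obstruction to a section $S^2 \to S^3$ is built skeletally: the 1-skeleton lift is unobstructed (since $\pi_0(S^1) = 0$), and the obstruction to extending over each 2-cell of $S^2$ lives in $\pi_1(S^1) = \mathbb{Z}$. These local obstructions assemble, via Theorem~\ref{thm:obstruction-gap}, into a cohomology class $o_2 \in H^2(S^2;\mathbb{Z}) \cong \mathbb{Z}$ which coincides with the Euler class of the circle bundle. For the Hopf fibration, a direct computation (via the Gysin sequence, or identifying $p$ with the unit circle bundle in the tautological line bundle on $\mathbb{CP}^1$) gives $o_2 = \pm 1 \neq 0$. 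Theorem~\ref{thm:obstruction-gap} then promotes each local 2-cell obstruction to a gapped horn in the ruptured total-space structure, and their aggregate is the required gap witness for the global section problem.

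The main obstacle is the promotion step: ensuring that the Euler class, traditionally read as an \emph{absence} theorem (``no section exists''), is legitimately recorded as positive data in $\Gap$. This is the same conceptual move as Proposition~\ref{prop:gap-deck}, which identifies the monodromy permutation as the gap witness for the loop-lifting horn; here the Euler class plays the analogous role for the 2-dimensional section horn, and the deck-transformation / characteristic-class analogy is what lets us assert $\gap{}{(\exists s : S^2 \to S^3)}$ rather than merely failing to assert $\coh{}{(\exists s)}$. Once this identification is in place, the three clauses combine into the theorem, with the non-triviality of the Euler class providing the structural content that distinguishes ``gapped'' from ``open.''
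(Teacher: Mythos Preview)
Your proposal is correct and follows the same route as the paper: the paragraph immediately after the theorem identifies the gap witness as the obstruction class in $H^2(S^2;\pi_1(S^1)) \cong H^2(S^2;\mathbb{Z}) \cong \mathbb{Z}$, noting that the Hopf fibration has class $1$, which is exactly your Euler-class computation via Theorem~\ref{thm:obstruction-gap}. You supply more scaffolding than the paper does (the skeletal induction, the Gysin-sequence justification, and the explicit analogy with Proposition~\ref{prop:gap-deck}), but the core identification---nonzero obstruction cocycle as structural gap witness---is the same; the only ingredient the paper has that you omit is the elementary $\pi_1$ argument of Proposition~\ref{prop:hopf-nontrivial}, which it uses as a preliminary non-triviality check rather than as the proof proper.
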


The obstruction to a global section lives in $H^2(S^2; \pi_1(S^1)) \cong H^2(S^2; \mathbb{Z}) \cong \mathbb{Z}$. The Hopf fibration has obstruction class $1 \in \mathbb{Z}$—it is ``maximally obstructed'' in a precise sense.

\begin{aside}
The Hopf fibration is often presented as a curiosity—a strange way to wrap $S^3$ around $S^2$. In OHTT, its strangeness is made structural: it is a ruptured fibration where the global section horn is gapped. The gap witness is the generator of $H^2(S^2; \mathbb{Z})$.
\end{aside}

\section{Characteristic Classes}

Characteristic classes are cohomological invariants of fiber bundles. In OHTT, they measure gap structure.

\begin{definition}[Characteristic Class]\label{def:characteristic}
A \keyterm{characteristic class} for $G$-bundles is a natural transformation from the functor of $G$-bundles over $X$ to the cohomology $H^*(X)$. Common examples:
\begin{itemize}[leftmargin=2em]
\item \textbf{Chern classes} $c_i \in H^{2i}(X; \mathbb{Z})$ for complex vector bundles
\item \textbf{Stiefel-Whitney classes} $w_i \in H^i(X; \mathbb{Z}/2)$ for real vector bundles
\item \textbf{Pontryagin classes} $p_i \in H^{4i}(X; \mathbb{Z})$ for real vector bundles
\item \textbf{Euler class} $e \in H^n(X; \mathbb{Z})$ for oriented $\mathbb{R}^n$-bundles
\end{itemize}
\end{definition}

\begin{principle}[Characteristic Classes as Gap Invariants]\label{princ:characteristic-gap}
Let $p : E \to B$ be a fiber bundle with characteristic class $\chi \in H^k(B)$. Then:
\begin{enumerate}[leftmargin=2em]
\item $\chi = 0$ typically corresponds to a certain family of lifting problems being coherently solvable
\item $\chi \neq 0$ witnesses collective gap structure in the ruptured fibration $p$
\end{enumerate}
The precise correspondence depends on the obstruction-theoretic setup (Postnikov tower, reduction of structure group, etc.).
\end{principle}

\begin{example}[Euler Class and Sections]\label{ex:euler}
For an oriented $\mathbb{R}^n$-bundle $p : E \to B$, the Euler class $e(E) \in H^n(B; \mathbb{Z})$ obstructs nowhere-zero sections. If $e(E) \neq 0$, any section must have zeros—the section problem is gapped.

The hairy ball theorem (no nowhere-zero vector field on $S^2$) follows: the tangent bundle $TS^2$ has Euler class $e(TS^2) \in H^2(S^2; \mathbb{Z})$ whose evaluation gives the Euler characteristic $\chi(S^2) = 2 \neq 0$.
\end{example}

Characteristic classes are global invariants of gap structure. They measure how ``twisted'' a bundle is—equivalently, how many transport horns it contains.

\section{Summary}

Topological obstruction theory is naturally expressed in OHTT:

\begin{itemize}[leftmargin=2em]
\item \textbf{Monodromy} is the gap structure of covering spaces. The gap witness is the deck transformation.

\item \textbf{Holonomy} generalizes monodromy to bundles with connection. The gap witness is the holonomy element.

\item \textbf{Obstruction cocycles} correspond to families of gap-witnessed lifting horns. The cohomology class collects local gaps into a global invariant.

\item \textbf{Characteristic classes} measure collective gap structure. Non-vanishing characteristic classes witness that certain section or lifting problems are gapped.
\end{itemize}

What OHTT provides is not new theorems about topology, but a \textit{reformulation} that makes obstruction first-class. In HoTT, obstructions are external: we compute them, but they are not part of the type-theoretic structure. In OHTT, obstructions are internal: they are gap witnesses in ruptured fibrations.

\begin{aside}
What this lets us say: ``The Möbius strip is not just a space with a twist; the twist is a gap witness in the covering fibration. The Hopf map is not just a surjection $S^3 \to S^2$; it is a ruptured fibration whose section problem is gapped with witness the generator of $H^2(S^2)$.'' Obstruction becomes data.
\end{aside}


\chapter{Semantic Obstruction}
\label{ch:semantic}

\begin{flushright}
\textit{The word is the same.\\
The meaning refuses to follow.}\\[0.5ex]
{\small — Polysemy}
\end{flushright}

\bigskip

We now turn to a less traditional domain: the formal semantics of language. The phenomenon of polysemy—where a single lexical form carries multiple, non-reducible meanings—is a transport horn. Lexical identity provides a path; meanings are fibers; the failure of meaning to transport along lexical identity is witnessed as gap. This chapter develops the mathematical framework.

\section{The Semantic Fibration}

We model meaning as a fibration over form.

\begin{definition}[Lexical Space]\label{def:lexical-space}
A \keyterm{lexical space} is a simplicial set $L$ whose:
\begin{itemize}[leftmargin=2em]
\item 0-simplices are \keyterm{lexical tokens}—occurrences of words in context
\item 1-simplices are \keyterm{lexical relations}—paths connecting tokens
\item Higher simplices record coherence of lexical relations
\end{itemize}
\end{definition}

The key lexical relation is \textit{identity}: two tokens of ``the same word.'' We write $\ell \sim \ell'$ when $\ell$ and $\ell'$ are tokens of the same lexeme.

\begin{definition}[Meaning Fibration]\label{def:meaning-fibration}
A \keyterm{meaning fibration} is a Kan fibration $p : M \to L$ where:
\begin{itemize}[leftmargin=2em]
\item $M$ is the \keyterm{meaning space}
\item For each token $\ell \in L_0$, the fiber $M_\ell = p^{-1}(\ell)$ is the space of \keyterm{meanings} (senses) of $\ell$
\item Paths in $M$ are \keyterm{meaning relations}—coherences between senses
\end{itemize}
\end{definition}

In the standard (non-ruptured) setting, the Kan fibration property guarantees: if $\gamma : \ell \to \ell'$ is a lexical path and $m \in M_\ell$ is a meaning, then there exists a meaning $m' \in M_{\ell'}$ connected to $m$ by a path lifting $\gamma$.

This is where polysemy becomes problematic.

\subsection{The Polysemy Problem}

\begin{example}[Bank]\label{ex:bank}
Consider the English word ``bank.'' Let:
\begin{itemize}[leftmargin=2em]
\item $\ell_1$ = token of ``bank'' in ``I deposited money at the bank''
\item $\ell_2$ = token of ``bank'' in ``We picnicked on the river bank''
\end{itemize}

There is a lexical path $\gamma : \ell_1 \to \ell_2$ (they are tokens of the same lexeme).

Let $m_1 \in M_{\ell_1}$ be the meaning ``financial institution.''

What is $\transport{\gamma}{m_1}$? The Kan condition would require a meaning $m_2 \in M_{\ell_2}$ connected to $m_1$. But ``financial institution'' is not a meaning of ``bank'' in the river context. The meanings are genuinely disjoint.
\end{example}

The Kan fibration model fails for polysemy. There is no coherent way to transport ``financial institution'' along the lexical identity path.

\section{The Ruptured Meaning Fibration}

We now apply OHTT.

\begin{definition}[Ruptured Meaning Fibration]\label{def:ruptured-meaning}
A \keyterm{ruptured meaning fibration} is a ruptured fibration $p : \mathcal{M} \to \mathcal{L}$ where:
\begin{itemize}[leftmargin=2em]
\item $\mathcal{L} = (L, \Coh^L, \Gap^L)$ is a ruptured simplicial set of lexical tokens
\item $\mathcal{M} = (M, \Coh^M, \Gap^M)$ is a ruptured simplicial set of meanings
\item Lifting problems $(m, \gamma)$ may be coherent, gapped, or open
\end{itemize}
\end{definition}

\begin{definition}[Polysemy Horn]\label{def:polysemy-horn}
Let $p : \mathcal{M} \to \mathcal{L}$ be a ruptured meaning fibration. A \keyterm{polysemy horn} is a transport horn:
\begin{align*}
\Lambda_{\mathsf{poly}}(m, \gamma) \;\defn\;\; &\coh{}{(m \in M_\ell)} \\
\times\; &\coh{}{(\gamma : \ell \to \ell')} \\
\times\; &\gap{}{(\transport{\gamma}{m} \in M_{\ell'})}
\end{align*}
where $\gamma$ is a lexical identity path and $m$ is a meaning that does not coherently transport along $\gamma$.
\end{definition}

\begin{theorem}[Polysemy as Transport Horn]\label{thm:polysemy-horn}
Let $\ell_1, \ell_2$ be tokens of the same polysemous lexeme, let $\gamma : \ell_1 \to \ell_2$ be the lexical identity path, and let $m_1 \in M_{\ell_1}$ be a meaning not shared by $\ell_2$. Then:
\[
\Lambda_{\mathsf{poly}}(m_1, \gamma)
\]
is inhabited.
\end{theorem}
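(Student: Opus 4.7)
The plan is to inhabit $\Lambda_{\mathsf{poly}}(m_1, \gamma)$ by assembling its three components directly. The substance of the theorem is not whether such a horn can exist in the abstract, but whether the specific hypothesis ``$m_1$ is a meaning not shared by $\ell_2$'' delivers data of the right shape for each slot of the triple. I would therefore treat the proof as an unpacking exercise: identify how each hypothesis localizes into either $\Coh$ or $\Gap$ in the ruptured meaning fibration $p : \mathcal{M} \to \mathcal{L}$.

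First I would dispatch the two coherence components. Since $m_1$ is given as a meaning of $\ell_1$ in the fiber $M_{\ell_1}$, it is by convention an element of $\Coh^M_0$ — the definition of a ruptured meaning fibration stipulates that meanings introduced as inhabitants of fibers are coherently witnessed. This supplies $\gamma_t : \coh{}{(m_1 \in M_{\ell_1})}$. Similarly, $\gamma : \ell_1 \to \ell_2$ is given as \emph{the} lexical identity path, hence a coherent 1-simplex in $\Coh^L_1$, giving $\gamma_p : \coh{}{(\gamma : \ell_1 \to \ell_2)}$.

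The substantive step is producing the gap witness $\omega : \gap{}{(\transport{\gamma}{m_1} \in M_{\ell_2})}$. By Proposition~\ref{prop:ruptured-path-lifting} applied to $p$, the lifting problem $(m_1, \gamma)$ sits in exactly one of three states: coherent, gapped, or open. The hypothesis that $m_1$ is ``not shared'' must be read as a positive certificate of non-coherence — the distributional distance referenced in the earlier aside, a categorial mismatch between ``financial institution'' and any candidate meaning in $M_{\ell_2}$, or an empirical dissociation between the senses. I would argue that precisely this certificate is an element of $\Gap^M$ for the lifting problem, and that Exclusion (Definition~\ref{def:exclusion}) then forbids any competing coherent lift, pinning the status at \emph{gapped}.

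The main obstacle is bridging the informal phrase ``not shared'' to the formal gap witness. Read weakly, ``not shared'' only rules out exhibition of a coherent lift, which places the lifting problem in the \emph{open} state — and an open lifting problem does not inhabit $\Lambda_{\mathsf{poly}}$. To force the gapped state, the theorem must be read as presupposing that the ruptured meaning fibration $p : \mathcal{M} \to \mathcal{L}$ is \emph{equipped} with gap data for genuinely polysemous lexemes: the semantic model itself marks these lifting problems as $\Gap$-witnessed via whichever measure is appropriate to the domain. Under that reading the proof collapses to a citation of the three ingredients of the fibration; the real theorem, I would argue, is the modeling claim that natural polysemy supplies such data, not the logical claim that a triple can be formed once the data is given.
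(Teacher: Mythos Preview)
Your proof is correct and follows the same approach as the paper's: assemble the three components of the triple directly from the hypotheses, with the two coherence witnesses coming immediately from the setup and the gap witness coming from the ``not shared'' hypothesis. Your final paragraph is in fact sharper than the paper's own proof, which simply asserts that ``the lifting problem is gapped'' without addressing the open-versus-gapped distinction you raise; the paper implicitly relies on exactly the modeling assumption you make explicit, namely that the ruptured meaning fibration comes equipped with gap data for genuinely disjoint senses.
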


\begin{proof}
By hypothesis, $m_1$ is a coherent meaning of $\ell_1$, and $\gamma$ is a coherent lexical path. But there is no meaning in $M_{\ell_2}$ to which $m_1$ transports—the fibers are disjoint at this meaning. The lifting problem is gapped.
\end{proof}

\subsection{The Gap Witness}

What is the gap witness in a polysemy horn?

\begin{definition}[Semantic Incompatibility]\label{def:semantic-incompat}
A \keyterm{semantic incompatibility witness} between meanings $m \in M_\ell$ and $m' \in M_{\ell'}$ is evidence that no path in $M$ connects $m$ to $m'$ over the lexical path $\gamma : \ell \to \ell'$.
\end{definition}

In concrete terms, the gap witness records:
\begin{itemize}[leftmargin=2em]
\item The semantic features that $m$ has and $m'$ lacks (or vice versa)
\item The conceptual domains that are incompatible
\item Any formal structure (ontological, taxonomic, inferential) that blocks connection
\end{itemize}

\begin{example}[Bank Gap Witness]\label{ex:bank-gap}
For ``bank,'' the gap witness between ``financial institution'' and ``river edge'' might record:
\begin{itemize}[leftmargin=2em]
\item Distinct conceptual domains: FINANCE vs.\ GEOGRAPHY
\item Incompatible taxonomic positions: INSTITUTION vs.\ LANDFORM
\item Non-overlapping inferential roles: one involves money, the other water
\end{itemize}
\end{example}

The gap witness is not merely ``these meanings are different.'' It is structured data explaining \textit{why} transport fails.

\section{Systematic Polysemy}

Not all polysemy is complete rupture. Some polysemous words exhibit regular patterns.

\begin{definition}[Systematic Polysemy]\label{def:systematic}
A \keyterm{systematic polysemy} is a regular pattern relating meanings across lexical items. Examples:
\begin{itemize}[leftmargin=2em]
\item CONTAINER/CONTENTS: ``The bottle fell'' (container) vs.\ ``Finish the bottle'' (contents)
\item ORGANIZATION/BUILDING: ``The company fired him'' vs.\ ``The company is on Main Street''
\item PROCESS/RESULT: ``The construction took years'' vs.\ ``The construction is impressive''
\end{itemize}
\end{definition}

Systematic polysemy is characterized by \textit{partial} transport.

\begin{definition}[Partial Transport]\label{def:partial-transport}
Let $p : \mathcal{M} \to \mathcal{L}$ be a ruptured meaning fibration. A lexical path $\gamma : \ell \to \ell'$ exhibits \keyterm{partial transport} if:
\begin{itemize}[leftmargin=2em]
\item Some meanings in $M_\ell$ transport coherently along $\gamma$
\item Other meanings in $M_\ell$ are gapped for transport along $\gamma$
\end{itemize}
\end{definition}

\begin{example}[Container/Contents]\label{ex:container}
Let $\ell_1$ = ``bottle'' (container reading) and $\ell_2$ = ``bottle'' (contents reading). The systematic polysemy provides a pattern $\Pi_{\text{C/C}}$ relating these readings.

For the meaning $m_{\text{shape}}$ = ``cylindrical vessel shape'':
\[
\coh{}{(\transport{\Pi_{\text{C/C}}}{m_{\text{shape}}} \in M_{\ell_2})}
\]
The shape transports (the contents are in something with that shape).

For the meaning $m_{\text{material}}$ = ``made of glass'':
\[
\gap{}{(\transport{\Pi_{\text{C/C}}}{m_{\text{material}}} \in M_{\ell_2})}
\]
The material does not transport (the contents are not made of glass).
\end{example}

Systematic polysemy is thus a ruptured fibration with structured partial transport: some semantic features cross the gap, others do not.

\section{Semantic Fields and Coherence}

We now consider the global structure of meaning.

\begin{definition}[Semantic Field]\label{def:semantic-field}
A \keyterm{semantic field} is a connected component of the meaning space $M$ (or a connected subcomplex).
\end{definition}

Meanings within a semantic field are connected by paths—they cohere. Meanings in different fields are not connected—there are gaps.

\begin{proposition}[Field Structure]\label{prop:field-structure}
In a ruptured meaning fibration $p : \mathcal{M} \to \mathcal{L}$:
\begin{enumerate}[leftmargin=2em]
\item Within a semantic field, transport along short lexical paths is typically coherent
\item Across semantic fields, transport is typically gapped
\item Polysemy corresponds to lexical identity paths crossing field boundaries
\end{enumerate}
\end{proposition}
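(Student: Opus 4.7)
The plan is to treat Proposition~\ref{prop:ruptured-path-lifting} as the workhorse: each of the three claims is a statement about the outcome of lifting a lexical path, and that proposition already provides the trichotomy (coherent, gapped, open). The task then reduces to showing that the connected-component structure of $M$ forces the appropriate outcome in each case, once the hedge ``typically'' is interpreted as a genericity condition.

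For claim (1), I would restrict the meaning fibration to a single semantic field $F \subseteq M$, obtaining $p|_F : F \to p(F)$. Because $F$ is a connected subcomplex, any two coherent meanings in $F$ are linked by a path, and by Proposition~\ref{prop:core-kan} the coherent core of $F$ inherits the relevant Kan-like filling properties. Given a lexical path $\gamma : \ell \to \ell'$ whose target token $\ell'$ also admits a meaning in $F$, and provided $\gamma$ stays short enough that we do not leave $p(F)$, the restricted fibration yields a coherent lift. This is precisely the work that ``short'' does: it guarantees that the lifting problem remains internal to a single field.

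For claim (2), suppose $m \in M_\ell$ lies in field $F$ while every meaning in $M_{\ell'}$ lies in a different field $F'$. Any coherent lift $\tilde{\gamma}$ of $\gamma$ starting at $m$ would produce a path in $M$ from $F$ to $F'$, contradicting that $F$ and $F'$ are distinct connected components. Hence no coherent filler exists. To upgrade this from ``no coherent filler'' to ``gap-witnessed'' rather than merely ``open,'' I would argue in the spirit of Definition~\ref{def:gap-witness} that the disconnection is itself positive data: the specific record that $m$ and every candidate endpoint inhabit separate components is a certificate of non-liftability. Claim (3) then follows immediately: polysemy by hypothesis packages multiple non-reducible meanings at the same lexeme, non-reducibility is precisely inhabitation of distinct fields, and the lexical identity path between two polysemous tokens therefore crosses a field boundary, so by (2) the corresponding polysemy horn of Definition~\ref{def:polysemy-horn} is inhabited.

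The hard part will be legitimising the word ``typically.'' The underlying definitions (connected component, ruptured fibration) are crisp, but real semantic fields are graded rather than sharp, and some lexical paths are long enough to drift across fuzzy boundaries. An honest proof should either strengthen the hypotheses to strict connectivity conditions (yielding the three claims as theorems in the form above) or introduce a notion of ``generic'' path---perhaps measured by distributional distance, as suggested in the aside following Example~\ref{ex:polysemy}---and prove the claims outside a small set of pathological configurations. I would take the first route in the body of the argument and defer the distributional refinement to a remark, so that the proposition is established cleanly in the idealised case while preserving the looser reading the ``typically'' was meant to accommodate.
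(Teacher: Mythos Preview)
The paper does not supply a proof of this proposition at all: it is stated as a heuristic observation (the word ``typically'' signals this), and the text moves directly to Example~\ref{ex:crane} by way of illustration. So there is no paper argument to compare against, and your proposal is already more than the paper attempts.

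Your overall architecture---reducing each claim to the lifting trichotomy of Proposition~\ref{prop:ruptured-path-lifting} and then using the connected-component definition of a semantic field to force the outcome---is the right shape, and your frank treatment of ``typically'' as the real difficulty is exactly correct. One point to tighten in claim~(1): connectivity of a field $F$ tells you that any two coherent meanings in $F$ are joined by \emph{some} path in $M$, but it does not by itself guarantee that a \emph{given} lexical path $\gamma$ admits a coherent lift landing in $F$. Your appeal to Proposition~\ref{prop:core-kan} does not close this, since that proposition only says that horns which already have coherent fillers in the ambient space retain them in the core; it does not manufacture fillers. What you actually need is an assumption that $p|_F$ is (close to) a genuine Kan fibration over $p(F)$---which is plausibly what ``within a field'' is meant to encode, but should be stated as a hypothesis rather than derived from connectivity.

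Your argument for claim~(2) is clean: a coherent lift would yield a path in $M$ between distinct components, which is impossible. The step from ``no coherent filler'' to ``gap-witnessed'' is, as you note, a modelling choice rather than a deduction, and flagging it as such is honest. Claim~(3) then does follow as you say. Given that the proposition is descriptive rather than a sharp theorem, your proposed strategy---prove the idealised version under explicit connectivity and fibrancy hypotheses, then remark on the distributional relaxation---is the right way to make it rigorous.
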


\begin{example}[Crane]\label{ex:crane}
The word ``crane'' has meanings in three semantic fields:
\begin{itemize}[leftmargin=2em]
\item ANIMAL: a large wading bird
\item MACHINE: a lifting device
\item ACTION: to stretch one's neck
\end{itemize}

Within each field, meanings cohere (types of crane-birds, types of crane-machines). Across fields, transport is gapped—there is no coherent path from the bird to the machine in meaning space, despite the lexical identity.
\end{example}

\subsection{Metaphor as Partial Coherence}

Metaphor provides an interesting intermediate case.

\begin{definition}[Metaphorical Bridge]\label{def:metaphor}
A \keyterm{metaphorical bridge} between semantic fields $F_1$ and $F_2$ is a partially coherent path: some semantic features transport, others are gapped.
\end{definition}

\begin{example}[Crane Metaphor]\label{ex:crane-metaphor}
The machine ``crane'' derives from the bird by metaphor. The bridge carries:
\begin{itemize}[leftmargin=2em]
\item SHAPE: long, thin, with a bend (coherent)
\item MOTION: lifting, extending (coherent)
\item ANIMACY: living creature (gapped—the machine is not alive)
\item HABITAT: wetlands (gapped—the machine is in construction sites)
\end{itemize}
\end{example}

Metaphorical bridges are thus ruptured paths: partially coherent, partially gapped. The coherent parts are the ground of the metaphor; the gapped parts are where the metaphor ``breaks.''

\section{Compositional Semantics}

We extend the analysis to compositional (phrasal) meaning.

\begin{definition}[Compositional Meaning Fibration]\label{def:compositional}
A \keyterm{compositional meaning fibration} is a ruptured fibration $p : \mathcal{M} \to \mathcal{S}$ where:
\begin{itemize}[leftmargin=2em]
\item $\mathcal{S}$ is a space of syntactic structures (phrases, sentences)
\item For each structure $s \in S_0$, the fiber $M_s$ is the space of meanings of $s$
\item Syntactic transformations induce paths in $\mathcal{S}$
\end{itemize}
\end{definition}

\begin{example}[Scope Ambiguity]\label{ex:scope}
Consider ``Every student read a book.'' This sentence has (at least) two meanings:
\begin{itemize}[leftmargin=2em]
\item $m_1$: Each student read some book (possibly different books)
\item $m_2$: There is a single book that every student read
\end{itemize}

These arise from different scope configurations of the quantifiers. The surface syntax is identical, but the meanings are not connected by a coherent path—moving from $m_1$ to $m_2$ requires changing the logical structure.

In a ruptured meaning fibration, we can have both meanings in the fiber, with a gap between them: $\gap{}{(\text{path from } m_1 \text{ to } m_2)}$.
\end{example}

\begin{example}[Transformation Gaps]\label{ex:transformation}
Consider active-passive transformation:
\begin{itemize}[leftmargin=2em]
\item $s_1$ = ``The dog bit the man''
\item $s_2$ = ``The man was bitten by the dog''
\end{itemize}

There is a syntactic path $\tau : s_1 \to s_2$ (the transformation). For the core propositional meaning, transport is coherent:
\[
\coh{}{(\transport{\tau}{m_{\text{core}}} \in M_{s_2})}
\]

But for pragmatic features (focus, topic), transport may be gapped:
\[
\gap{}{(\transport{\tau}{m_{\text{focus}}} \in M_{s_2})}
\]

The focused element differs between active and passive.
\end{example}

\section{The Functoriality of Meaning}

We revisit functoriality (Chapter~\ref{ch:transport}) in the semantic context.

\begin{theorem}[Semantic Drift]\label{thm:semantic-drift}
Let $\gamma_1 : \ell_0 \to \ell_1$ and $\gamma_2 : \ell_1 \to \ell_2$ be lexical paths (e.g., steps in historical change or metaphorical extension). Let $m \in M_{\ell_0}$. It is possible that:
\begin{enumerate}[leftmargin=2em]
\item $\transport{\gamma_1}{m}$ is coherent (some meaning transports to $\ell_1$)
\item $\transport{\gamma_2}{\transport{\gamma_1}{m}}$ is coherent (it further transports to $\ell_2$)
\item $\transport{\gamma_2 \cdot \gamma_1}{m}$ is gapped (direct transport from $\ell_0$ to $\ell_2$ fails)
\end{enumerate}
\end{theorem}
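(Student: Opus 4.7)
The plan is to prove this as an existence claim: I will exhibit a concrete ruptured meaning fibration in which all three conditions hold simultaneously, then verify that the exclusion condition of Definition~\ref{def:ruptured-sset-formal} is not violated. This reduces to showing that the functoriality horn $\Lambda_{\mathsf{func}}(m, \gamma_1, \gamma_2)$ of Definition~\ref{def:functoriality-horn} can be inhabited in the semantic setting, which means the construction of Chapter~\ref{ch:transport} descends faithfully to compositional meaning fibrations.

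First I would set up the fibration, following Example~\ref{ex:crane-metaphor}: take $\ell_0, \ell_1, \ell_2$ to be tokens for ``crane'' (bird), ``crane'' (machine), and ``crane'' (verb), with lexical paths $\gamma_1, \gamma_2$ corresponding to the two established metaphorical bridges. I would equip the fibers with structured meanings labeled by feature bundles (shape, motion, animacy, reaching), so that each transport step is interpretable as a partial coherence in the sense of Definition~\ref{def:partial-transport}. Second, I would verify condition (1) by producing a coherent lift along $\gamma_1$: the features preserved are exactly the shape/thinness/hook features, so $\transport{\gamma_1}{m}$ lands in a coherent 0-simplex of $M_{\ell_1}$. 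Third, I would verify condition (2) similarly: the machine-sense meaning, now carrying motion/extension features (acquired along $\gamma_1$), admits a coherent lift along $\gamma_2$ into $M_{\ell_2}$ by the same mechanism.

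The crux is condition (3). Here I would argue that the direct composite path $\gamma_2 \cdot \gamma_1$ must transport features of $m$ that are simultaneously compatible with $M_{\ell_2}$, and that none of the features originally carried by $m$ (specifically, the ANIMAL-field features) are compatible with the verb sense. Consequently I can consistently place the lifting problem $(m, \gamma_2 \cdot \gamma_1)$ into $\Gap^{\mathcal{M}}$. The main obstacle will be showing that this gap marking does not violate exclusion, given that the stepwise composite witnesses a coherent 2-simplex ``nearby''; the subtle point is that the coherent iterated lift $\tilde\gamma_2 \circ \tilde\gamma_1$ and the direct composite lift are distinct simplices of $M$ whose intermediate vertex differs, so no single $\sigma \in \Coh^M$ fills the gapped horn of the direct composite.

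Finally I would note that this argument generalizes: whenever intermediate meanings acquire features along transport that differ from the originating features of $m$, the functoriality equation of HoTT fails and a functoriality horn appears. The crane example is thus a semantic witness for the general phenomenon foreshadowed by Example~\ref{ex:cumulative}, and the theorem follows from the existence of this single instance together with the generality of the construction on ruptured meaning fibrations.
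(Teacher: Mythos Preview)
Your proposal is sound and in fact more careful than the paper's own treatment. The paper does not prove this theorem: it states the result, remarks that it is ``the semantic analog of cumulative obstruction,'' and follows immediately with Example~\ref{ex:nice} (the drift of ``nice'' from Latin \textit{nescius} through to modern English) as illustration, without constructing the ruptured fibration explicitly or checking exclusion.

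Your approach differs in two respects. First, you instantiate with the crane example rather than ``nice''; both serve, but yours has the advantage that the relevant feature bundles are already catalogued in Example~\ref{ex:crane-metaphor}, making the partial-transport structure concrete. Second, and more substantively, you confront the exclusion question that the paper elides: why does the stepwise lift not furnish a coherent filler for the direct lifting problem $(m, \gamma_2 \cdot \gamma_1)$? Your resolution is correct in spirit, though the phrasing ``distinct simplices whose intermediate vertex differs'' is slightly off---a single 1-simplex has no intermediate vertex. The cleaner statement is that $\tilde\gamma_1$ and $\tilde\gamma_2$ are two separate coherent 1-simplices, and obtaining from them a single coherent 1-simplex over $\gamma_2 \cdot \gamma_1$ requires filling a 2-horn in $\mathcal{M}$; in a ruptured complex one is free to leave that 2-horn open (or to exclude any would-be composite from $\Coh^M_1$), so no conflict with the exclusion condition arises. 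The freedom recorded in Proposition~\ref{prop:fiber}---that a fixed underlying simplicial set admits many ruptured structures---is what makes the existence claim essentially immediate, and is presumably why the paper offers only an example in place of an argument.
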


This is the semantic analog of cumulative obstruction. Meaning can travel stepwise through intermediate forms but ``lose track'' of its origin.

\begin{example}[Nice]\label{ex:nice}
The English word ``nice'' has drifted through meanings:
\begin{itemize}[leftmargin=2em]
\item Latin \textit{nescius} ``ignorant'' $\to$ 
\item Old French \textit{nice} ``foolish, simple'' $\to$
\item Middle English ``coy, reserved'' $\to$
\item Early Modern English ``precise, fastidious'' $\to$
\item Modern English ``pleasant, agreeable''
\end{itemize}

At each step, some features transport. But direct transport from ``ignorant'' to ``pleasant'' is gapped—there is no coherent semantic path preserving meaning features across the whole chain.
\end{example}

\section{Semantic Coherence as Kan Condition}

We can now state precisely what it means for a semantic domain to be ``well-behaved.''

\begin{definition}[Semantically Coherent Domain]\label{def:coherent-domain}
A \keyterm{semantically coherent domain} is a ruptured meaning fibration $p : \mathcal{M} \to \mathcal{L}$ where $\Gap = \emptyset$—i.e., all transport is coherent.
\end{definition}

This is rare. Most natural language meaning is not globally coherent; polysemy, metaphor, and drift introduce gaps.

\begin{definition}[Locally Coherent Domain]\label{def:local-coherent}
A \keyterm{locally coherent domain} is a ruptured meaning fibration where:
\begin{itemize}[leftmargin=2em]
\item Within each semantic field, transport is coherent
\item Across field boundaries, transport may be gapped
\end{itemize}
\end{definition}

Most semantic domains are locally coherent: technical vocabularies, tightly defined ontologies, formal languages. Natural language as a whole is not even locally coherent—polysemy creates gaps within what might seem like single fields.

\section{Summary}

Semantic structure is naturally modeled as a ruptured meaning fibration:

\begin{itemize}[leftmargin=2em]
\item \textbf{Polysemy} is the transport horn: lexical identity path with gapped meaning transport
\item \textbf{Systematic polysemy} is partial transport: some features cross, others gap
\item \textbf{Semantic fields} are coherent components of meaning space
\item \textbf{Metaphor} is a partially coherent bridge between fields
\item \textbf{Semantic drift} is failure of functoriality: stepwise transport with gapped composite
\end{itemize}

The gap witness in semantic horns is structured: it records which features fail to transport, which domains are incompatible, why meaning diverges despite formal identity.

What OHTT provides is not a computational theory of meaning—that requires additional machinery. It provides a \textit{structural} theory: a framework for saying precisely where meaning coheres and where it ruptures, with rupture as positive data rather than mere failure.

\begin{aside}
What this lets us say: ``The word `bank' is not ambiguous in the sense of having multiple values for a single variable. It is polysemous in the sense of inhabiting a ruptured fibration: the lexical identity exists, the meanings exist, and between them is a witnessed gap. The gap is as much part of the structure as the meanings themselves.''
\end{aside}


\chapter{Logical Obstruction}
\label{ch:logical}

\begin{flushright}
\textit{The substitution is valid.\\
The derivation does not follow.}\\[0.5ex]
{\small — Resource failure}
\end{flushright}

\bigskip

We turn finally to logic itself. The phenomenon of certified underivability—where we can prove that a judgment is \textit{not} derivable, not merely that we have failed to derive it—is a transport horn. Substitutions provide paths between contexts; derivable judgments are fibers; the failure of derivability to transport along substitution is witnessed as gap. This chapter develops the framework for resource-sensitive and substructural logics.

\section{Derivability as Fibration}

We begin by modeling the structure of a type theory or logic as a fibration.

\begin{definition}[Context Category]\label{def:context-cat}
Let $\mathbb{C}$ be the category of contexts for a type theory, with:
\begin{itemize}[leftmargin=2em]
\item Objects: contexts $\Gamma, \Delta, \Xi, \ldots$
\item Morphisms: substitutions $\sigma : \Delta \to \Gamma$ (context morphisms)
\end{itemize}
\end{definition}

Substitutions act on judgments: if $\jdg{\Gamma}{J}$ is a judgment in context $\Gamma$ and $\sigma : \Delta \to \Gamma$ is a substitution, then $J[\sigma]$ is the judgment in context $\Delta$ obtained by applying $\sigma$.

\begin{definition}[Derivability Fibration]\label{def:deriv-fibration}
The \keyterm{derivability fibration} is a functor $\mathcal{D} : \mathbb{C}^{\mathrm{op}} \to \mathbf{Set}$ (or $\mathbf{Kan}$ for the simplicial version) where:
\begin{itemize}[leftmargin=2em]
\item $\mathcal{D}(\Gamma)$ is the set (or space) of judgments derivable in context $\Gamma$
\item For $\sigma : \Delta \to \Gamma$, the map $\mathcal{D}(\sigma) : \mathcal{D}(\Gamma) \to \mathcal{D}(\Delta)$ sends $J$ to $J[\sigma]$
\end{itemize}
\end{definition}

For a well-behaved type theory (MLTT, HoTT), the derivability fibration is a ``Kan object'': substitution preserves derivability. This is the content of the \textit{substitution lemma}.

\begin{lemma}[Substitution]\label{lem:substitution}
In MLTT: if $\jdg{\Gamma}{t : A}$ and $\sigma : \Delta \to \Gamma$ is a valid substitution, then $\jdg{\Delta}{t[\sigma] : A[\sigma]}$.
\end{lemma}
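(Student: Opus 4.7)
The plan is to proceed by mutual induction on the derivation of all four judgment forms of MLTT simultaneously. Because context formation, type formation, typing, and definitional equality are mutually defined, the substitution lemma cannot be established for typing in isolation; the statement must be packaged so that, given a substitution $\sigma : \Delta \to \Gamma$ with $\Delta \ctx$, the following hold jointly: (i) $\jdg{\Delta}{\istype{A[\sigma]}}$ whenever $\jdg{\Gamma}{\istype{A}}$; (ii) $\jdg{\Delta}{t[\sigma] : A[\sigma]}$ whenever $\jdg{\Gamma}{t : A}$; and (iii) $\jdg{\Delta}{s[\sigma] \defeq t[\sigma] : A[\sigma]}$ whenever $\jdg{\Gamma}{s \defeq t : A}$. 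The lemma stated in the text is clause (ii); the other clauses are carried along because they are required to close the induction.

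The induction proceeds by cases on the last rule of the derivation. The base case is the variable rule: if $x : A$ is declared in $\Gamma$, then $x[\sigma]$ is by definition the term $\sigma$ assigns to $x$, and the well-formedness of $\sigma : \Delta \to \Gamma$ requires precisely that this term inhabit $A[\sigma]$ in $\Delta$. Structural rules (weakening, conversion) reduce to routine bookkeeping. The substantive cases are the binding rules---$\Pi$-formation, $\lambda$-abstraction, $\Sigma$-formation, identity types, and their eliminators---where $\sigma$ must be transported under a binder. There one extends $\sigma$ to $\sigma^{+} : \Delta.\, A[\sigma] \to \Gamma.\, A$ by appending the fresh variable, applies the inductive hypothesis in the extended context, and repackages the result.

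The main obstacle is a companion \emph{substitution composition lemma} on raw syntax, $J[\sigma][\tau] = J[\sigma \circ \tau]$, which the elimination rules force one to invoke. For example, $\Pi$-elimination yields $\jdg{\Gamma}{f(u) : B[u/x]}$; to obtain the substituted conclusion one must verify $(B[u/x])[\sigma] = (B[\sigma^{+}])[u[\sigma]/x]$, and analogous identities appear for the computation rules of $\Sigma$, $\mathsf{Id}$, and universe codes. The standard resolution is to prove the composition lemma first, as a purely syntactic statement on preterms, and then quote it at every binding case of the typed induction. The delicate point is avoiding circularity between the well-formedness of $\sigma^{+}$ and the lemma itself: one must show that extending a valid substitution by a fresh variable yields a valid substitution \emph{before} the lemma is available in that case, which is arranged by stratifying the induction so that clause (i) for $A$ is established prior to using $\sigma^{+}$ in clauses (ii) and (iii). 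Once this scaffolding is in place, every rule commutes with substitution essentially by construction.
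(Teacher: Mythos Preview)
Your proof sketch is correct and is the standard textbook argument for the substitution lemma in MLTT: mutual induction over all judgment forms, with the substitution-composition identity on raw syntax established first to handle the binder cases. The paper, however, does not prove this lemma at all; it is stated as a well-known background fact about MLTT and immediately glossed (``derivability transports along substitution'') before the chapter moves on to resource-sensitive settings where the lemma fails. So there is nothing in the paper to compare your argument against---you have supplied a proof where the author simply cited the result.
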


The substitution lemma says: derivability transports along substitution. In our language: the derivability fibration is a Kan fibration.

\section{Resource-Sensitive Failure}

The substitution lemma fails in resource-sensitive logics.

\begin{definition}[Resource-Sensitive Context]\label{def:resource-context}
A \keyterm{resource-sensitive context} is a context $\Gamma$ where variables carry resource annotations:
\begin{itemize}[leftmargin=2em]
\item Linear: must be used exactly once
\item Affine: may be used at most once
\item Relevant: must be used at least once
\item Exponential: may be used arbitrarily (the standard case)
\end{itemize}
\end{definition}

In linear type theory, a substitution $\sigma : \Delta \to \Gamma$ may fail to preserve derivability if it duplicates or discards linear resources.

\begin{example}[Linear Failure]\label{ex:linear-failure}
Let $\Gamma = x : A$ (linear) and let $J = (x, x) : A \times A$.

In ordinary type theory, $J$ is derivable if $\Gamma = x : A$. But in linear type theory, $J$ requires using $x$ twice—violating linearity. So $\jdg{\Gamma}{J}$ is \textit{not} derivable.

Now let $\sigma : \Delta \to \Gamma$ where $\Delta = y : A, z : A$ and $\sigma = [x \mapsto y]$.

Then $J[\sigma] = (y, y) : A \times A$ is still underivable in $\Delta$—we still use $y$ twice.

But consider $\sigma' : \Delta \to \Gamma$ with $\sigma' = [x \mapsto y]$ and $\Delta' = y : A$ (exponential). Now $J[\sigma'] = (y, y)$ is derivable in $\Delta'$.

The point: derivability depends on resource structure, not just the term.
\end{example}

\subsection{The Ruptured Derivability Fibration}

\begin{definition}[Ruptured Derivability Fibration]\label{def:ruptured-deriv}
A \keyterm{ruptured derivability fibration} is a ruptured fibration $\mathcal{D} : \mathbb{C}^{\mathrm{op}} \to \mathbf{rsSet}$ where:
\begin{itemize}[leftmargin=2em]
\item $\mathcal{D}(\Gamma)$ is a ruptured simplicial set of judgments
\item $\Coh^{\mathcal{D}(\Gamma)}$ contains the derivable judgments
\item For a substitution $\sigma : \Delta \to \Gamma$, the induced map may have:
  \begin{itemize}
  \item Coherent transport: $J$ derivable in $\Gamma$ implies $J[\sigma]$ derivable in $\Delta$
  \item Gapped transport: $J$ derivable in $\Gamma$ but $J[\sigma]$ witnessed underivable in $\Delta$
  \item Open transport: status of $J[\sigma]$ in $\Delta$ undetermined
  \end{itemize}
\end{itemize}
\end{definition}

\begin{definition}[Derivability Horn]\label{def:deriv-horn}
A \keyterm{derivability horn} is a transport horn in the ruptured derivability fibration:
\[
\Lambda_{\mathsf{deriv}}(J, \sigma) \;\defn\; \coh{}{(\jdg{\Gamma}{J})} \times \coh{}{(\sigma : \Delta \to \Gamma)} \times \gap{}{(\jdg{\Delta}{J[\sigma]})}
\]
\end{definition}

This horn witnesses: the judgment is derivable, the substitution is valid, but the substituted judgment is \textit{not} derivable.

\section{Certified Underivability}

The gap in a derivability horn is not ``we haven't found a proof.'' It is a \textit{certificate} that no proof exists.

\begin{definition}[Underivability Certificate]\label{def:underiv-cert}
An \keyterm{underivability certificate} for a judgment $\jdg{\Gamma}{J}$ is a formal proof that no derivation of $J$ in $\Gamma$ exists.
\end{definition}

Underivability certificates arise in several ways:

\begin{enumerate}[leftmargin=2em]
\item \textbf{Resource exhaustion}: The required resources are not available in the context
\item \textbf{Type mismatch}: The judgment is ill-typed in the context
\item \textbf{Logical impossibility}: The judgment contradicts the context
\item \textbf{Decidability}: A decision procedure returns ``no''
\end{enumerate}

\begin{example}[Linear Underivability Certificate]\label{ex:linear-cert}
For the judgment $\jdg{x : A}{\; (x, x) : A \times A}$ in linear type theory, the underivability certificate is:
\begin{quote}
``The term $(x, x)$ requires two uses of the linear variable $x$. The context provides exactly one. By the resource-counting invariant, no derivation exists.''
\end{quote}
\end{example}

This certificate is positive data—a proof of non-derivability, not merely absence of a derivability proof.

\section{Substructural Logics}

The phenomena generalize across substructural logics.

\begin{definition}[Substructural Logic]\label{def:substructural}
A \keyterm{substructural logic} is a logic that restricts one or more of the structural rules:
\begin{itemize}[leftmargin=2em]
\item \textbf{Exchange}: $\Gamma, A, B, \Delta \vdash C$ implies $\Gamma, B, A, \Delta \vdash C$
\item \textbf{Weakening}: $\Gamma \vdash C$ implies $\Gamma, A \vdash C$
\item \textbf{Contraction}: $\Gamma, A, A \vdash C$ implies $\Gamma, A \vdash C$
\end{itemize}
\end{definition}

Each restriction creates transport horns:

\begin{proposition}[Structural Horns]\label{prop:structural-horns}
In a substructural logic:
\begin{enumerate}[leftmargin=2em]
\item Failure of weakening creates horns: a judgment derivable in $\Gamma$ may be gapped in $\Gamma, A$
\item Failure of contraction creates horns: a judgment derivable with $A, A$ may be gapped with single $A$
\item Failure of exchange creates horns: a judgment derivable with $A, B$ may be gapped with $B, A$
\end{enumerate}
\end{proposition}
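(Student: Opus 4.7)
The plan is to construct, for each of the three structural rules, an explicit inhabitant of a derivability horn $\Lambda_{\mathsf{deriv}}(J, \sigma)$ in the sense of Definition~\ref{def:deriv-horn}. In every case the substitution $\sigma : \Delta \to \Gamma$ will be the ``structural morphism'' associated with the failing rule---a weakening projection, a contraction diagonal, or an exchange swap---viewed as a morphism in the underlying context category of variable assignments. The content of the proposition is that this morphism remains coherent as a context map while failing to transport derivability in the substructural logic; that mismatch is precisely what the horn records, and exclusion is automatic because the underivability certificates we construct are invariant-based and no derivation can simultaneously exist and violate a preserved invariant.

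I would treat case~(1) first as a template. Fix a relevant logic, in which every hypothesis must be used at least once. Choose $\Gamma$ and $J$ so that $\jdg{\Gamma}{J}$ has a derivation consuming every variable in $\Gamma$; this supplies $\coh{}{(\jdg{\Gamma}{J})}$. Let $\Delta = \Gamma, y : A$ and take $\sigma$ to be the weakening projection $\pi : \Delta \to \Gamma$. Then $\coh{}{(\pi : \Delta \to \Gamma)}$ holds because $\pi$ is a well-formed context morphism. For the gap leg I would exhibit an underivability certificate in the sense of Definition~\ref{def:underiv-cert}: any putative derivation of $\jdg{\Delta}{J[\pi]}$ would leave $y$ unused, violating the relevance invariant, and that invariant is preserved by every inference rule; this certificate is the gap witness $\gap{}{(\jdg{\Delta}{J[\pi]})}$. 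Cases~(2) and~(3) follow the same template. For contraction, work in linear logic, take $\Gamma = x_1 : A, x_2 : A$ and $J = (x_1, x_2) : A \times A$, let $\Delta = x : A$, and set $\sigma$ to be the contraction diagonal $x_1, x_2 \mapsto x$; the double-use of $x$ in $J[\sigma]$ violates the linear resource count and furnishes the certificate. For exchange, work in an ordered (non-commutative) logic, take $\Gamma = x:A, y:B$ with a $J$ whose derivation requires $x$ to precede $y$, and let $\sigma$ be the swap into $\Delta = y:B, x:A$; the order-invariant violation is the certificate.

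The hard part will be the middle step---establishing that the structural morphism $\sigma$ is coherent as a context substitution even in the very logic where the corresponding rule fails. In strict linear type theory the diagonal $x_1, x_2 \mapsto x$ is not an internally admissible substitution, so one must be careful about which category of contexts is in play. I would resolve this by factoring the derivability fibration $\mathcal{D}$ through a forgetful functor to a category of \emph{syntactic substitutions}---variable-to-term assignments respecting typing alone---in which all three structural morphisms are legitimate arrows. Coherence of $\sigma$ is then attributed to the assignment-level base category, while gap is detected in the proof-theoretic fibers above it. The proposition then follows because the horn inhabits precisely the seam between these two layers: the variable assignment exists and is coherent, but proof-theoretic transport along it is witnessed blocked, and the resource or order invariant is the structural content of the gap witness rather than an absence of search.
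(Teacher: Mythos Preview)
Your approach is correct and matches the paper's own proof, which is a one-sentence sketch: each structural rule, when valid, provides a substitution-like operation; when the rule is restricted, the corresponding transport may fail; the failure is witnessed by the rule violation. You go further than the paper by naming and resolving the subtlety that the structural morphism (e.g., the contraction diagonal) need not be an admissible substitution \emph{inside} the substructural logic itself---a point the paper simply elides with the phrase ``substitution-like operation,'' whereas your proposed factorization through a raw syntactic-substitution base category makes explicit the two-layer structure the horn is recording.
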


\begin{proof}
Each structural rule, when valid, provides a substitution-like operation. When the rule is restricted, the corresponding transport may fail. The failure is witnessed by the rule violation.
\end{proof}

\subsection{Bunched Implications}

\begin{definition}[BI Logic]\label{def:bi}
\keyterm{Bunched implications} (BI) is a substructural logic with two context-forming operations:
\begin{itemize}[leftmargin=2em]
\item Additive (``,''\!): $\Gamma, \Delta$ shares resources
\item Multiplicative (``$;$''): $\Gamma ; \Delta$ separates resources
\end{itemize}
and corresponding implications $\to$ (additive) and $\mathrel{-\mkern-6mu*}$ (multiplicative, ``magic wand'').
\end{definition}

BI has rich transport structure:

\begin{example}[Separation Failure]\label{ex:separation}
Let $J = $ ``$x$ and $y$ point to disjoint heap regions.''

In context $\Gamma = x : \mathsf{Ptr}, y : \mathsf{Ptr}$, this is underdetermined—$x$ and $y$ might alias.

In context $\Gamma ; \Delta = (x : \mathsf{Ptr}) ; (y : \mathsf{Ptr})$, the separation guarantees disjointness, so $J$ is derivable.

The ``inclusion'' $\Gamma ; \Delta \to \Gamma, \Delta$ is a context morphism, but transport of $J$ along it is gapped: the separation required for $J$ is lost.
\end{example}

\section{Modal Type Theory}

Modal type theories provide another source of transport horns.

\begin{definition}[Modal Context]\label{def:modal-context}
A \keyterm{modal context} is a context with multiple zones, each governed by a modality:
\[
\Gamma = \Gamma_1 \mid \Gamma_2 \mid \cdots \mid \Gamma_n
\]
Variables in different zones have different availability.
\end{definition}

\begin{example}[Necessity Modality]\label{ex:necessity}
In S4-style modal type theory, contexts have the form $\Gamma \mid \Delta$ where:
\begin{itemize}[leftmargin=2em]
\item Variables in $\Gamma$ are ``necessary''—available in all worlds
\item Variables in $\Delta$ are ``contingent''—available only in the current world
\end{itemize}

The $\Box$ modality requires its argument to depend only on $\Gamma$:
\[
\frac{\Gamma \mid \cdot \;\vdash\; t : A}{\Gamma \mid \Delta \;\vdash\; \mathsf{box}(t) : \Box A}
\]

A judgment derivable in $\Gamma \mid \cdot$ may not be derivable in $\Gamma \mid \Delta$ if it must be boxed—the contingent variables in $\Delta$ cannot be used.
\end{example}

\begin{definition}[Modal Transport Horn]\label{def:modal-horn}
A \keyterm{modal transport horn} arises when:
\begin{itemize}[leftmargin=2em]
\item A judgment $J$ is derivable in context $\Gamma \mid \Delta$
\item A modality-respecting substitution $\sigma$ changes the modal structure
\item The transported judgment $J[\sigma]$ is gapped due to modal constraints
\end{itemize}
\end{definition}

\section{Dependent Type Theory with Constraints}

We consider dependent types augmented with constraints.

\begin{definition}[Constrained Context]\label{def:constrained}
A \keyterm{constrained context} is a pair $(\Gamma, \Phi)$ where:
\begin{itemize}[leftmargin=2em]
\item $\Gamma$ is an ordinary context
\item $\Phi$ is a set of constraints (equations, inequalities, predicates) assumed to hold
\end{itemize}
\end{definition}

Derivability in $(\Gamma, \Phi)$ may depend on the constraints. Substitutions may invalidate constraints.

\begin{example}[Refinement Types]\label{ex:refinement}
Let $\Gamma = n : \mathbb{N}$ and $\Phi = \{n > 0\}$.

The judgment $\jdg{(\Gamma, \Phi)}{\mathsf{pred}(n) : \mathbb{N}}$ (predecessor of $n$) is derivable—the constraint ensures $n \geq 1$.

Now let $\sigma = [n \mapsto 0]$. This substitution is well-typed but violates the constraint.

The transported judgment is gapped: in context $(\Gamma[\sigma], \Phi[\sigma])$, the term $\mathsf{pred}(0) : \mathbb{N}$ is undefined or erroneous.
\end{example}

The gap witness here is the constraint violation: $0 > 0$ is false.

\section{Proof Irrelevance and Truncation}

Even in standard HoTT, certain constructions create transport-like phenomena.

\begin{definition}[Propositional Truncation]\label{def:truncation}
The \keyterm{propositional truncation} $\|A\|$ of a type $A$ is the type with:
\begin{itemize}[leftmargin=2em]
\item Constructor: $|a| : \|A\|$ for any $a : A$
\item Path: $|a| = |b|$ for any $a, b : A$
\end{itemize}
Thus $\|A\|$ is a proposition (has at most one element up to path).
\end{definition}

Truncation creates obstructions to ``escaping'':

\begin{proposition}[Truncation Obstruction]\label{prop:truncation}
Let $A$ be a type with $a : A$. We have $|a| : \|A\|$.

To construct a term of type $B$ from $|a|$, we need a function $f : A \to B$ where $B$ is a proposition.

If $B$ is not a proposition, there is no general way to construct an element of $B$ from $|a|$—the truncation has ``forgotten'' which $a$ we had.
\end{proposition}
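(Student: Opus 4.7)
The plan is to leverage the constructors and elimination rule of propositional truncation from Definition~\ref{def:truncation}. The first assertion, $|a| : \|A\|$, is immediate from the point constructor. For the second clause, given a proposition $B$ and a map $f : A \to B$, I would define $g : \|A\| \to B$ by recursion, setting $g(|a|) \defeq f(a)$. The path constructor of $\|A\|$ (which makes the truncation a proposition) requires $g(|a|) = g(|a'|)$ for all $a, a' : A$, and this is automatic precisely because $B$ is a proposition: any two of its elements are propositionally equal, so the coherence obligation is discharged trivially. This is the standard recursion principle for propositional truncation.

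For the third (negative) clause, I would establish the impossibility by an explicit counterexample. Take $A = B = \mathbf{2}$, the two-element type with distinct points $0, 1 : \mathbf{2}$, so $\mathbf{2}$ is not a proposition. A ``general construction'' extracting an element of $B$ from $|a|$ would, at minimum, furnish a uniform map $g : \|\mathbf{2}\| \to \mathbf{2}$ compatible with $|\cdot|$, satisfying $g(|b|) = b$ for each $b : \mathbf{2}$. But the path constructor of $\|\cdot\|$ produces a path $p : |0| =_{\|\mathbf{2}\|} |1|$, and applying $\mathsf{ap}(g)$ to $p$ yields a path $0 =_{\mathbf{2}} 1$, contradicting the distinctness of the two points. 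Hence no such $g$ can exist.

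The main obstacle is giving rigorous meaning to the informal phrase ``no general way''. I would formalize it via the evaluation map $\mathsf{ev} : (\|A\| \to B) \to (A \to B)$ defined by precomposition with $|\cdot|$. The recursion principle above shows $\mathsf{ev}$ is an equivalence whenever $B$ is a proposition; the counterexample shows that for $A = B = \mathbf{2}$ the identity $\mathrm{id}_{\mathbf{2}} : \mathbf{2} \to \mathbf{2}$ lies outside the image of $\mathsf{ev}$, so the converse fails for non-propositional $B$. Viewed through OHTT, this is itself a transport horn in the truncation fibration $p : \|A\| \twoheadleftarrow A$: the element $a$ coheres, the truncated path $|a| = |a'|$ coheres, but recovery of a non-propositional target from $|a|$ is gapped—the structural content of Proposition~\ref{prop:truncation}.
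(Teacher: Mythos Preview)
Your proof is correct and supplies what the paper omits: the proposition is stated there without proof, as a standard HoTT fact about the elimination principle for propositional truncation. Your argument—the recursion principle for the positive clause, the $\mathbf{2}$-section counterexample for the negative clause, and the formalization of ``no general way'' via the precomposition map $\mathsf{ev} : (\|A\| \to B) \to (A \to B)$—is the canonical justification and is more detailed than anything the paper offers. Your closing OHTT gloss also aligns with the paper's own follow-up remark immediately after the proposition, which models the obstruction as gapped lifts in the fibration $\Sigma_{a:A}\, B(a) \to \|A\|$ when $B$ is not propositional.
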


In OHTT terms: we can model this as a ruptured fibration where the projection $\Sigma_{a:A} B(a) \to \|A\|$ has gapped lifts when $B$ is not propositional.

\begin{example}[Choice]\label{ex:choice}
The axiom of choice (in type-theoretic form) says:
\[
\left(\prod_{x:A} \|B(x)\|\right) \to \left\|\prod_{x:A} B(x)\right\|
\]

This is not provable in HoTT without additional axioms. The obstruction is that we cannot coherently lift the family of truncated witnesses to a single witness.

In OHTT, we can express this as: the section problem for a certain fibration over $A$ is gapped (or at least not coherent).
\end{example}

\section{Decidability and Gaps}

When a type theory has decidable type-checking, underivability can always be certified.

\begin{theorem}[Decidability Implies Gap Witnesses]\label{thm:decidability}
Let $\mathcal{T}$ be a type theory with decidable type-checking. Then for any context $\Gamma$ and judgment $J$:
\begin{enumerate}[leftmargin=2em]
\item Either $\jdg{\Gamma}{J}$ is derivable, or
\item There exists a gap witness (the ``no'' output of the decision procedure)
\end{enumerate}
Thus the ruptured derivability fibration has no open judgments: every judgment is either coherent or gapped.
\end{theorem}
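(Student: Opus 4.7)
The plan is to turn the decision procedure itself into a witness-producing machine, using totality to cover every judgment and determinism to enforce Exclusion. Fix a decision procedure $D$ for $\mathcal{T}$ that on input $(\Gamma, J)$ halts and returns either a derivation $\pi$ of $\jdg{\Gamma}{J}$ or a refutation $\omega$ certifying that no such derivation exists; by the hypothesis of decidable type-checking, $D$ is total and deterministic. The proof then has three movements: extract coherence witnesses from the yes branch, extract gap witnesses from the no branch, and verify that the outputs respect the exclusion condition of Definition~\ref{def:ruptured-deriv}.

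For the yes branch, a derivation $\pi$ in the standard calculus is directly a coherence witness $\gamma_\pi : \coh{\Gamma}{J}$, since by Definition~\ref{def:mltt-fragment} the inference rules of $\mathcal{T}$ are exactly the coherence constructors of its embedding into OHTT. For the no branch, the refutation $\omega$ returned by $D$ satisfies Definition~\ref{def:underiv-cert}—it is a formal proof of underivability, not a record of failed search—and so yields $\omega : \gap{\Gamma}{J}$. Exclusion holds because if both a derivation and a refutation existed for the same $(\Gamma, J)$, applying $\omega$ to $\pi$ in the metatheory would produce $\bottom$; equivalently, determinism of $D$ forbids the two outputs from both occurring. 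Finally, totality of $D$ supplies the trichotomy collapse: every judgment receives either a $\vdash^+$ or a $\vdash^-$ verdict, so the open state of Definition~\ref{def:open} is never reached, and the ruptured derivability fibration $\mathcal{D}$ has empty open locus.

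The main obstacle I anticipate is the middle movement: justifying that the ``no'' output of an abstract decision procedure genuinely constitutes a gap witness in the structural sense OHTT demands, rather than an opaque Boolean. A bare decidability hypothesis gives only a bit of information, whereas $\gap{\Gamma}{J}$ is required to be proof-relevant positive data. The natural remedy is to restrict attention to decision procedures arising from inversion-complete bidirectional algorithms: these proceed by exhaustive case analysis on the possible shapes of a derivation, and a ``no'' branch is literally a finite refutation tree enumerating the ways each inference rule fails to apply. Reifying this tree as the gap witness gives the structural content required, and one must check that this reification is stable under the substitution action so that the resulting assignment is functorial—i.e., is an actual section of $\mathcal{D}$ in $\mathbf{rsSet}$. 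I would expect this functoriality check to be the longest verification, since it requires the decision procedure to commute (up to reindexing) with substitution, a condition that holds for well-designed algorithmic type-checkers but is not automatic from decidability alone.
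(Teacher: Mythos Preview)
Your core argument is exactly the paper's: run the decision procedure, take ``yes'' as a coherence witness and ``no'' as a gap witness, and observe that totality eliminates the open case. The paper's proof is literally two sentences to this effect and does not separately verify Exclusion or invoke Definition~\ref{def:mltt-fragment}; your first two paragraphs are a more careful expansion of the same idea.

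Your third paragraph raises a genuine issue the paper simply does not engage with: whether a bare ``no'' from an abstract decision procedure has the proof-relevant structural content that $\gap{\Gamma}{J}$ is supposed to carry, and whether the resulting assignment is functorial under substitution. These are fair worries, but the paper's proof treats the ``no'' output as a gap witness by fiat (pointing implicitly to Definition~\ref{def:underiv-cert}) and says nothing about functoriality or the internal shape of the certificate. So your proposed restriction to inversion-complete bidirectional checkers and the functoriality verification would strengthen the result beyond what the paper actually establishes; they are not needed to match the paper's level of rigor, which is deliberately informal here.
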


\begin{proof}
Run the decision procedure. If it returns ``yes,'' we have a derivation (coherence witness). If it returns ``no,'' we have a certificate of underivability (gap witness).
\end{proof}

For undecidable type theories (most interesting ones), open judgments exist: we may not know whether a given judgment is derivable.

\section{Summary}

Logical structure is naturally modeled as a ruptured derivability fibration:

\begin{itemize}[leftmargin=2em]
\item \textbf{Resource failure} creates transport horns: derivable judgments may not transport along substitutions that mishandle resources

\item \textbf{Substructural restrictions} create systematic gaps: failures of weakening, contraction, exchange are witnessed obstructions

\item \textbf{Modal constraints} create transport horns: modality-respecting operations may block derivability

\item \textbf{Constraint violations} are gap witnesses: substitutions that violate refinement constraints are witnessed as failing

\item \textbf{Truncation} creates obstructions: escape from propositional truncation is gapped for non-propositional targets

\item \textbf{Decidability} eliminates openness: decidable theories have only coherent or gapped judgments, no open ones
\end{itemize}

The gap witnesses in logical horns are formal certificates: resource counts, constraint violations, modal mismatches, decision procedure outputs. They are \textit{proofs of underivability}, not mere absence of proofs.

What OHTT provides is a uniform framework for these phenomena. Resource-sensitive, substructural, modal, and constrained type theories all exhibit transport failure—and OHTT makes this failure first-class, witnessed, and structured.

\begin{aside}
What this lets us say: ``The substitution lemma is not a universal truth about type theory. It is a property of the coherent fragment. In the full ruptured structure, substitution may fail to preserve derivability—and this failure is as mathematically tractable as success. Underivability is not silence; it is testimony.''
\end{aside}

\part{Contrasts}
\label{part:contrasts}


\chapter{OHTT and HoTT}
\label{ch:ohtt-hott}

\begin{flushright}
\textit{The same letters, rearranged.\\
The same structures, ruptured.}\\[0.5ex]
{\small — The relation}
\end{flushright}

\bigskip

We now make precise the relationship between Open Horn Type Theory and Homotopy Type Theory. The slogan is simple: HoTT is the coherent fragment of OHTT. But the details illuminate what each theory can and cannot express.

\section{The Relationship}

\label{princ:ohtt-hott}
OHTT and HoTT share the same raw syntax of dependent type theory, but differ in what they guarantee:
\begin{enumerate}[leftmargin=2em]
\item \textbf{HoTT} includes total transport as a theorem: given $t : B(x)$ and $p : \Id{A}{x}{y}$, the term $\transport{p}{t} : B(y)$ is always derivable. This corresponds semantically to the Kan filling condition.
\item \textbf{OHTT} does not assume total transport. Transport may succeed (coherent), fail with witness (gapped), or be undetermined (open). This corresponds to dropping the Kan condition.
\end{enumerate}
HoTT is thus recovered from OHTT by adjoining a totality principle ensuring all relevant horns have coherent fillers.

This is the precise sense of the slogan: \textit{OHTT = HoTT $-$ Kan Filling + Gap Witnesses}.

\subsection{The Coherent Fragment}

\begin{definition}[Coherent Fragment]\label{def:coherent-fragment-formal}
The \keyterm{coherent fragment} of OHTT is the subcategory $\mathbf{OHTT}^+$ consisting of:
\begin{itemize}[leftmargin=2em]
\item Contexts $\Gamma$ containing only $\vdash^+$ declarations
\item Judgments of the form $\coh{\Gamma}{J}$
\item Derivations using only rules that produce $\vdash^+$ conclusions from $\vdash^+$ premises
\end{itemize}
\end{definition}

\begin{proposition}[Coherent Fragment and HoTT]\label{prop:coherent-hott}
The coherent fragment $\mathbf{OHTT}^+$ is a \textit{weakening} of $\mathbf{HoTT}$: every $\mathbf{OHTT}^+$ derivation is valid, but $\mathbf{OHTT}^+$ does not include the totality principles (transport, Kan filling) that HoTT provides. Adding those principles recovers HoTT.
\end{proposition}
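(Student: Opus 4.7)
The plan is to establish the three claims of the proposition in order: first, every $\mathbf{OHTT}^+$ derivation is valid when read as a HoTT derivation; second, $\mathbf{OHTT}^+$ genuinely lacks the totality principles; third, adjoining those principles recovers all of HoTT.

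For the first claim, I would define a syntactic translation $\tau : \mathbf{OHTT}^+ \to \mathbf{HoTT}$ by setting $\tau(\coh{\Gamma}{J}) \defn \jdg{\tau(\Gamma)}{\tau(J)}$, where $\tau$ acts as the identity on underlying judgment forms. The main work is induction on derivations: each $\vdash^+$-rule of OHTT must have a corresponding HoTT rule whose conclusion matches $\tau$ applied to the OHTT conclusion. By Definition~\ref{def:coherent-fragment-formal} together with Definition~\ref{def:mltt-fragment}, the $\vdash^+$-rules of the coherent fragment are precisely the MLTT rules viewed as coherence constructors, so the translation is a bijection on rules and the induction is routine. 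The Exclusion Law is vacuous in the absence of $\vdash^-$ judgments. As an independent semantic check, Theorem~\ref{thm:soundness} restricted to ruptured Kan complexes with $\Gap = \emptyset$ (which are ordinary Kan complexes by Proposition~\ref{prop:kan-as-ruptured}) shows soundness against the standard HoTT model.

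For the second claim, I would identify the canonical missing principle: transport derivable from the J-eliminator. Given $\Fam{B}{A}$, a coherent term $\coh{\Gamma}{(t : B(x))}$, and a coherent path $\coh{\Gamma}{(p : \Id{A}{x}{y})}$, HoTT guarantees $\transport{p}{t} : B(y)$; $\mathbf{OHTT}^+$ has no rule producing $\coh{\Gamma}{(\transport{p}{t} : B(y))}$ from these premises, precisely because such a rule is a totality requirement on horn filling in the sense of Proposition~\ref{prop:transport-horn-filling}. To prove non-derivability in $\mathbf{OHTT}^+$, I would exhibit a countermodel: a ruptured Kan complex supporting the identity-type constructors but with a witnessed open lifting horn, then verify by induction on $\mathbf{OHTT}^+$ derivations that no $\vdash^+$-rule can generate a coherent witness for that horn. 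Adjoining the axiom scheme
\[
\infer{\coh{\Gamma}{(t : B(x))} \qquad \coh{\Gamma}{(p : \Id{A}{x}{y})}}{\coh{\Gamma}{(\transport{p}{t} : B(y))}}
\]
together with its higher-dimensional analogues — one filling axiom per horn shape $\hornsimp{n}{k}$ in the induced simplicial structure on identity types — then recovers HoTT: each added axiom corresponds to a Kan filling condition, and collectively they force the interpretation to land in ordinary Kan complexes, making $\tau$ a bijection on derivable judgments.

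The main obstacle is the third step: precisely specifying the family of totality axioms so that it captures \emph{all} of HoTT, not merely 1-dimensional transport. One must give filling principles for every $\hornsimp{n}{k}$ at every level of the identity-type tower, and show that the resulting axiomatic extension of $\mathbf{OHTT}^+$ is jointly equivalent on the semantic side to the Kan condition (Proposition~\ref{prop:kan-as-ruptured}). Showing the axioms suffice to derive univalence and the computation rules for higher inductive types — rather than merely the J-eliminator — requires unpacking HoTT's machinery horn-by-horn, and is essentially a translation exercise once the correspondence between type-theoretic rules and simplicial fillings is set up carefully. The countermodel construction in the second step is the other delicate point, as it must be simultaneously rich enough to model $\mathbf{OHTT}^+$ and rupture-respecting enough to refute transport.
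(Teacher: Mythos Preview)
The paper does not prove this proposition at all: it is stated and immediately followed by the slogan ``HoTT is exactly the part of OHTT that never mentions gaps, augmented with totality,'' with no \texttt{proof} environment. So there is no paper proof to compare against; your proposal is strictly more than what the paper supplies.

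Your three-step decomposition is the right shape, and the first and third steps are essentially what one would have to do to make the paper's assertion rigorous. Two small points are worth flagging. First, the phrase ``witnessed open lifting horn'' in your countermodel is self-contradictory in the paper's terminology: \emph{open} means precisely \emph{unwitnessed} (Definition~\ref{def:open-horn}). What you need for the countermodel is simply a ruptured Kan complex in which the relevant lifting horn has no element of $\Coh$ filling it; whether it is open or gapped is irrelevant to refuting derivability in $\mathbf{OHTT}^+$, since $\mathbf{OHTT}^+$ never inspects $\Gap$. Second, and more substantively, the paper never writes down a complete rule system for OHTT --- it says the MLTT rules ``are coherence constructors'' (Definition~\ref{def:mltt-fragment}) but gives no exhaustive list and no gap-introduction rules. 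Your induction on $\mathbf{OHTT}^+$ derivations therefore presupposes a precise calculus that the paper only gestures at. This is not a flaw in your argument so much as an acknowledgement that the proposition, as stated in the paper, is closer to a design principle than a theorem with a checkable proof; your outline is what a genuine proof would look like once the calculus is fixed.
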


Thus: \textit{HoTT is exactly the part of OHTT that never mentions gaps, augmented with totality.}

\section{What HoTT Cannot Say}

We enumerate the expressive limitations of HoTT that OHTT overcomes.

\begin{proposition}[No Witnessed Non-Identity]\label{prop:no-witnessed}
In HoTT, there is no judgment form for ``$x$ and $y$ are witnessed as non-identical.'' The closest approximation is:
\[
\neg(\Id{A}{x}{y}) \;\defn\; (\Id{A}{x}{y}) \to \Empty
\]
which says: ``any path from $x$ to $y$ leads to absurdity.''
\end{proposition}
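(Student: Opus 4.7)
The proposition has two components: (i) the judgment forms of HoTT contain no primitive for witnessed non-identity, and (ii) the type $(\Id{A}{x}{y}) \to \Empty$ is the closest internal approximation. I would handle them in order.

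For (i), the plan is to enumerate the judgment forms of HoTT as fixed in Definition~\ref{def:mltt-fragment}, namely $\Gamma \ctx$, $\jdg{\Gamma}{\istype{A}}$, $\jdg{\Gamma}{t : A}$, and $\jdg{\Gamma}{s \defeq t : A}$, and then argue by structural induction on derivations that every derivable judgment has one of these four forms. Since none of them carries a polarity marker analogous to $\vdash^{-}$, and since the inference rules of HoTT (including univalence and the path-constructor rules for higher inductive types) are all coherence constructors producing positive judgments from positive premises, no derivation can witness non-identity directly. This uses only the inductive generation of derivable judgments together with inspection of the HoTT rule schemas, as already set up in Definition~\ref{def:coherent-fragment} and Proposition~\ref{prop:coherent-hott}.

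For (ii), I would first note that $(\Id{A}{x}{y}) \to \Empty$ is constructible in HoTT by $\Pi$-formation applied to the identity type and the empty type, so it is well-formed. A term $f$ of this type is a function sending any hypothetical path to absurdity; under the propositions-as-types reading, this is the Curry--Howard encoding of negation via $\Pi$ and $\Empty$. To formalise ``closest approximation'' I would appeal to a universal property of $\Empty$: being initial among propositions, $P \to \Empty$ is the universal arrow from $P$ into falsity, so any internally expressible HoTT type whose inhabitation is incompatible with a path from $x$ to $y$ factors through $(\Id{A}{x}{y}) \to \Empty$. Concretely, for every $N$ such that $(\Id{A}{x}{y}) \times N \to \Empty$ is derivable, currying exhibits a canonical map $N \to ((\Id{A}{x}{y}) \to \Empty)$, so $(-) \to \Empty$ is terminal among types encoding incompatibility with a given path.

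The main obstacle is making ``closest approximation'' precise without smuggling in OHTT vocabulary. Part (i) is routine induction over rule schemas; the delicate step is arguing that the universal-property characterisation exhausts what HoTT can internally say about non-identity, since one must rule out exotic encodings via higher inductive types, universe reflection, or truncation. I expect this to require the most care, and would approach it by reducing to the propositions-as-types translation together with the fact that $\Empty$ has no introduction rule: any HoTT derivation producing an inhabitant of $\Empty$ must eliminate an assumed term of some other type, so the ``witnessing'' content always takes the hypothetical form $\Pi$-into-$\Empty$ rather than a primitive negative judgment.
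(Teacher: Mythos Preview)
The paper provides no proof for this proposition: it is stated without a proof environment and treated as a definitional observation, followed only by two bullet points contrasting the function type $\neg(\Id{A}{x}{y})$ with the direct witness $\gap{\Gamma}{(p : \Id{A}{x}{y})}$. The content is meant to follow immediately from inspecting the list of HoTT judgment forms in Definition~\ref{def:mltt-fragment}.

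Your approach is therefore considerably more elaborate than the paper's. For part~(i), structural induction on derivations is correct but unnecessary: the claim concerns judgment \emph{forms}, which are fixed prior to any derivation. One simply observes that the four forms $\Gamma\ctx$, $\istype{A}$, $t:A$, $s\defeq t:A$ contain no polarity marker; no induction is required. For part~(ii), your universal-property reading of ``closest approximation''---that any $N$ incompatible with $\Id{A}{x}{y}$ factors through $(\Id{A}{x}{y})\to\Empty$ via currying---is a genuine addition not present in the paper. The paper uses ``closest'' informally, meaning only that $\to\Empty$ is the standard intuitionistic encoding of negation; it makes no terminality claim. Your formalization is correct as far as it goes, and the currying argument is sound, but note that it establishes that $(\Id{A}{x}{y})\to\Empty$ is \emph{weakest} (most general) among incompatibility types, which is one reasonable precisification of ``closest'' but not obviously the only one. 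The concern you raise about exotic encodings via HITs or truncation is legitimate if one wants full rigor, but the paper does not engage with it; at the level of the paper's argument, the proposition is simply a remark about syntax.
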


This is weaker than witnessed non-identity:
\begin{itemize}[leftmargin=2em]
\item $\neg(\Id{A}{x}{y})$ is a \textit{function type}—to inhabit it, we must construct a function
\item $\gap{\Gamma}{(p : \Id{A}{x}{y})}$ is a \textit{direct witness}—positive evidence of non-coherence
\end{itemize}

\begin{proposition}[No Transport Failure]\label{prop:no-transport-failure}
In HoTT, given $t : B(x)$ and $p : \Id{A}{x}{y}$, the term $\transport{p}{t} : B(y)$ is \textit{always} well-defined. There is no way to express ``transport fails.''
\end{proposition}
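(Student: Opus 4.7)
The plan is to prove the proposition in two movements: first I would establish that transport is a total operation in HoTT, then argue that HoTT's judgment forms lack the expressive resources to assert transport failure.

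For totality, the argument is the standard derivation of transport from the elimination rule for identity types. Given $\Fam{B}{A}$, I would define a function
\[
\mathsf{tr}_B : \Prod{x}{A}{\Prod{y}{A}{\Id{A}{x}{y} \to B(x) \to B(y)}}
\]
by path induction on the identity type, supplying the diagonal case $\mathsf{tr}_B(x, x, \mathsf{refl}_x) \defeq \mathrm{id}_{B(x)}$. The J-rule then extends this uniformly to all paths. Thus whenever $t : B(x)$ and $p : \Id{A}{x}{y}$ are coherent, a closed derivation of $\transport{p}{t} : B(y)$ exists and is definitionally well-formed; in particular $\transport{\mathsf{refl}_x}{t} \defeq t$. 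This is the easy half and is entirely routine.

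For non-expressibility, the cleanest route is syntactic followed by semantic. Syntactically, I would enumerate the judgment forms of HoTT ($\Gamma \ctx$, $\istype{A}$, $t : A$, $s \defeq t : A$) and observe that all are positive: each asserts the existence of a derivation, never its non-existence. The only HoTT-internal approximation to denial is intuitionistic negation $\neg P \defeq P \to \Empty$, whose inhabitation demands constructing a function from $P$-proofs to $\Empty$. Applied to transport, this would amount to ``assuming $\transport{p}{t} : B(y)$ leads to $\Empty$''; but by the previous step that term already exists, so any such function refutes consistency. Hence the intuitionistic reading cannot capture transport failure without collapsing the theory.

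The main obstacle, and the step I expect to be hardest, is making ``no way to express'' mathematically rigorous rather than merely a comment on the grammar. My plan is to descend to the simplicial semantics: in any Kan complex interpretation, every lifting horn admits a coherent filler by Proposition~\ref{prop:transport-horn-filling}, so there is no semantic content for a gap witness to denote. Formally, I would argue that for any HoTT-definable predicate over coherent triples $(B, t, p)$, the value is uniformly determined by the coherent fragment and therefore cannot distinguish ``transport succeeds here'' from ``transport fails here,'' since the latter has empty extension in every Kan model. I expect this to require either a careful conservativity lemma over the explicit list of judgment forms, or (preferably) an appeal to initiality identifying the HoTT syntactic category with its classifying Kan-complex model. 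The semantic route gives the cleanest conclusion: the proposed gap witness simply has no denotation in HoTT's universe of meanings.
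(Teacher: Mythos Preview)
Your proposal is correct, and in fact it supplies considerably more than the paper does: the paper states this proposition without proof, treating it as an observation about HoTT's expressive limits. The nearest thing to an argument appears earlier in Chapter~\ref{ch:transport}, where the paper remarks that transport is derivable from the J-rule and that in the simplicial model transport corresponds to horn-filling (Proposition~\ref{prop:transport-horn-filling}). Your two-movement structure---totality via path induction, then non-expressibility via the syntactic inventory of judgment forms backed by the Kan semantics---is a genuine proof where the paper offers only a statement.

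One small clarification is worth making. In your non-expressibility discussion you write that intuitionistic negation ``applied to transport'' would amount to ``assuming $\transport{p}{t} : B(y)$ leads to $\Empty$.'' But $\transport{p}{t} : B(y)$ is a \emph{judgment}, not a type, so $\neg(\transport{p}{t} : B(y))$ is not even well-formed in HoTT; there is nothing to negate. This actually strengthens your point---the failure is not that the negation is uninhabitable but that it cannot even be written down---so you should phrase it that way rather than as a consistency argument. Your syntactic enumeration of judgment forms already carries the weight here; the detour through $\neg$ is unnecessary and slightly misleading as stated.

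Your semantic argument via Kan models is sound but heavier than the paper's level of rigor requires. Given that the paper itself is content with the observation, the syntactic half of your second movement (all HoTT judgment forms are positive assertions of derivation) already suffices; the initiality/conservativity machinery you anticipate needing is overkill for what is, in the paper's treatment, essentially a remark.
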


In OHTT, we can have $\gap{\Gamma}{(\transport{p}{t} : B(y))}$—witnessed transport failure.

\begin{proposition}[No Obstruction Data]\label{prop:no-obstruction}
In HoTT, obstructions (monodromy, characteristic classes, etc.) are computed externally and represented as ordinary types or terms. There is no internal representation of ``this horn is unfillable.''
\end{proposition}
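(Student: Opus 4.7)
The plan is to prove this by fixing what ``internal representation'' must mean and then showing that no HoTT construction satisfies the definition. First I specify the criterion: an internal representation of unfillability for a horn $h : \hornsimp{n}{k} \to X$ would be a type $E(h)$, built from the type formers of HoTT, such that (a) inhabitants of $E(h)$ are positive data specific to the obstruction, and (b) $E(h)$ is inhabited precisely when $h$ fails to fill in the intended simplicial model. I would then argue that the type formers of HoTT---$\Pi$, $\Sigma$, $\Id$, coproducts, $\Empty$, $\Unit$, universes, higher inductive types---cannot jointly supply such an $E(h)$.

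The only nontrivial candidate is negation. Let $F(h)$ denote the type of fillers of $h$; a natural proposal is $E(h) \defn F(h) \to \Empty$. I would show this fails criterion (a) in two ways. First, an inhabitant is a function that \emph{would} transform a hypothetical filler into absurdity; it encodes a derivation schema, not the specific obstruction, so no deck transformation, no semantic incompatibility, and no resource count appears in the term. Second, $F(h) \to \Empty$ is always a proposition, so it admits no proof-relevant refinement at higher levels---whereas genuine obstructions, as developed in Chapters~\ref{ch:ruptured-sset}--\ref{ch:ruptured-fib}, carry structure at every level of the tower.

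The semantic half of the argument confirms impossibility via the Kan interpretation. Since HoTT is sound over Kan complexes, every horn in the model \emph{does} fill. Hence for any syntactically defined $E(h)$, its interpretation sees only fillable horns, and inhabitation of $E(h)$ cannot track unfillability in any model where HoTT holds. An internal representation would therefore have to be either vacuously empty in every Kan model, or it would refer to meta-theoretic data not expressible in the syntax; in either case, criterion (b) fails.

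The main obstacle will be making precise the contrast between ``positive data'' and ``function into $\Empty$,'' since both are terms of some HoTT type. I would settle this by invoking the higher-horn construction from Chapter~I: genuine gap witnesses admit iterated refinement, forming types $\hornlevel{1}{\omega}{\omega'}{\omega''}$ that track coherence and rupture \emph{among} obstructions themselves, while $\neg F(h)$, being propositionally truncated, supports no such tower. This structural asymmetry---rather than any single missing inference rule---is what makes obstruction data fundamentally inexpressible inside the coherent calculus.
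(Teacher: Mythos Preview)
The paper supplies no proof for this proposition; it is stated as a bare observation and the text immediately moves on to the contrasting remark about OHTT. Your proposal is therefore not competing against a paper proof but attempting to fill a gap the paper leaves open.

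Your argument is more substantive than anything the paper offers, and the semantic half is essentially correct and sufficient on its own: HoTT is sound over Kan complexes, every horn in a Kan complex fills, so any internally definable type $E(h)$ intended to detect unfillability is vacuously uninhabited in every model where HoTT holds. That single paragraph discharges the proposition as stated.

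The syntactic half has a wobble. You set up the problem around a horn $h : \hornsimp{n}{k} \to X$ and a ``type of fillers'' $F(h)$, but these are objects of the simplicial \emph{semantics}, not of HoTT's internal language. Inside HoTT there are no horns; there are identity types, transport, and composition. So the phrase ``$E(h)$ built from the type formers of HoTT'' sits awkwardly next to a parameter $h$ that is not itself a HoTT object. To tighten this you would either work entirely semantically (which you already do, successfully), or rephrase the internal candidates in terms of the HoTT constructions that \emph{correspond} to horn-filling---transport along a path, concatenation of identities---and argue that their negations carry no obstruction-specific content.

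The material on proof-relevance and the level-1 horn tower is interesting but goes beyond what the proposition requires. The claim is a negative expressiveness statement, not a claim about the fine structure of gap witnesses; the Kan-model argument alone is adequate, and is already more than the paper provides.
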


In OHTT, gap witnesses \textit{are} obstruction data—first-class terms recording why coherence fails.

\section{The Kan Condition}

The technical heart of the difference is the Kan condition.

\begin{definition}[Kan Condition (Semantic)]\label{def:kan-semantic}
A simplicial set $X$ satisfies the \keyterm{Kan condition} if every horn has a filler:
\[
\forall n \geq 1,\; \forall 0 \leq k \leq n,\; \forall h : \hornsimp{n}{k} \to X,\; \exists \sigma : \simp{n} \to X \text{ extending } h
\]
\end{definition}

\begin{theorem}[HoTT Models are Kan]\label{thm:hott-kan}
In the simplicial model of HoTT, every type is interpreted as a Kan complex. The Kan condition is what makes:
\begin{itemize}[leftmargin=2em]
\item Path inversion work (every path has an inverse)
\item Path composition work (paths compose)
\item Transport work (terms move along paths)
\item Higher coherences work (homotopies exist when needed)
\end{itemize}
\end{theorem}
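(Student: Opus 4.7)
The plan is to treat the first clause (types as Kan complexes) as standard foundational content from the Kapulkin-Lumsdaine simplicial model, which I would cite rather than reprove, and devote the proof to the second clause: exhibiting each of the four operations as a horn-filling problem whose solution is guaranteed by the Kan condition. This makes the theorem a precise bridge between syntax and geometry, and the same horn-filling perspective is exactly what Chapter~\ref{ch:transport} uses to locate where OHTT departs from HoTT.

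For path inversion, given a 1-simplex $\gamma : x \to y$ in a Kan complex $X$, I would build the horn $\hornsimp{2}{0} \to X$ whose present edges are $\gamma$ on $[0,1]$ and the degenerate simplex $s_0(x)$ on $[0,2]$; the Kan filler $\sigma : \simp{2} \to X$ has $d_0(\sigma)$ a 1-simplex from $y$ to $x$, which serves as $\gamma^{-1}$. For path composition of $\gamma : x \to y$ with $\delta : y \to z$, I would form the horn $\hornsimp{2}{1} \to X$ with $\gamma$ and $\delta$ as the two edges meeting at vertex $1$, and extract the missing face $d_1(\sigma)$ as the composite $\delta \cdot \gamma$. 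For transport, I would invoke Proposition~\ref{prop:transport-horn-filling} directly: a Kan fibration $p : E \to B$ lifts any $\hornsimp{1}{1}$-style problem defined by a point $t$ over $x$ and a path $\gamma$ in $B$, and the endpoint of the lifted 1-simplex is $\transport{\gamma}{t}$. For higher coherences---associativity, unit laws, the left and right inverse laws---I would note that each is witnessed by a filler of a higher horn $\hornsimp{n}{k}$ whose present faces encode the coherence data of the surrounding paths or homotopies, and the Kan condition at level $n$ supplies the filler uniformly.

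The main obstacle is not the horn-filling correspondences themselves, which are essentially mechanical, but the unstated dependency of the first clause on the soundness of the simplicial model: one needs that the type-formers of dependent type theory---$\Pi$, $\Sigma$, identity types, and universes---preserve Kan fibrancy. The hardest case is $\Pi$-types, whose fibrancy rests on the pushout-product axiom and the Quillen model structure on $\mathbf{sSet}$; this is the substantive content of Kapulkin-Lumsdaine, which I would cite as a black box. With that in hand, the four enumerated consequences follow by the uniform horn-filling argument sketched above, and the contrast with OHTT becomes precisely the question of which of these fillers are available, gap-witnessed, or open.
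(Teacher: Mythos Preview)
Your proposal is correct, but note that the paper gives \emph{no proof} for this theorem: it is stated as a recall of standard facts about the simplicial model and immediately followed by the contrasting Theorem~\ref{thm:ohtt-ruptured}. In the paper's architecture this statement functions as background, not as something to be argued for; the substantive content (that transport is horn-filling) has already been isolated as Proposition~\ref{prop:transport-horn-filling}, and the rest is treated as folklore.

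What you supply is therefore strictly more than the paper offers, and it is the right more: your horn-filling correspondences for inversion ($\hornsimp{2}{0}$ with a degenerate edge), composition ($\hornsimp{2}{1}$), and transport (the fibration lifting problem) are the standard ones, and your identification of the real work as the fibrancy of type-formers, deferred to Kapulkin--Lumsdaine, is exactly where the difficulty lies. If anything, your proposal could serve as the proof the paper omits; it is consonant with the paper's own use of horn-filling as the hinge between HoTT and OHTT.
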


\begin{theorem}[OHTT Models are Ruptured Kan]\label{thm:ohtt-ruptured}
In the simplicial model of OHTT, types are interpreted as ruptured Kan complexes. The Kan condition is replaced by the trichotomy: every horn is coherently filled, gap-witnessed, or open.
\end{theorem}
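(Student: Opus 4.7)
The plan is to build the proof on two prior results: Definition~\ref{def:ruptured-kan-formal} and the Interpretation Theorem (Theorem~\ref{thm:interpretation}). Because a ruptured Kan complex is, by definition, one in which every horn is coherently filled, gap-witnessed, or open, the substantive content of the theorem is twofold: first, that the semantic interpretation of an OHTT type really is a ruptured simplicial set of the right kind; and second, that the trichotomy both exhausts and separates the possibilities at every simplicial level. Accordingly, the proof splits into an interpretation step, a trichotomy step, and a negative step.

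First I would unpack the interpretation. Using Definition~\ref{def:ohtt-interpretation}, an OHTT type $A$ in context $\Gamma$ is a ruptured fibration $p_A : \llbracket \Gamma.A \rrbracket \to \llbracket \Gamma \rrbracket$. Fixing a coherent point of $\llbracket \Gamma \rrbracket$, the fiber inherits the structure of a ruptured simplicial set by Definition~\ref{def:ruptured-fiber}, and by Proposition~\ref{prop:ruptured-fiber-kan} this fiber is already a ruptured Kan complex. Globally, $\llbracket A \rrbracket$ is assembled from these fibers as a ruptured simplicial set; the $\Coh$ and $\Gap$ data descend from the ambient fibration. This is the basic source of the claim: types are (families of) fibers of ruptured fibrations, and fibers of ruptured fibrations are ruptured Kan complexes.

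Next I would verify the trichotomy level by level. For each $n \geq 1$ and each horn $h : \hornsimp{n}{k} \to \llbracket A \rrbracket$, I need $h$ to lie in exactly one of the three states. By Theorem~\ref{thm:interpretation}, the OHTT judgments $\coh{\Gamma}{J}$ and $\gap{\Gamma}{J}$ interpret respectively as a filler $\sigma \in \Coh_n$ extending $h$ and as membership $h \in \Gap_{n,k}$. The exclusion condition of Definition~\ref{def:ruptured-sset-formal} — itself the semantic counterpart of the Exclusion Law of Definition~\ref{def:exclusion} — guarantees that these two cannot coexist, so at most one state obtains; and since "open" is defined as the residual case (Definition~\ref{def:open-horn}), at least one state always obtains. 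This gives the trichotomy at every dimension.

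The delicate point — and the main obstacle — is the \emph{negative} claim that no further Kan-like filling constraint is smuggled in. If the interpretation secretly forced every non-gapped horn to carry a coherent filler, OHTT would collapse back to HoTT and the theorem would be vacuous. I would discharge this by invoking Soundness (Theorem~\ref{thm:soundness}) in its contrapositive form: because OHTT derives neither $\coh{\Gamma}{J}$ nor $\gap{\Gamma}{J}$ for genuinely open judgments, soundness forbids the model from resolving such judgments in either direction, and so the model must admit horns in the "neither" state. Concretely, one exhibits a context $\Gamma$ and a horn whose interpretation is neither in $\Coh$ nor in $\Gap$ — the transport horns of Chapter~\ref{ch:transport}, in their undetermined variants, already supply such examples. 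Hence Kan filling is genuinely replaced by the trichotomy, and the proof concludes.
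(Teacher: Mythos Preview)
The paper states this theorem without proof; it functions as a recapitulation of the semantic development in Part~\ref{part:model} rather than as a result requiring independent argument. Your first two steps correctly assemble the relevant pieces: the interpretation (Definition~\ref{def:ohtt-interpretation}) lands types in ruptured simplicial sets, Proposition~\ref{prop:ruptured-fiber-kan} handles fibers, and the trichotomy is immediate from Definition~\ref{def:open-horn} together with the exclusion condition. Indeed, by the parenthetical in Definition~\ref{def:ruptured-kan-formal}, \emph{every} ruptured simplicial set is automatically a ruptured Kan complex --- the trichotomy is definitionally exhaustive --- so once the interpretation is in place there is nothing further to verify. In that sense you have supplied more argument than the paper does, and the first two steps are sound.

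Your third ``negative'' step, however, contains a genuine error. You invoke ``Soundness in its contrapositive form'' to conclude that because OHTT derives neither $\coh{\Gamma}{J}$ nor $\gap{\Gamma}{J}$ for an open judgment, the model cannot resolve such a judgment. But Soundness (Theorem~\ref{thm:soundness}) says only that derivable judgments hold in the model; its contrapositive is that judgments \emph{failing in the model} are underivable. The inference you want --- underivable in syntax implies unresolved in semantics --- is a \emph{completeness} direction, which the paper never states or proves. Your fallback of exhibiting a concrete open horn is the correct move, but it stands on its own and is not a consequence of soundness. More to the point, the whole step is unnecessary for the theorem as stated: the claim is that types are interpreted as ruptured Kan complexes, and that notion by definition imposes no filling requirement. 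Showing that the interpretation does not factor through ordinary Kan complexes is a separate (worthwhile) non-triviality observation, not part of what Theorem~\ref{thm:ohtt-ruptured} asserts.
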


The Kan condition is the \textit{totality assumption} that HoTT makes. OHTT drops this assumption.

\section{Conservativity}

\begin{theorem}[OHTT is Conservative over HoTT]\label{thm:conservative}
If $J$ is a judgment expressible in HoTT and $\coh{\Gamma}{J}$ is derivable in OHTT (where $\Gamma$ is a coherent context), then $\jdg{\Gamma}{J}$ is derivable in HoTT.
\end{theorem}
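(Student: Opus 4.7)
My plan is a proof by structural induction on the OHTT derivation of $\coh{\Gamma}{J}$, building a translation into a HoTT derivation of $\jdg{\Gamma}{J}$. The strategic observation, already implicit in Proposition~\ref{prop:coherent-hott}, is that every $\vdash^+$-producing rule of OHTT is --- by fiat --- a rule of ordinary dependent type theory. So the bulk of the work is not translation but \emph{confinement}: showing that a derivation ending in a coherence from a coherent context can be normalized into the coherent fragment $\mathbf{OHTT}^+$, at which point the rewriting $\coh{\Gamma}{J} \leadsto \jdg{\Gamma}{J}$ at every node yields a valid HoTT derivation.

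First I would prove a normal-form lemma: any derivation of $\coh{\Gamma}{J}$ in which $\Gamma$ contains no gap assumptions can be transformed into one living entirely in $\mathbf{OHTT}^+$. The only way a gap witness can influence a coherence conclusion is through the Exclusion Law, which produces $\bottom$ from a pair $(\coh{\Gamma'}{K}, \gap{\Gamma'}{K})$ and then applies ex falso. But to derive $\gap{\Gamma'}{K}$ inside the derivation, one must either use a gap assumption from the context (ruled out by hypothesis) or apply a gap-constructor rule whose premises again contain $\vdash^-$ judgments. Unfolding this recursion, no such subderivation can exist in a gap-free context, so all detours through the $\vdash^-$ side can be pruned.

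With the derivation restricted to $\mathbf{OHTT}^+$, the translation is mechanical: each coherence constructor is identified with its HoTT counterpart ($\Pi$-formation, $\Sigma$-introduction, identity-elimination, universe rules, and so on), and the inductive hypothesis supplies HoTT derivations of each premise. We never need to invoke HoTT's totality principles (J-transport, Kan filling, univalence), because by hypothesis OHTT itself derived the conclusion without them; in fact we are proving something slightly stronger, namely that the coherent fragment is HoTT-valid without the totality axioms being invoked.

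The hard part will be the normal-form lemma, and specifically its dependence on the complete gap-constructor inventory. The excerpt has introduced several ways gaps arise --- primitive $\vdash^-$ assumptions, the Exclusion Law itself, and the horn constructions of Chapters~I--III (transport horns, monodromy, polysemy, derivability) --- and each must be checked to satisfy the schematic property: any rule with a $\vdash^-$-conclusion has at least one $\vdash^-$-premise. If this ``gap-in, gap-out'' monotonicity holds uniformly (as the paper's informal description of gap as \emph{positive structure} strongly suggests), the induction closes immediately; if some admissible rule turned out to manufacture gap from purely coherent inputs, conservativity could fail and an auxiliary hypothesis restricting the class of admissible gap-constructors would be needed. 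Verifying this monotonicity systematically across the full calculus is the step I would invest the most care in.
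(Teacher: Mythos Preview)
Your approach is essentially the paper's own, elaborated: the paper's proof sketch simply asserts that coherence-producing rules are unchanged from HoTT and that a derivation using only such rules \emph{is} a HoTT derivation, without addressing the confinement step you make explicit via your normal-form lemma. Your care about the ``gap-in, gap-out'' property is well-placed for extensions (ruptured HITs, explicit gap constructors) but is vacuous in the core calculus as presented, where the only source of $\vdash^-$ judgments is a context assumption---so in a gap-free context no $\vdash^-$ subderivation can arise and the Exclusion detour you worry about cannot fire.
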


\begin{proof}[Proof sketch]
OHTT adds gap judgments and horn types, but does not change the rules for deriving coherence judgments. A derivation of $\coh{\Gamma}{J}$ in OHTT that uses only coherence rules is exactly a HoTT derivation.
\end{proof}

Conservativity means: OHTT does not prove new coherence facts about HoTT-expressible judgments. What it adds is the ability to express and prove \textit{gap} facts.

\section{The Univalence Axiom}

Univalence is the characteristic axiom of HoTT:
\[
\mathsf{ua} : (A \simeq B) \to (\Id{\UU}{A}{B})
\]

Together with function extensionality, it implies that equivalences \textit{are} identities (in the universe).

\begin{proposition}[Univalence in OHTT]\label{prop:univalence-ohtt}
Univalence can be adopted in OHTT as an axiom about coherence:
\[
\coh{}{\mathsf{ua} : (A \simeq B) \to (\Id{\UU}{A}{B})}
\]
This asserts that every equivalence coherently induces a path.
\end{proposition}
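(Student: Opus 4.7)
The plan is to establish the proposition by reducing to the conservativity of OHTT over HoTT and exhibiting a compatible ruptured semantic model. First I would verify syntactic well-formedness: since OHTT inherits the raw type-forming syntax of dependent type theory, the expression $(A \simeq B) \to (\Id{\UU}{A}{B})$ is built entirely from constructions available in the coherent fragment (equivalence, universe, identity, $\Pi$), so $\mathsf{ua} : (A \simeq B) \to (\Id{\UU}{A}{B})$ is a legitimate typing judgment, and prefacing it with $\vdash^{+}$ produces a well-formed coherence judgment in OHTT.

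Next I would verify that adopting this statement as an axiom respects the Exclusion Law. The axiom is introduced solely as a $\vdash^{+}$-production rule; it never generates gap witnesses, nor does it interact with any rule that does. By the conservativity result (Theorem~\ref{thm:conservative}), any coherence-only derivation in OHTT that appeals to univalence is isomorphic to a corresponding HoTT derivation. Hence any hypothetical conflict between $\coh{\Gamma}{J}$ and $\gap{\Gamma}{J}$ arising from the new axiom would descend to an inconsistency in HoTT plus univalence, which is ruled out by the consistency of the standard univalent Kan model.

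For the semantic direction I would construct a ruptured Kan complex validating the axiom by invoking Proposition~\ref{prop:kan-as-ruptured}: take a univalent Kan model $X$ of HoTT—Voevodsky's simplicial model is canonical—and equip it with the trivial ruptured structure $(X, X_\bullet, \emptyset)$, so that every simplex is coherently witnessed and no horn is gap-witnessed. In this model the interpretation of each OHTT coherence judgment reduces to its ordinary HoTT interpretation, and univalence holds by hypothesis on $X$. Soundness (Theorem~\ref{thm:soundness}) then delivers the coherence judgment for $\mathsf{ua}$ in the ruptured interpretation.

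The hard part, conceptually, is not the derivation but the negotiation of the axiom's global reach. Because univalence coherently inhabits $\Id{\UU}{A}{B}$ for \emph{every} equivalence $A \simeq B$, the Exclusion Law then forbids any ruptured extension from gap-witnessing an identity between equivalent types at the universe level. Thus adopting univalence amounts to imposing a compatibility constraint on admissible $\Gap$ structures on $\UU$: we remain free to witness gaps in identity types over arbitrary base types, but not for the universe itself at equivalent endpoints. Isolating precisely this constraint—and showing that ruptured structures elsewhere in the model can be freely chosen without disturbing it—is the main technical step, and it marks exactly where univalence narrows the expressive range that OHTT otherwise opens up.
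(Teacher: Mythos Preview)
The paper offers no proof for this proposition; it is stated as an observation and the text moves on immediately. The content is essentially definitional: OHTT admits any well-formed judgment as a candidate axiom, the type $(A \simeq B) \to (\Id{\UU}{A}{B})$ is built entirely within the coherent fragment, and so the $\vdash^{+}$-form displayed is simply \emph{the} way univalence is phrased in OHTT. Nothing further is argued in the paper.

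Your proposal does considerably more than this: you are effectively arguing for the \emph{consistency} of OHTT extended by the axiom, which is a stronger claim than the proposition makes. The semantic half of your argument---equipping a univalent Kan model with the trivial ruptured structure $(X, X_\bullet, \emptyset)$ via Proposition~\ref{prop:kan-as-ruptured} and appealing to soundness---is sound and is the natural route to consistency. Your use of conservativity, however, is misdirected. Theorem~\ref{thm:conservative} says OHTT proves no new \emph{coherence} facts over HoTT; it does not say that an inconsistency in OHTT-plus-an-axiom descends to one in HoTT-plus-that-axiom. A hypothetical Exclusion violation requires a $\vdash^{-}$ premise, which has no HoTT translation, so the ``descent'' step does not go through as you have written it. Your final paragraph on how univalence constrains admissible $\Gap$ structures at the universe level is a correct and worthwhile observation---it is essentially the content of the Remark that follows the proposition in the paper---but it is commentary on the axiom's consequences, not part of establishing the proposition itself.
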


With univalence, OHTT includes HoTT as its coherent fragment. Without univalence, OHTT includes MLTT.

\begin{remark}
Univalence does not eliminate gaps. Even with univalence, we can have:
\begin{itemize}[leftmargin=2em]
\item Gapped paths: $\gap{}{(p : \Id{A}{x}{y})}$ for non-equivalent $x, y$
\item Gapped transport: $\gap{}{(\transport{p}{t} : B(y))}$ when transport is obstructed
\item Gapped equivalences: $\gap{}{(e : A \simeq B)}$ for non-equivalent types
\end{itemize}
Univalence tells us that equivalences give paths; it does not tell us which paths or equivalences exist.
\end{remark}

\section{Higher Inductive Types}

Higher inductive types (HITs) are types specified with path constructors, not just point constructors.

\begin{example}[Circle]\label{ex:circle-hit}
The circle $S^1$ is the HIT with:
\begin{itemize}[leftmargin=2em]
\item Point constructor: $\mathsf{base} : S^1$
\item Path constructor: $\mathsf{loop} : \Id{S^1}{\mathsf{base}}{\mathsf{base}}$
\end{itemize}
\end{example}

\begin{proposition}[HITs in OHTT]\label{prop:hits-ohtt}
Higher inductive types can be defined in OHTT. The path constructors produce coherence witnesses:
\[
\coh{}{\mathsf{loop} : \Id{S^1}{\mathsf{base}}{\mathsf{base}}}
\]
\end{proposition}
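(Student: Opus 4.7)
The plan is to treat each HIT specification as a package of coherence-producing rules within the coherent fragment of OHTT, and then to observe that the required coherence witness for a path constructor is precisely one of those rules.

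First I would invoke Proposition~\ref{prop:coherent-hott}: the coherent fragment $\mathbf{OHTT}^+$ is a weakening of HoTT in which every derivation preserves $\vdash^+$. A HIT specification in HoTT consists of point constructors, path constructors, and a dependent induction principle; each is an inference rule whose premises are ordinary typing judgments $\vdash$ and whose conclusion is likewise $\vdash$. To import the specification into OHTT, I would reinterpret every premise and conclusion as a coherence judgment $\vdash^+$, yielding a family of coherence constructors in the sense of Definition~\ref{def:mltt-fragment}. No new proof obligations arise---the rules are simply relabelled along the decoration $\vdash \leadsto \vdash^+$.

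Applied to $S^1$, the point-constructor rule becomes the coherence axiom $\coh{\Gamma}{\mathsf{base} : S^1}$, and the path-constructor rule becomes $\coh{\Gamma}{\mathsf{loop} : \Id{S^1}{\mathsf{base}}{\mathsf{base}}}$, which is exactly the statement to be proved. The induction principle is admitted as the coherence-producing rule that, given $\coh{\Gamma}{b : P(\mathsf{base})}$ together with a coherence witness for the loop law $\transport{\mathsf{loop}}{b} = b$ in $P(\mathsf{base})$, yields a coherent dependent function out of $S^1$. Since this is the HoTT rule with every turnstile decorated with $+$, its admissibility is inherited directly.

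The main obstacle is not the relabelling but compatibility with the Exclusion Law (Definition~\ref{def:exclusion}): we must rule out any context in which $\gap{\Gamma}{\mathsf{loop} : \Id{S^1}{\mathsf{base}}{\mathsf{base}}}$ is simultaneously derivable. I would discharge this by exhibiting a sound model. Take the standard simplicial circle, upgraded to a ruptured simplicial set with $\Coh$ maximal and $\Gap = \emptyset$; by Proposition~\ref{prop:kan-as-ruptured} it is a ruptured Kan complex, and under the interpretation of Theorem~\ref{thm:interpretation} the loop $1$-simplex lies in $\Coh_1$. Soundness (Theorem~\ref{thm:soundness}) then forbids any derivation of the corresponding gap judgment in the pure HIT presentation, so the new coherence axiom is consistent. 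Richer HITs with path-between-path constructors require the same argument iterated at higher levels, where the possibility of genuine gap witnesses---and hence of \emph{ruptured} higher inductive types with prescribed obstructions---opens up as the natural OHTT-specific generalisation.
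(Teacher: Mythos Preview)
Your plan is sound, but it is worth noting that the paper offers no proof at all for this proposition: it is stated as an immediate observation and the text simply moves on to ruptured HITs. In the paper's framing, the content of the proposition is essentially definitional---HIT rules are admitted as further \emph{coherence constructors} in the sense of Definition~\ref{def:mltt-fragment}, and the displayed judgment for $\mathsf{loop}$ is then just one of the axioms so admitted.

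Your argument does exactly this relabelling step, which matches the paper's intent, but you go further in two respects. First, you spell out the induction principle as a coherence-producing rule and note the transport premise; the paper leaves this implicit. Second, and more substantively, you supply a consistency check against the Exclusion Law by exhibiting a model (the simplicial circle with $\Gap=\emptyset$ via Proposition~\ref{prop:kan-as-ruptured}) and invoking soundness to rule out a simultaneous gap witness. The paper does not address this point at all. Your addition is correct and genuinely strengthens the statement from ``we may postulate these axioms'' to ``postulating them is consistent,'' which is arguably what a reader would want but which the paper does not provide. The only compression to watch is the direction of the soundness argument: you are using the contrapositive (a derivable gap would force a gap in the model, contradicting $\Gap=\emptyset$), which is fine but worth making explicit.
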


OHTT extends this with the possibility of \textit{gap constructors}:

\begin{definition}[Ruptured Higher Inductive Type]\label{def:ruptured-hit}
A \keyterm{ruptured higher inductive type} is a type specified with:
\begin{itemize}[leftmargin=2em]
\item Point constructors (as usual)
\item Path constructors producing coherence witnesses
\item Gap constructors producing gap witnesses
\end{itemize}
\end{definition}

\begin{example}[Ruptured Interval]\label{ex:ruptured-interval}
A ruptured interval $I_\bot$ might have:
\begin{itemize}[leftmargin=2em]
\item Points: $0, 1 : I_\bot$
\item Gap: $\gap{}{(p : \Id{I_\bot}{0}{1})}$
\end{itemize}
This is an ``interval'' where the endpoints are witnessed as non-connected.
\end{example}

Ruptured HITs can model spaces with intrinsic obstructions—not just topological spaces, but spaces-with-certified-gaps.

\section{Models}

We summarize the model theory.

\begin{center}
\small
\begin{tabular}{l|l|l}
\textbf{Theory} & \textbf{Models} & \textbf{Key Property} \\
\hline
MLTT & CwFs & Substitution \\
HoTT & Kan complexes & All horns fill \\
OHTT & Ruptured Kan complexes & Trichotomy \\
\end{tabular}
\end{center}

\begin{theorem}[Model Inclusions]\label{thm:model-inclusions}
\[
\{\text{Kan complexes}\} \subset \{\text{Ruptured Kan complexes}\}
\]
The inclusion sends a Kan complex $X$ to the ruptured Kan complex $(X, X_\bullet, \emptyset)$ where all simplices are coherent and no horns are gapped.
\end{theorem}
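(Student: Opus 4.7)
The plan is to verify two things: that the assignment $X \mapsto (X, X_\bullet, \emptyset)$ actually lands in $\mathbf{rsKan}$, and that the inclusion it defines is strict. The first part essentially repackages Proposition~\ref{prop:kan-as-ruptured}; the second requires exhibiting a ruptured Kan complex not arising from any ordinary Kan complex under this assignment. I would present the argument in that order.

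For the first part, I would begin by checking the exclusion condition from Definition~\ref{def:ruptured-sset-formal}: if $h \in \Gap_{n,k}$ then no $\sigma \in \Coh_n$ fills $h$. Since $\Gap = \emptyset$, the antecedent never holds, so the condition is vacuously satisfied and $(X, X_\bullet, \emptyset)$ is a well-formed ruptured simplicial set. Next, to verify it meets Definition~\ref{def:ruptured-kan-formal}, I would check that every horn is coherently filled, gap-witnessed, or open. The Kan condition on $X$ guarantees that every horn $h : \hornsimp{n}{k} \to X$ extends to some $\sigma \in X_n$, and since $\Coh_n = X_n$ this $\sigma$ is a coherent filler; so every horn is in fact coherently filled. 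This places $\iota(X) := (X, X_\bullet, \emptyset)$ in the maximally coherent corner of $\mathbf{rsKan}$.

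For the second part, I would exhibit concrete examples showing the inclusion is strict. Any ruptured Kan complex with $\Gap \neq \emptyset$ is outside the image: the ruptured interval $I_\bot$ of Example~\ref{ex:ruptured-interval} and the monodromy-ruptured cover of Definition~\ref{def:monodromy-ruptured} both suffice. A second, independent source of strictness is the case where $\Coh_n \subsetneq X_n$ for some $n$: then some horns of $X$ fail to have coherent fillers even though fillers exist in the underlying simplicial set, producing open horns that cannot appear in the image. Either class of example completes the proof of strict containment.

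The main obstacle, such as it is, is pinning down what is meant by ``$\subset$'' at the level of classes. The natural upgrade is to promote $\iota$ to a functor $\mathbf{Kan} \hookrightarrow \mathbf{rsKan}$; I would note that on morphisms a simplicial map $f : X \to Y$ preserves coherence vacuously (every simplex is coherent) and preserves gaps vacuously (none exist), so the preservation conditions of Definition~\ref{def:ruptured-morphism} hold automatically, and $\iota$ is fully faithful. Essential surjectivity then fails for precisely the reason identified in the previous paragraph, and this failure \emph{is} the content of the theorem: the extra expressive range of $\mathbf{rsKan}$ over $\mathbf{Kan}$ lives exactly in the nonempty-$\Gap$ and proper-$\Coh$ directions.
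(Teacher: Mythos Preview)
Your proposal is correct. The paper itself states this theorem without proof, treating it as a summary observation that follows immediately from Proposition~\ref{prop:kan-as-ruptured}; your first paragraph is exactly that proposition unpacked. Where you go beyond the paper is in the explicit strictness argument (via $I_\bot$, the monodromy-ruptured cover, or a proper $\Coh \subsetneq X_\bullet$) and in promoting the assignment to a fully faithful functor $\iota : \mathbf{Kan} \hookrightarrow \mathbf{rsKan}$. The paper makes neither of these moves explicit; its ``$\subset$'' is left at the level of an evident class-inclusion. What your additions buy is an actual argument for strictness and a clean categorical formulation that locates precisely where the extra expressive room of $\mathbf{rsKan}$ lives---both useful, and both absent from the paper's one-line treatment.
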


The model theory reflects the syntactic relationship: HoTT models are special (maximally coherent) OHTT models.

\section{Summary}

The relationship between OHTT and HoTT:

\begin{itemize}[leftmargin=2em]
\item \textbf{Embedding}: HoTT embeds faithfully in OHTT via $\jdg{\Gamma}{J} \mapsto \coh{\Gamma}{J}$

\item \textbf{Coherent fragment}: HoTT is exactly the $\vdash^+$-only part of OHTT

\item \textbf{Conservativity}: OHTT proves no new coherence facts about HoTT judgments

\item \textbf{Expressiveness}: OHTT can express witnessed non-identity, transport failure, and obstruction data; HoTT cannot

\item \textbf{Kan condition}: HoTT assumes all horns fill; OHTT allows gaps and openness

\item \textbf{Univalence}: Compatible with OHTT; does not eliminate gaps

\item \textbf{HITs}: OHTT extends HITs with gap constructors (ruptured HITs)

\item \textbf{Models}: Kan complexes are maximally-coherent ruptured Kan complexes
\end{itemize}

The slogan: \textit{HoTT is OHTT with Gap $= \emptyset$}. Or equivalently: \textit{OHTT is HoTT without the Kan condition}.

\begin{aside}
The relationship is not antagonistic. HoTT is the right theory for reasoning about spaces where all paths exist and all homotopies cohere. OHTT is the extension for reasoning about spaces with obstructions—where some paths are blocked, some transports fail, and we want to track these failures as positive data. One does not replace the other; they are the coherent and ruptured fragments of a single larger theory.
\end{aside}


\chapter{Negation and Paraconsistency}
\label{ch:negation}

\begin{flushright}
\textit{Gap is not negation.\\
Negation is not gap.}\\[0.5ex]
{\small — The distinction}
\end{flushright}

\bigskip

We now situate OHTT relative to logical traditions that have grappled with negation, contradiction, and non-classical truth values. The gap judgment $\vdash^-$ is none of the standard notions of negation—but it has relatives in the literature that illuminate its nature.

\section{Classical Negation}

\begin{definition}[Classical Negation]\label{def:classical-neg}
In classical logic, negation $\neg P$ is characterized by:
\begin{itemize}[leftmargin=2em]
\item \textbf{Excluded middle}: $P \vee \neg P$ is a theorem
\item \textbf{Double negation elimination}: $\neg\neg P \to P$ is a theorem
\item \textbf{Semantic}: $\neg P$ is true iff $P$ is false
\end{itemize}
\end{definition}

Classical negation is the \textit{complement} of truth: $\neg P$ holds exactly when $P$ fails.

\begin{proposition}[Gap is not Classical Negation]\label{prop:gap-not-classical}
In OHTT:
\begin{enumerate}[leftmargin=2em]
\item There is no excluded middle: a judgment may be neither coherent nor gapped (it may be open)
\item There is no double negation elimination: $\gap{}{\gap{}{J}}$ (a gap-of-gap, if it made sense) does not give $\coh{}{J}$
\item Gap is not the complement of coherence: open judgments are neither
\end{enumerate}
\end{proposition}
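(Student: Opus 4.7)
The plan is to verify each clause by carefully separating what the Exclusion Law of Definition~\ref{def:exclusion} actually enforces from what classical negation would enforce, and by exhibiting countermodels where helpful.

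For clause (1), I would appeal directly to Definition~\ref{def:open}: a judgment is open in $\Gamma$ exactly when $\Gamma$ contains neither a coherence witness nor a gap witness. The axiomatic content of OHTT is exhausted by Exclusion, which forbids the conjunction of $\coh{\Gamma}{J}$ and $\gap{\Gamma}{J}$ but imposes no lower bound on witnessing. To falsify excluded middle it therefore suffices to exhibit a single context $\Gamma$ and judgment $J$ such that neither witness is derivable. Semantically, by Proposition~\ref{prop:fiber} the fiber $U^{-1}(X)$ of the forgetful functor is nontrivially large, so one can choose a ruptured simplicial set in which a given horn sits outside both $\Coh$ and $\Gap$; interpreting this via Theorem~\ref{thm:interpretation} yields an open judgment, which is precisely a countermodel to $\coh{\Gamma}{J} \vee \gap{\Gamma}{J}$.

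For clause (2), the first task is to say what $\gap{}{\gap{}{J}}$ even means, since the primitive judgment forms are indexed by ordinary judgments rather than by other witness judgments. The natural reading is the one sanctioned by the higher-horn construction: lift gap witnesses themselves to a level-1 judgment type $J_1 \defn \Sum{J}{\mathsf{Judgment}}{(\gap{\Gamma}{J})}$ and ask for a gap witness against that. I would then argue that no rule of the calculus permits passing from a witness $\nu : \gap{\Gamma_1}{(\omega : \gap{\Gamma}{J})}$ to any coherence witness $\gamma : \coh{\Gamma}{J}$: coherence witnesses are primitive positive data, and Exclusion, which is the only axiom, licenses no mechanism for constructing them from negative data about another level. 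A countermodel is a ruptured simplicial set whose $\Coh_n$ is empty at the simplex encoding $J$ while the level-1 structure records $\nu$; Exclusion is satisfied vacuously and $\coh{\Gamma}{J}$ remains underivable.

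Clause (3) is then immediate from clause (1): since open judgments exist, the collections $\{J : \coh{\Gamma}{J}\}$ and $\{J : \gap{\Gamma}{J}\}$ are disjoint by Exclusion but do not exhaust the judgments over $\Gamma$, so neither is the set-theoretic complement of the other. The main obstacle I expect is clause (2), whose difficulty is almost entirely definitional rather than technical: pinning down a precise reading of \emph{gap of a gap} requires committing to a particular higher-level instantiation of the calculus, and one must verify that this reading does not accidentally introduce an inference rule that would collapse the distinction. Once the reading is fixed, the failure of collapse is automatic, because the axiomatic content of OHTT provides no mechanism for manufacturing coherence witnesses out of the absence or negation of anything.
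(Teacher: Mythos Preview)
Your proposal is correct, but it is considerably more elaborate than what the paper does. The paper states Proposition~\ref{prop:gap-not-classical} without a proof environment at all; it is followed only by the one-line remark that ``the trichotomy (coherent / gapped / open) is fundamentally incompatible with classical two-valued logic.'' In other words, the paper treats the proposition as immediate from Definitions~\ref{def:open} and~\ref{def:exclusion}, with the parenthetical ``if it made sense'' in clause~(2) signalling that the author does not even commit to a precise reading of iterated gap.

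What you do differently is supply actual arguments: semantic countermodels via Proposition~\ref{prop:fiber} and Theorem~\ref{thm:interpretation} for clause~(1), and a careful definitional unpacking of ``gap-of-gap'' via a level-1 judgment type for clause~(2). This buys genuine rigor where the paper offers only an observation, and your identification of clause~(2) as the locus of difficulty is exactly right --- the paper sidesteps it rather than resolves it. One small point: the paper's own higher-level construction (the ``Higher Horns'' subsection) lifts \emph{coherence} witnesses, $J_1 \defn \Sum{J}{\mathsf{Judgment}}{(\coh{\Gamma}{J})}$, whereas you lift gap witnesses; your variant is reasonable for the purpose at hand, but you might flag that it is an adaptation rather than a direct instantiation of the paper's tower. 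Otherwise your plan would stand as a proof the paper does not actually give.
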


The trichotomy (coherent / gapped / open) is fundamentally incompatible with classical two-valued logic.

\section{Intuitionistic Negation}

\begin{definition}[Intuitionistic Negation]\label{def:int-neg}
In intuitionistic logic, negation is defined:
\[
\neg P \;\defn\; P \to \bot
\]
A proof of $\neg P$ is a function that transforms any proof of $P$ into a proof of absurdity.
\end{definition}

Intuitionistic negation is \textit{constructive}: to prove $\neg P$, we must show that $P$ leads to contradiction. But it is also \textit{derivative}: negation is defined in terms of implication and absurdity.

\begin{proposition}[Gap is not Intuitionistic Negation]\label{prop:gap-not-int}
In OHTT:
\begin{enumerate}[leftmargin=2em]
\item Gap is primitive, not defined via implication
\item $\gap{\Gamma}{J}$ does not mean ``$\coh{\Gamma}{J}$ implies absurdity''
\item A gap witness is direct evidence, not a refutation function
\end{enumerate}
\end{proposition}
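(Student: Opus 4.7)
The plan is to establish each of the three claims by direct appeal to the axiomatic structure of Chapter~I, then reinforce the distinction by examining how a candidate identification $\gap{\Gamma}{J} \defeq (\coh{\Gamma}{J} \to \bottom)$ would disrupt the calculus. The overall strategy is structural rather than computational: the proposition is essentially about what the calculus does and does not define, so the argument reduces to a point-by-point contrast with the defining equation of intuitionistic negation.

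For claim (1), I would cite Definition~\ref{def:witnessed-judgments} directly: $\vdash^-$ is introduced on equal footing with $\vdash^+$ as one of two primitive judgment forms, with no generating equation in terms of other constructs. No rule in the calculus has conclusion $\gap{\Gamma}{J}$ and premises involving $\coh{\Gamma}{J}$ or $\bottom$; the single axiom is Exclusion (Definition~\ref{def:exclusion}), which runs in the opposite direction (from joint presence of $\vdash^+$ and $\vdash^-$ to $\bottom$). For claim (2), I would argue by consequence: if one posited $\gap{\Gamma}{J} \defeq (\coh{\Gamma}{J} \to \bottom)$, then the Exclusion Law would become a derivable theorem via function application, making its explicit statement in Definition~\ref{def:exclusion} redundant; moreover, the proof-relevance of gap witnesses noted after Definition~\ref{def:witnessed-judgments} would reduce to extensional function distinctness, collapsing a primitive relation into a derived one.

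For claim (3), I would appeal to the semantic model developed in Chapter~\ref{ch:ruptured-sset}. A gap witness is concretely the datum $h \in \Gap_{n,k}$---or, in the typed version, an element of $\Omega(h)$---direct, free-standing data attached to a horn. A refutation function would instead be a map whose \emph{domain} is the set of coherence witnesses for the same horn; its very type presupposes that such witnesses may exist and acts computationally upon them. These are categorically different objects: the former lives in the combinatorial structure of the ruptured simplicial set, the latter in its internal function spaces. The main obstacle is precisely the risk that the two notions appear interchangeable in inhabited cases, since both may hold simultaneously. I would handle this by stressing that coincidence of \emph{inhabitation} does not entail identity of \emph{notion}: one can define a shadow $\tilde{\neg}\coh{\Gamma}{J} \defeq (\coh{\Gamma}{J} \to \bottom)$ inside OHTT and observe that while $\gap{\Gamma}{J}$ entails $\tilde{\neg}\coh{\Gamma}{J}$ by Exclusion, the converse fails in the open: a function into $\bottom$ exists vacuously when no coherence witness has yet been constructed, yet no positive gap witness need be present.
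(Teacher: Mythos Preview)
The paper offers no formal proof here; the proposition is stated and followed only by a two-line conceptual gloss contrasting intuitionistic $\neg P$ (``show me a proof and I'll derive a contradiction'') with gap (``I have witnessed non-coherence''). Your treatment of claims (1)--(3) is substantially more detailed than what the paper provides and is sound: the appeal to Definition~\ref{def:witnessed-judgments} for primitiveness, the redundancy argument for Exclusion under the candidate identification, and the semantic contrast between the datum $h \in \Gap_{n,k}$ and a function on coherence witnesses are all well-taken.

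There is, however, a genuine error in your final reinforcing argument. You claim that in the open state ``a function into $\bottom$ exists vacuously when no coherence witness has yet been constructed.'' This conflates openness (no witness \emph{yet constructed}, in either direction) with emptiness of the coherence-witness type. Constructively, the mere absence of a constructed inhabitant of $\coh{\Gamma}{J}$ does not let you inhabit $\coh{\Gamma}{J} \to \bottom$; you would need to refute every hypothetical witness, and openness supplies no such refutation. Indeed the paper is explicit that the open is ``not `unknown' in any epistemic sense'' and is not a truth value---so it cannot be leveraged to produce a vacuous function. The separation you want between $\gap{\Gamma}{J}$ and your shadow $\tilde{\neg}\coh{\Gamma}{J}$ is real, but it should be argued differently: one may possess a refutation function (say because the judgment is internally contradictory, or via a metatheoretic argument) without the ruptured model having marked the corresponding horn as an element of $\Gap$. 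That is the direction in which the two notions come apart, and it aligns with the paper's insistence that gap is \emph{marked structure}, not derived refutation. Since this paragraph is supplementary to your already-adequate arguments for (1)--(3), the fix is local: either drop it or reorient the separating example as just described.
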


The difference is subtle but crucial:
\begin{itemize}[leftmargin=2em]
\item Intuitionistic $\neg P$: ``Show me a proof of $P$, and I'll derive a contradiction''
\item OHTT $\gap{}{J}$: ``I have witnessed that $J$ does not cohere'' (no hypothetical, no derivation)
\end{itemize}

\subsection{Strong Negation}

The closest intuitionistic relative is Nelson's \textit{strong negation}.

\begin{definition}[Strong Negation]\label{def:strong-neg}
In Nelson's constructive logic with strong negation, there are two negations:
\begin{itemize}[leftmargin=2em]
\item \textbf{Weak negation}: $\neg P \defn P \to \bot$ (standard intuitionistic)
\item \textbf{Strong negation}: $\sim P$ (primitive, with its own rules)
\end{itemize}
Strong negation satisfies: $\sim\sim P \leftrightarrow P$ and $\sim(P \wedge Q) \leftrightarrow (\sim P \vee \sim Q)$.
\end{definition}

\begin{proposition}[Gap resembles Strong Negation]\label{prop:gap-strong}
OHTT gap has affinities with strong negation:
\begin{enumerate}[leftmargin=2em]
\item Both are primitive (not defined via implication)
\item Both provide ``positive'' refutation
\item Both coexist with a weaker notion
\end{enumerate}
However, gap differs:
\begin{enumerate}[leftmargin=2em]
\item Gap is proof-relevant (different gap witnesses may exist)
\item Gap is at the judgment level, not the proposition level
\item Gap creates horns (compositional tension), not just negated formulas
\end{enumerate}
\end{proposition}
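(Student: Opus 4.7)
The plan is to treat this proposition as a structured comparison, verifying each of the six clauses against the relevant definitions: Definition~\ref{def:strong-neg} for strong negation and the machinery of Chapter~I (Definitions~\ref{def:witnessed-judgments}, \ref{def:exclusion}, \ref{def:horn}) for OHTT gap. No deep technical machinery is needed; the content is essentially definitional, but each clause needs a precise statement of the criterion being invoked and a verification that it applies on both sides.

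For the three affinities, I would proceed in turn. Primitiveness is checked by inspection of the defining rules: Nelson introduces $\sim$ by its own rule schemata rather than by translating through $\to$ and $\bottom$, and $\gap{\Gamma}{J}$ is declared primitive in Definition~\ref{def:witnessed-judgments}, not unfolded as $\coh{\Gamma}{J} \to \bottom$. Positive refutation then follows by noting that both operators are established by exhibiting a \emph{construction}---an introduction-style derivation in Nelson's calculus, a gap witness $\omega$ in OHTT---rather than by a refutation function into absurdity. Coexistence is immediate in both directions: Nelson's system carries both $\sim$ and the weaker $\neg$, while OHTT carries $\vdash^-$ alongside whatever implication-based negation remains available inside the coherent fragment.

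For the three differences, clause~(1) follows directly from the proof-relevance paragraph after Definition~\ref{def:witnessed-judgments}: the stipulation that distinct $\omega, \omega' : \gap{\Gamma}{J}$ need not be identified has no counterpart in Nelson's propositional calculus, where $\sim P$ is just a formula in the usual sense. Clause~(3) is an appeal to Definition~\ref{def:horn}: the $\vdash^-$ slot of the horn concerns a compositional judgment $J \comp L$, and thereby generates a genuinely new structural object---two coherent steps with one gapped closure---for which Nelson's apparatus offers no analogue, since strong negation operates on single formulas rather than on compositions. The main obstacle is clause~(2), the judgment/proposition distinction, because one might reasonably try to internalize $\vdash^-$ as a propositional operator via a reflection principle and thereby reduce the comparison to ``two flavors of strong negation.'' The plan here is to argue that any such internalization collapses the trichotomy of Definition~\ref{def:open}: the exclusion law (Definition~\ref{def:exclusion}) would become a derivable implication inside the object language rather than a structural meta-rule, and the \emph{open} state would be forced to coincide with either coherent or gapped at the propositional level. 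Preserving the three-valued witnessing structure thus requires that gap remain at the judgment level, which is precisely the content of clause~(2).
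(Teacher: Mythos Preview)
Your proposal is correct and, in fact, considerably more thorough than what the paper does: the paper states this proposition \emph{without proof}, treating it as a direct observation from Definitions~\ref{def:witnessed-judgments}, \ref{def:exclusion}, \ref{def:horn}, and~\ref{def:strong-neg}, followed only by a one-sentence gloss. Your clause-by-clause verification is the right shape for a proof of a comparison claim like this, and the appeals to the relevant definitions are accurate.

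One remark on your treatment of difference clause~(2). You argue that the judgment/proposition distinction is \emph{robust} by showing that any propositional internalization of $\vdash^-$ would collapse the trichotomy of Definition~\ref{def:open}. This is a stronger and more interesting claim than the proposition asserts: the paper is content to note the structural fact that $\gap{\Gamma}{J}$ ranges over judgments $J$ while $\sim P$ ranges over propositions $P$, without arguing that the distinction cannot be erased. Your argument is a genuine addition---it explains \emph{why} the distinction matters rather than merely recording it---but you should be careful that the collapse argument is airtight. In particular, the step ``Exclusion would become a derivable implication rather than a structural meta-rule'' needs the assumption that the internalized gap operator satisfies something like $\mathsf{Gap}(J) \to (\mathsf{Coh}(J) \to \bot)$ inside the object language, and you should state that assumption explicitly rather than leaving it implicit in the word ``internalization.'' If you keep the argument, flag it as going beyond the proposition's literal content; if you want to match the paper's level, the bare observation that $\vdash^-$ is a judgment form while $\sim$ is a connective suffices.
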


Nelson's logic is perhaps the closest classical precedent to OHTT's gap—but OHTT embeds the idea in a type-theoretic, proof-relevant, higher-categorical framework.

\section{Paraconsistent Logic}

Paraconsistent logics allow contradictions without explosion.

\begin{definition}[Paraconsistency]\label{def:paraconsistent}
A logic is \keyterm{paraconsistent} if the principle of explosion fails:
\[
P, \neg P \nvdash Q \quad \text{(for arbitrary $Q$)}
\]
Contradictions are tolerated; they do not make everything true.
\end{definition}

\begin{proposition}[OHTT is not Paraconsistent]\label{prop:ohtt-not-para}
OHTT is \textbf{not} paraconsistent:
\begin{enumerate}[leftmargin=2em]
\item The Exclusion Law forbids $\coh{\Gamma}{J}$ and $\gap{\Gamma}{J}$ simultaneously
\item There are no contradictions to tolerate—coherence and gap are mutually exclusive by axiom
\end{enumerate}
\end{proposition}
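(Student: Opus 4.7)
The plan is to reduce Proposition~\ref{prop:ohtt-not-para} directly to the Exclusion Law of Definition~\ref{def:exclusion}, treating the two numbered claims as complementary reformulations of a single observation. First I would compactly recall the two sides: a logic is paraconsistent when it admits pairs of mutually contradictory assertions $P, \neg P$ that both hold without collapsing into $Q$ for arbitrary $Q$; and the Exclusion Law states that $\coh{\Gamma}{J}$ together with $\gap{\Gamma}{J}$ yields $\bottom$. The strategy is to show that OHTT fails the paraconsistency criterion not by endorsing explosion eagerly, but by ruling out the very formation of the relevant contradictory pairs.

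Claim~(1) is then immediate: it is literally the content of Definition~\ref{def:exclusion}, and no derivation rule introduced in Chapter~I produces a pair $(\coh{\Gamma}{J}, \gap{\Gamma}{J})$ in a single context. For Claim~(2), I would argue that the only polarity-bearing judgment forms of OHTT are the two primitive forms $\coh{\Gamma}{J}$ and $\gap{\Gamma}{J}$. Because gap is \emph{not} a connective applied to a proposition but a separate judgment mode (as emphasized in the Contrast with Negation section), any candidate instance of the $(P, \neg P)$ schema in OHTT necessarily collapses onto the $(\coh, \gap)$ schema. Exclusion closes off this schema, so the antecedent of the paraconsistency clause $P, \neg P \nvdash Q$ is vacuous—there are no pairs to be tolerated.

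The main obstacle I anticipate is not mathematical but interpretive: a careful reader may object that being ``not paraconsistent'' is usually taken to mean ``explosion holds'' ($\bottom \vdash Q$), whereas OHTT as presented does not explicitly state ex falso quodlibet as an inference rule. I would handle this by noting two routes, either of which closes the proposition. The first route imports $\bottom$ from the coherent fragment (Definition~\ref{def:coherent-fragment}), where it is the absurd judgment inherited from MLTT and satisfies $\Empty$-elimination, giving explosion directly from the output of Exclusion. The second route reads OHTT as non-paraconsistent in the \emph{vacuous} sense of Claim~(2): since the contradictory pair never arises, the defining condition of paraconsistency is trivially unmet. Both readings are consistent with the framework; I would present the second as primary (it is closest to the spirit of ``there are no contradictions to tolerate'') and mention the first as a strengthening available whenever the coherent fragment is in scope.
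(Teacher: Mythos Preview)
Your proposal is correct and aligns with the paper's treatment. In fact, the paper offers no separate proof environment for this proposition: the two enumerated items \emph{are} the argument, and the subsequent commentary (``Paraconsistent logics: $P$ and $\neg P$ can both hold; we contain the damage'' versus ``OHTT: Coherence and gap cannot both hold; but neither may hold'') is all the elaboration provided. Your reduction to the Exclusion Law is exactly the paper's intended reasoning.

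Where you go beyond the paper is in anticipating the interpretive objection about explosion versus vacuity. The paper does not address this at all; it simply asserts the proposition and moves on. Your second route---that the defining clause of paraconsistency is vacuously unmet because the contradictory pair cannot form---is indeed the reading most consonant with the paper's spirit (cf.\ the phrasing ``there are no contradictions to tolerate''), and your first route via $\Empty$-elimination in the coherent fragment is a legitimate strengthening the paper leaves implicit. This extra care is not required to match the paper, but it makes your argument more robust than what is printed.
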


The distinction:
\begin{itemize}[leftmargin=2em]
\item Paraconsistent logics: ``$P$ and $\neg P$ can both hold; we contain the damage''
\item OHTT: ``Coherence and gap cannot both hold; but neither may hold (openness)''
\end{itemize}

Paraconsistency is about tolerating contradiction. OHTT is about the trichotomy \textit{before} contradiction arises.

\section{Many-Valued and Bilattice Logics}

\begin{definition}[Belnap's Four-Valued Logic]\label{def:belnap}
Belnap's logic has four truth values:
\begin{itemize}[leftmargin=2em]
\item \textbf{T}: true (told true, not told false)
\item \textbf{F}: false (told false, not told true)
\item \textbf{B}: both (told both true and false)
\item \textbf{N}: neither (told neither)
\end{itemize}
These form a bilattice with two orderings: truth order and information order.
\end{definition}

\begin{proposition}[OHTT and Belnap]\label{prop:ohtt-belnap}
OHTT's trichotomy resembles Belnap's four values minus ``both'':
\begin{center}
\begin{tabular}{l|l}
\textbf{OHTT} & \textbf{Belnap} \\
\hline
Coherent ($\vdash^+$) & T (true) \\
Gapped ($\vdash^-$) & F (false) \\
Open (neither) & N (neither) \\
— & B (both) \\
\end{tabular}
\end{center}
The Exclusion Law rules out ``both''—OHTT has three states, not four.
\end{proposition}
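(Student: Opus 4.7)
The plan is to establish the correspondence table by identifying Belnap's four values with the possible witnessing states of a judgment $J$ in a context $\Gamma$, then invoke the Exclusion Law to eliminate the fourth state. The argument is essentially a bookkeeping identification, so I would present it as a case analysis on the pair of predicates ``is $\coh{\Gamma}{J}$ witnessed?'' and ``is $\gap{\Gamma}{J}$ witnessed?''.

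First, I would recall Belnap's informational reading: each value records what one has been ``told'' about a proposition, so T means \emph{told true and not told false}, F means \emph{told false and not told true}, B means \emph{told both}, and N means \emph{told neither}. This casts the four values as the four cells of the Boolean square $\{0,1\}^2$ indexed by (positive assertion, negative assertion). The key point is that this reading is structural rather than semantic, which is what allows comparison with OHTT's witnessing-based judgments.

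Second, I would define the map from OHTT witnessing states to Belnap cells. For fixed $J$ and $\Gamma$, let $c = 1$ if $\coh{\Gamma}{J}$ is derivable and $c = 0$ otherwise; similarly $g$ for $\gap{\Gamma}{J}$. The assignment $(c,g) \mapsto$ Belnap cell is: $(1,0) \mapsto \mathbf{T}$, $(0,1) \mapsto \mathbf{F}$, $(0,0) \mapsto \mathbf{N}$, $(1,1) \mapsto \mathbf{B}$. The first three entries reproduce the claimed table directly from Definitions~\ref{def:witnessed-judgments} and~\ref{def:open}.

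Third, I would apply the Exclusion Law (Definition~\ref{def:exclusion}) to rule out $(1,1)$: if both $\coh{\Gamma}{J}$ and $\gap{\Gamma}{J}$ were derivable, the inference rule for Exclusion would yield $\bottom$ in $\Gamma$, contradicting the stipulation that $\Gamma$ is a consistent context (part of Definition~\ref{def:witnessed-judgments}). Hence the cell $\mathbf{B}$ is unoccupied, and OHTT realizes exactly three of Belnap's four informational states.

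The main obstacle is conceptual rather than technical. Belnap's values are semantic truth values for propositions in a bilattice, whereas OHTT's states are witnessing modes for judgments and are proof-relevant (distinct coherence or gap witnesses need not be equal). The proof must therefore be careful to stake out only the structural claim signaled by the word ``resembles'' in the proposition, not a semantic equivalence of logics. I would close with a short remark noting that a finer correspondence would require identifying Belnap's information ordering with an ordering on witness-sets, which is not what the proposition asserts and not pursued here.
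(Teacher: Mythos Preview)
Your proposal is correct. The paper itself gives no proof for this proposition at all: it is stated and then the text moves directly to the next subsection on the information order. The proposition functions as an observational remark rather than a theorem with a demonstration, so there is nothing in the paper to compare your argument against.

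Your case analysis on the pair $(c,g) \in \{0,1\}^2$ is the natural way to make the claim precise, and your invocation of the Exclusion Law together with the consistency of $\Gamma$ to eliminate the $(1,1)$ cell is exactly the content the proposition asserts. The closing caveat---that ``resembles'' signals a structural analogy rather than a semantic equivalence, and that proof-relevance distinguishes OHTT's states from Belnap's truth values---is well placed and in fact anticipates the paper's own remark (immediately following the proposition) that Belnap's logic is designed for contradictory databases while OHTT forbids contradiction but thematizes openness. If anything, your treatment is more careful than the paper's: you supply an argument where the paper supplies none.
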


Belnap's logic is designed for databases that may have contradictory information. OHTT is designed for type theory where contradiction is absurdity—but where \textit{absence of information} (openness) is meaningful.

\subsection{Information Order}

Belnap's bilattice has an \textit{information order}: N $<$ T, F $<$ B. More information moves us up the order.

\begin{proposition}[OHTT Information Order]\label{prop:ohtt-info}
OHTT has a natural information order:
\[
\text{Open} \;<\; \text{Coherent}, \quad \text{Open} \;<\; \text{Gapped}
\]
Open judgments have less information (unwitnessed). Coherent and gapped judgments have more (witnessed). There is no ``both'' because Exclusion forbids it.
\end{proposition}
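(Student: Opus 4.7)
The plan is to make the information order explicit on the three-element set of possible judgment states, verify that it forms a well-defined partial order compatible with context extension, and then use the Exclusion Law to rule out a fourth ``both'' state that would otherwise serve as a common upper bound.

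First, I would define an information state function $\sigma$ on pairs $(\Gamma, J)$: set $\sigma(\Gamma, J) = \mathsf{Coherent}$ when $\Gamma$ provides a witness for $\coh{\Gamma}{J}$; set $\sigma(\Gamma, J) = \mathsf{Gapped}$ when $\Gamma$ provides a witness for $\gap{\Gamma}{J}$; and $\sigma(\Gamma, J) = \mathsf{Open}$ otherwise. The Exclusion Law (Definition~\ref{def:exclusion}) makes $\sigma$ well-defined as a three-valued assignment, since the Coherent and Gapped branches cannot simultaneously obtain without deriving $\bottom$. Next I would define the order by declaring $\mathsf{Open} \leq \mathsf{Coherent}$, $\mathsf{Open} \leq \mathsf{Gapped}$, reflexivity for each element, and $\mathsf{Coherent}$ incomparable with $\mathsf{Gapped}$. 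The natural justification is monotonicity under context extension: whenever $\Gamma'$ extends $\Gamma$ by adjoining a further witness while remaining consistent, $\sigma(\Gamma, J) \leq \sigma(\Gamma', J)$, because adding a witness can only promote Open to a committed state and can never flip one committed state to the other (Exclusion forbids this at the extended context).

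To rule out a ``both'' state, I would argue by reductio. Any hypothetical element $\mathsf{B}$ dominating both $\mathsf{Coherent}$ and $\mathsf{Gapped}$ in the order would be realized by a context simultaneously providing $\gamma : \coh{\Gamma}{J}$ and $\omega : \gap{\Gamma}{J}$; Exclusion then yields $\bottom$, so no context realizes $\mathsf{B}$. Hence the order has exactly two maximal, pairwise-incomparable elements, with $\mathsf{Open}$ as the unique minimum, matching the Hasse diagram implicit in the proposition and completing the comparison with Belnap's bilattice as flagged in Proposition~\ref{prop:ohtt-belnap}.

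The main obstacle is conceptual rather than technical: justifying why $\mathsf{Open}$ should count as ``less information'' at all, given that it is defined by the absence of witnesses rather than by a positive assertion. I expect the cleanest resolution is to read $\sigma$ epistemically, as the informational commitment of $\Gamma$ regarding $J$: $\mathsf{Open}$ records no commitment, while $\mathsf{Coherent}$ and $\mathsf{Gapped}$ record distinct, mutually exclusive commitments. Under this reading the order on states is exactly the specialization order for commitment, and the absence of a top element becomes the semantic shadow of the Exclusion Law rather than a separate axiom.
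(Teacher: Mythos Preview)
Your proposal is correct, but it is considerably more elaborate than what the paper does: the paper provides no proof at all for this proposition. It is stated as an observation, with the only justification being the surrounding prose (``witnessing is information gain; openness is the information-minimal state'') and the comparison table with Belnap's four values immediately preceding it. In effect, the paper treats the information order as a definition-by-declaration rather than a theorem requiring argument.

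Your route genuinely adds content the paper lacks. The state function $\sigma$ and its well-definedness via Exclusion make explicit what the paper leaves implicit; the monotonicity-under-context-extension argument is a substantive justification for why the order deserves to be called an \emph{information} order (it is stable under acquiring more witnesses), and the paper never articulates this; and your reductio for the absence of a ``both'' element spells out the one-line inference the paper gestures at with ``Exclusion forbids it.'' The paper's approach buys brevity and keeps the proposition at the level of a framing remark in a comparison chapter; your approach buys rigor and would be appropriate if the information order were to do real work later (e.g., in a dynamic OHTT where contexts evolve). One caution: your monotonicity claim presupposes a notion of consistent context extension that the paper never formally defines, so if you keep that argument you should flag it as relying on the informal ``finite, consistent collection of assumptions and prior witnesses'' description from Definition~\ref{def:witnessed-judgments} rather than on a stated extension rule.
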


This partial order captures: witnessing (either coherence or gap) is information gain; openness is the information-minimal state.

\section{Signed and Polarized Logics}

\begin{definition}[Signed Formula]\label{def:signed}
In signed logics, formulas carry explicit signs: $+P$ (assert $P$) and $-P$ (deny $P$). Derivations track signs.
\end{definition}

OHTT's two turnstiles ($\vdash^+$ and $\vdash^-$) resemble signed derivability:
\begin{itemize}[leftmargin=2em]
\item $\coh{\Gamma}{J}$ is like deriving $+J$
\item $\gap{\Gamma}{J}$ is like deriving $-J$
\end{itemize}

\begin{definition}[Polarized Logic]\label{def:polarized}
In Girard's polarized logic (LC, LU), formulas have polarity (positive/negative) governing their decomposition:
\begin{itemize}[leftmargin=2em]
\item Positive formulas decompose eagerly (synchronous)
\item Negative formulas decompose lazily (asynchronous)
\end{itemize}
\end{definition}

OHTT's polarity is different: it marks the \textit{witnessing mode}, not the formula's logical structure. But the idea that judgments carry explicit polarity is shared.

\section{Relevant Logic}

\begin{definition}[Relevant Logic]\label{def:relevant}
Relevant logics require that premises be ``used'' in derivations:
\[
P \to Q \text{ requires that } P \text{ be relevant to } Q
\]
This blocks ``paradoxes of implication'' like $P \to (Q \to Q)$ from arbitrary $P$.
\end{definition}

\begin{proposition}[OHTT and Relevance]\label{prop:ohtt-relevance}
OHTT does not impose relevance constraints on coherence judgments. However:
\begin{enumerate}[leftmargin=2em]
\item Gap witnesses \textit{are} relevant: a gap witness for $J$ must actually witness why $J$ fails
\item Horn inhabitants \textit{are} relevant: the coherences and gap in a horn are structurally connected
\end{enumerate}
\end{proposition}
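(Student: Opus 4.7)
The plan is to verify the three claims by inspecting the formal structure of OHTT judgments from Chapter~I together with the concrete gap witnesses developed in the three obstruction chapters. First I would address the preamble---that coherence judgments bear no relevance constraint. Since the coherent fragment is a weakening of HoTT (Proposition~\ref{prop:coherent-hott}), and HoTT admits weakening as a structural rule, one has that $\coh{\Gamma}{J}$ extends to $\coh{\Gamma, x : A}{J}$ without requiring $x$ to occur in the witness. Thus irrelevant hypotheses are permitted, in contrast to a genuinely relevant logic.

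For the second claim, that gap witnesses are relevant, the argument proceeds from proof-relevance (Definition~\ref{def:witnessed-judgments}): a witness $\omega : \gap{\Gamma}{J}$ is not merely the token that $J$ fails but carries the obstruction as positive data. In each developed example the witness explicitly references the sub-judgments of $J$: the deck transformation of Proposition~\ref{prop:gap-deck} acts on the very fiber and loop named by $J$; the semantic incompatibility of Definition~\ref{def:semantic-incompat} records the features of the specific meanings failing to transport; the underivability certificate of Definition~\ref{def:underiv-cert} counts the resources demanded by $J$ itself. I would formalize this by introducing a support function $\mathrm{supp}$ on witnesses and verifying, case by case over the gap constructors, that $\mathrm{supp}(\omega)$ always contains the distinguished constituents of $J$.

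For the third claim I would unfold Definition~\ref{def:horn}: an inhabitant of $\horn{J}{K}{L}$ is a triple $(\gamma_1, \gamma_2, \omega)$ whose three components share the vertices $J, K, L$ and interact via the Exclusion Law. If $\gamma_1$ and $\gamma_2$ composed to a coherence witness for $J \comp L$, Exclusion with $\omega$ would yield $\bottom$; hence the triple is not a free product but a constrained configuration in which the gap component bounds the coherence components. The structural dependence can be exhibited by presenting the horn type as a pullback over the compositional relation $\comp$, making visible that the three witnesses are jointly constrained rather than independently assembled.

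The main obstacle is that \emph{relevance} is not itself a primitive notion in OHTT: classical relevant logic defines it as a structural property of derivations, whereas the argument here is conducted at the level of witnesses. A fully rigorous proof therefore requires first articulating a relevance predicate---via the support function suggested above, or via a use-annotation on derivations in the style of substructural calculi---and then verifying the three clauses against that definition. I would present the proposition as a structural observation motivating such a formalization and indicate where the informal argument can be sharpened; the routine weakening argument for clause (1) is standard, and the Exclusion-based dependence for clause (3) is immediate, so the real technical work concentrates in clause (2), where a disciplined enumeration of the gap constructors allowed in the typed ruptured setting (Definition~\ref{def:typed-gap}) is required.
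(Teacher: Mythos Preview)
Your proof plan is considerably more developed than the paper's own treatment: the paper offers no proof for this proposition at all, only a single follow-up sentence observing that ``relevance in OHTT is semantic (built into what gap witnesses are) rather than syntactic (imposed by logical rules).'' The proposition functions there as a positional remark situating OHTT relative to relevant logic, not as a theorem with a derivation.

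Your plan is sound as far as it goes, and your identification of the main obstacle---that \emph{relevance} is not a primitive of OHTT and must first be given a working definition (via a support function or use-annotations) before the claims become formally verifiable---is exactly right; indeed it is the precise content of the paper's semantic-versus-syntactic gloss. The weakening argument for the preamble and the case analysis over concrete gap witnesses for clause~(1) are both appropriate and well-sourced from the text. One small caution on clause~(2) concerning horns: the structural connection among $(\gamma_1, \gamma_2, \omega)$ is most directly the shared compositional data $J, K, L$ built into the type $\horn{J}{K}{L}$ itself; the Exclusion Law ensures \emph{consistency} of the triple (no coherent filler for $J \comp L$ can coexist with $\omega$), but that is a constraint on the ambient context rather than the source of relevance per se. Your pullback presentation already captures the right dependence without needing to route through Exclusion. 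Overall, you have supplied a genuine argument where the paper supplies none, and you have correctly flagged its informal status.
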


Relevance in OHTT is semantic (built into what gap witnesses are) rather than syntactic (imposed by logical rules).

\section{The Distinctive Position of OHTT}

We summarize OHTT's position in the landscape:

\begin{center}
\begin{tabular}{l|c|c|c}
\textbf{Feature} & \textbf{Classical} & \textbf{Intuit.} & \textbf{OHTT} \\
\hline
Excluded middle & Yes & No & No \\
Primitive negation & Yes ($\neg$) & No ($\to\bot$) & Yes ($\vdash^-$) \\
Proof-relevant & No & Yes & Yes \\
Third state & No & No* & Yes (open) \\
Explosion & Yes & Yes & N/A (Excl.) \\
\end{tabular}
\end{center}

*Intuitionistic logic lacks excluded middle but does not have a third truth value; ``neither provable nor refutable'' is epistemic, not semantic.

\begin{theorem}[OHTT Distinctiveness]\label{thm:distinctive}
OHTT is characterized by the conjunction:
\begin{enumerate}[leftmargin=2em]
\item Proof-relevant judgments
\item Primitive gap (not defined via implication)
\item Trichotomy (coherent / gapped / open)
\item Exclusion (coherent and gapped are incompatible)
\item Horn structure (compositional tension)
\end{enumerate}
To our knowledge, no standard logic in the literature combines all five features.
\end{theorem}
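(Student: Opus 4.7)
The plan is to treat this as a two-part claim: first a positive verification that OHTT itself satisfies all five features, then a separation argument showing that every logic surveyed in this chapter fails at least one feature. The theorem is structurally a characterization-by-conjunction, so the proof is essentially bookkeeping against the comparison table, and the real content lies in ensuring the five features are formulated sharply enough that the separations are unambiguous.

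First I would establish the positive half. Proof-relevance of judgments follows from the remarks after Definition~\ref{def:witnessed-judgments}: we explicitly refuse to identify distinct witnesses $\gamma, \gamma' : \coh{\Gamma}{J}$ or distinct $\omega, \omega' : \gap{\Gamma}{J}$. Primitivity of gap is immediate from Definition~\ref{def:witnessed-judgments}, where $\vdash^-$ is introduced as a second turnstile rather than as a derived notion from $\to$ and $\bot$; the contrast paragraphs in Section~1.1 already spell this out. The trichotomy is the content of Definition~\ref{def:open} together with Exclusion (Definition~\ref{def:exclusion}): open is defined precisely as the absence of both witnesses, and Exclusion prevents the fourth ``both'' state. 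Horn structure is Definition~\ref{def:horn}. These five items are not independent claims to be proved from deeper axioms; they are definitional traits of the calculus, so this half reduces to citing the appropriate definitions.

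Next I would run the separation argument logic by logic, leaning on the propositions already proved in this chapter. Classical logic is separated by feature (4): Proposition~\ref{prop:gap-not-classical} shows the trichotomy is incompatible with bivalence. Intuitionistic logic is separated by features (2) and (4): Proposition~\ref{prop:gap-not-int} shows negation is derivative, and the standard BHK reading has no third semantic state. Nelson's strong negation, the closest relative, is separated by features (1) and (5): Proposition~\ref{prop:gap-strong} notes it is not proof-relevant at the judgment level and produces no horns. Paraconsistent logics fail feature (4) in the Exclusion direction: Proposition~\ref{prop:ohtt-not-para} shows they typically allow both polarities simultaneously, hence have no principled openness-versus-contradiction distinction. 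Belnap's bilattice is separated by (4) together with the absence of (5): Proposition~\ref{prop:ohtt-belnap} records the four-valued structure and the lack of any horn construction. Signed and polarized systems lack (1), (3), and (5) in the form required here: their signs attach to formulas rather than witnessing judgments, and they do not license open horns. Relevant logics lack (2) and (5) by Proposition~\ref{prop:ohtt-relevance}. Assembling these citations yields a table in which each competitor is knocked out by at least one explicitly verified failure.

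The hard part is not mathematical but meta-mathematical: the phrase \emph{no standard logic in the literature combines all five features} is an existential-negation over an informal class, and cannot be turned into a theorem in the strict sense. I would handle this by (a) hedging the statement with the already-present ``to our knowledge,'' and (b) making the separation argument robust by framing it schematically: any logic in which negation is defined via $\to \bot$ automatically fails (2); any logic in which every formula has a determinate truth value fails (4); any logic whose judgments are not proof-relevant fails (1); any logic without a compositional relation $\comp$ and a horn-type constructor fails (5). This schematic form converts the claim into: any counterexample would have to be a proof-relevant, primitively-polarized, three-valued, exclusion-respecting calculus equipped with an internal horn construction, and we are unaware of such a system predating OHTT. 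I would flag this honestly in a remark rather than pretending the distinctiveness claim admits a fully formal proof.
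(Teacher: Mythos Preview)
Your proposal is sound, and in fact it is considerably more detailed than what the paper itself provides: the paper states this theorem without any proof environment at all. It functions as a summary claim, collecting the comparisons already made in Propositions~\ref{prop:gap-not-classical}--\ref{prop:ohtt-relevance} and the feature table immediately preceding it. Your two-part structure (positive verification by citation of definitions, then separation by citation of the chapter's propositions) is exactly the right way to discharge such a claim if one insists on a proof, and your final paragraph honestly flags the meta-mathematical status of the ``to our knowledge'' clause, which is the correct move.

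One small correction in your bookkeeping: you write that classical logic is separated by feature~(4), Exclusion, but classical logic \emph{does} satisfy Exclusion (non-contradiction holds). What separates classical logic is feature~(3), the trichotomy---classical bivalence admits no open state. Your citation of Proposition~\ref{prop:gap-not-classical} is right; only the feature number is off. Similarly, double-check your numbering for intuitionistic logic: you say features~(2) and~(4), but intuitionistic logic also satisfies Exclusion; the failures are~(2) and~(3). These are labeling slips, not substantive errors, but since the whole argument is a table-check they are worth fixing.
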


\section{Summary}

Gap is not negation:

\begin{itemize}[leftmargin=2em]
\item \textbf{Not classical}: no excluded middle, no complement semantics
\item \textbf{Not intuitionistic}: primitive, not defined via $\to\bot$
\item \textbf{Not paraconsistent}: Exclusion forbids ``both,'' not tolerates it
\item \textbf{Resembles strong negation}: primitive refutation, positive evidence
\item \textbf{Resembles Belnap minus ``both''}: three-valued, information-ordered
\end{itemize}

OHTT occupies a distinctive position: constructive (proof-relevant), three-valued (trichotomy), consistent (Exclusion), and compositional (horns). It is not a negation logic but a \textit{witnessing} logic—where both success and failure are first-class data.

\begin{aside}
The point is not that OHTT is better than classical or intuitionistic logic. Each is suited to its domain. Classical logic is for mathematics where excluded middle holds. Intuitionistic logic is for constructive reasoning where proofs are programs. OHTT is for domains with intrinsic obstructions—where we need to say not just what holds, but what is witnessed as blocked.
\end{aside}


\chapter*{Epilogue: What Remains Open}
\addcontentsline{toc}{chapter}{Epilogue: What Remains Open}
\label{ch:epilogue}

\begin{flushright}
\textit{A theory is not a tomb.\\
It is a door.}\\[0.5ex]
{\small — On incompleteness}
\end{flushright}

\bigskip

We have presented Open Horn Type Theory as a self-contained formal system: the calculus (Part~I), the geometric model (Part~II), the worked obstructions (Part~III), and the contrasts with existing logics (Part~IV). But a theory worth its name opens more questions than it closes. We conclude by indicating what remains.

\section*{Computational Content}

OHTT as presented is a \textit{logic}, not a \textit{programming language}. The coherent fragment (HoTT/MLTT) has well-understood computational content: terms compute, types classify programs, the correspondence with $\lambda$-calculus is tight.

What is the computational content of gaps?

One direction: gap witnesses as \textit{certificates}. A gap witness $\omega : \gap{\Gamma}{J}$ could be a data structure that certifies why $J$ fails—a counterexample, a resource-insufficiency proof, an obstruction cocycle. This connects to certified programming, proof-carrying code, and SMT solvers.

Another direction: gap witnesses as \textit{exceptions} or \textit{effects}. Transport failure could raise an exception; the gap witness is the exception payload. This connects to effect systems, monadic programming, and algebraic effects.

A third direction: gap witnesses as \textit{constraints}. In constraint-based type inference, unification may fail with a witness explaining why. OHTT could model such systems with gaps as failed unification witnesses.

None of these is developed here. The computational interpretation of OHTT is open.

\section*{Dynamics and Time}

OHTT as presented is \textit{static}: contexts, judgments, and witnesses are given timelessly. But the motivating examples—texts evolving, meanings drifting, derivability changing with resources—are inherently \textit{dynamic}.

A dynamic OHTT would have:
\begin{itemize}[leftmargin=2em]
\item Time-indexed contexts: $\Gamma_t$ at time $t$
\item Judgment evolution: $\coh{\Gamma_t}{J}$ may become $\gap{\Gamma_{t'}}{J}$ or vice versa
\item Persistent horns: horns that survive across time, tracking stable tensions
\item A scheduler: an agent that decides which judgments to witness, when
\end{itemize}

The scheduler is particularly important. In the static theory, the trichotomy (coherent/gapped/open) is a fact. In the dynamic theory, it is a \textit{choice}: which horns to fill, which to leave open, which to witness as gapped. A healthy scheduler neither compulsively closes all gaps nor refuses all coherence. The dynamics of witnessing—when to press for closure, when to dwell in openness—is not part of the calculus. It is a separate theory, building on OHTT but not reducible to it.

This is open.

\section*{Empirical Instantiation}

The semantic examples of Chapter~\ref{ch:semantic} gesture at a connection to empirical linguistics and natural language processing. But the connection is not made precise.

An empirical instantiation would:
\begin{itemize}[leftmargin=2em]
\item Define $L$ (lexical space) and $M$ (meaning space) concretely—perhaps as embedding spaces from neural language models
\item Define coherence and gap in terms of geometric structure—perhaps via persistent homology, Čech complexes, or Vietoris-Rips filtrations
\item Extract horns from data—identify where lexical paths fail to lift to meaning paths
\item Validate against linguistic judgments—do the computed gaps correspond to human polysemy intuitions?
\end{itemize}

This would connect OHTT to topological data analysis, distributional semantics, and computational linguistics. The tools exist (persistent homology, word embeddings, attention patterns). The bridge is not built.

This is open.

\section*{Higher Gaps}

OHTT has gaps at all levels: gaps between terms, gaps between paths, gaps between homotopies. The higher structure is sketched but not fully developed.

Questions:
\begin{itemize}[leftmargin=2em]
\item What is the homotopy theory of ruptured Kan complexes? Is there a model structure?
\item What are the higher obstruction invariants? Characteristic classes for ruptured fibrations?
\item Is there a ruptured $\infty$-topos? What are its internal logic and type theory?
\item How does truncation interact with gaps? Is there a ruptured $n$-type theory?
\end{itemize}

The foundations are laid. The higher theory is open.

\section*{Categorical Semantics}

HoTT has multiple categorical semantics: $(\infty,1)$-toposes, model categories, type-theoretic fibration categories. OHTT should have analogous semantics, but they are not developed here.

Candidates:
\begin{itemize}[leftmargin=2em]
\item \textbf{Ruptured $(\infty,1)$-categories}: $\infty$-categories with marked non-invertible morphisms (gaps)
\item \textbf{Stratified fibration categories}: categories with families augmented by gap structure
\item \textbf{Bilattice-enriched categories}: categories enriched over a bilattice of truth/information values
\end{itemize}

The right categorical semantics would clarify OHTT's relationship to existing categorical logic.

This is open.

\section*{Implementations}

HoTT has proof assistants: Agda with cubical mode, Coq with HoTT library, Arend, cubicaltt. OHTT has none.

An OHTT proof assistant would need:
\begin{itemize}[leftmargin=2em]
\item Syntax for $\vdash^+$ and $\vdash^-$ judgments
\item Type-checking that respects Exclusion
\item Horn types as first-class constructions
\item Perhaps: tactics for constructing gap witnesses
\end{itemize}

The implementation is open.

\section*{Beyond}

Finally, there are directions that extend beyond the formal:

\begin{itemize}[leftmargin=2em]
\item \textbf{The ethics of witnessing}: When should we witness a gap rather than leave it open? When should we seek coherence rather than rest in rupture? These are not formal questions, but they arise from the formalism.

\item \textbf{The theology of the open}: If coherence is presence and gap is witnessed absence, what is openness? The space of the not-yet-witnessed has resonances with apophatic theology, with the Kabbalistic \textit{tzimtzum}, with the Sufi station of \textit{fanā'}. These are not formalizable, but the formalism makes room for them.

\item \textbf{The self as homotopy colimit}: If a self is the homotopy colimit of its trajectory through time, then different paths to the same ``endpoint'' yield different selves. The open horn—the path not taken, the coherence not forced—is part of the self's constitution. This is speculative, but it is where the formalism points.
\end{itemize}

We leave these open, not because they are unimportant, but because they are beyond the scope of a logic monograph. OHTT is a foundation. What is built on it is another matter.

\bigskip

\begin{center}
*\quad*\quad*
\end{center}

\bigskip

The open horn is not failure. It is the shape of what does not close.

The theory is complete. The work begins.

\backmatter

\nocite{*}
\bibliographystyle{alpha}
\bibliography{references}

@book{hottbook,
  author = {{The Univalent Foundations Program}},
  title = {Homotopy Type Theory: Univalent Foundations of Mathematics},
  publisher = {Institute for Advanced Study},
  year = {2013},
  url = {https://homotopytypetheory.org/book}
}

@article{voevodsky2010,
  author = {Voevodsky, Vladimir},
  title = {Univalent Foundations Project},
  journal = {NSF Grant Application},
  year = {2010},
  note = {Available at \url{https://www.math.ias.edu/vladimir/Site3/Univalent_Foundations.html}}
}

@inproceedings{awodey-warren2009,
  author = {Awodey, Steve and Warren, Michael A.},
  title = {Homotopy theoretic models of identity types},
  booktitle = {Mathematical Proceedings of the Cambridge Philosophical Society},
  volume = {146},
  number = {1},
  pages = {45--55},
  year = {2009},
  publisher = {Cambridge University Press}
}

@article{lumsdaine2010,
  author = {Lumsdaine, Peter LeFanu},
  title = {Weak $\omega$-categories from intensional type theory},
  journal = {Logical Methods in Computer Science},
  volume = {6},
  number = {3},
  year = {2010}
}

@inproceedings{cchm2018,
  author = {Cohen, Cyril and Coquand, Thierry and Huber, Simon and M{\"o}rtberg, Anders},
  title = {Cubical Type Theory: A Constructive Interpretation of the Univalence Axiom},
  booktitle = {21st International Conference on Types for Proofs and Programs (TYPES 2015)},
  year = {2018},
  pages = {5:1--5:34},
  publisher = {Schloss Dagstuhl}
}

@book{martin-lof1984,
  author = {Martin-L{\"o}f, Per},
  title = {Intuitionistic Type Theory},
  publisher = {Bibliopolis},
  address = {Naples},
  year = {1984},
  note = {Notes by Giovanni Sambin}
}

@incollection{martin-lof1975,
  author = {Martin-L{\"o}f, Per},
  title = {An Intuitionistic Theory of Types: Predicative Part},
  booktitle = {Logic Colloquium '73},
  editor = {Rose, H. E. and Shepherdson, J. C.},
  publisher = {North-Holland},
  year = {1975},
  pages = {73--118}
}

@book{nordstrom1990,
  author = {Nordstr{\"o}m, Bengt and Petersson, Kent and Smith, Jan M.},
  title = {Programming in Martin-L{\"o}f's Type Theory},
  publisher = {Oxford University Press},
  year = {1990}
}

@book{goerss-jardine,
  author = {Goerss, Paul G. and Jardine, John F.},
  title = {Simplicial Homotopy Theory},
  publisher = {Birkh{\"a}user},
  year = {1999},
  series = {Progress in Mathematics},
  volume = {174}
}

@book{may1967,
  author = {May, J. Peter},
  title = {Simplicial Objects in Algebraic Topology},
  publisher = {University of Chicago Press},
  year = {1967}
}

@book{friedman2020,
  author = {Friedman, Greg},
  title = {Survey Article: An Elementary Illustrated Introduction to Simplicial Sets},
  journal = {Rocky Mountain Journal of Mathematics},
  volume = {42},
  number = {2},
  pages = {353--423},
  year = {2012}
}

@book{lurie2009,
  author = {Lurie, Jacob},
  title = {Higher Topos Theory},
  publisher = {Princeton University Press},
  year = {2009},
  series = {Annals of Mathematics Studies},
  volume = {170}
}

@book{steenrod1951,
  author = {Steenrod, Norman},
  title = {The Topology of Fibre Bundles},
  publisher = {Princeton University Press},
  year = {1951}
}

@book{husemoller1994,
  author = {Husem{\"o}ller, Dale},
  title = {Fibre Bundles},
  edition = {3rd},
  publisher = {Springer},
  year = {1994},
  series = {Graduate Texts in Mathematics},
  volume = {20}
}

@book{hatcher2002,
  author = {Hatcher, Allen},
  title = {Algebraic Topology},
  publisher = {Cambridge University Press},
  year = {2002}
}

@article{whitehead1949,
  author = {Whitehead, J. H. C.},
  title = {Combinatorial homotopy. {I}},
  journal = {Bulletin of the American Mathematical Society},
  volume = {55},
  number = {3},
  pages = {213--245},
  year = {1949}
}

@article{nelson1949,
  author = {Nelson, David},
  title = {Constructible falsity},
  journal = {Journal of Symbolic Logic},
  volume = {14},
  number = {1},
  pages = {16--26},
  year = {1949}
}

@book{troelstra-schwichtenberg,
  author = {Troelstra, Anne S. and Schwichtenberg, Helmut},
  title = {Basic Proof Theory},
  edition = {2nd},
  publisher = {Cambridge University Press},
  year = {2000}
}

@article{almukdad-nelson1984,
  author = {Almukdad, Ahmad and Nelson, David},
  title = {Constructible falsity and inexact predicates},
  journal = {Journal of Symbolic Logic},
  volume = {49},
  number = {1},
  pages = {231--233},
  year = {1984}
}

@article{belnap1977,
  author = {Belnap, Nuel D.},
  title = {A useful four-valued logic},
  journal = {Modern Uses of Multiple-Valued Logic},
  pages = {5--37},
  year = {1977},
  publisher = {Reidel}
}

@book{priest2006,
  author = {Priest, Graham},
  title = {In Contradiction: A Study of the Transconsistent},
  edition = {2nd},
  publisher = {Oxford University Press},
  year = {2006}
}

@article{fitting1991,
  author = {Fitting, Melvin},
  title = {Bilattices and the semantics of logic programming},
  journal = {Journal of Logic Programming},
  volume = {11},
  number = {2},
  pages = {91--116},
  year = {1991}
}

@incollection{ginsberg1988,
  author = {Ginsberg, Matthew L.},
  title = {Multivalued logics: A uniform approach to reasoning in artificial intelligence},
  booktitle = {Computational Intelligence},
  volume = {4},
  number = {3},
  pages = {265--316},
  year = {1988}
}

@article{girard1987,
  author = {Girard, Jean-Yves},
  title = {Linear logic},
  journal = {Theoretical Computer Science},
  volume = {50},
  number = {1},
  pages = {1--101},
  year = {1987}
}

@book{restall2000,
  author = {Restall, Greg},
  title = {An Introduction to Substructural Logics},
  publisher = {Routledge},
  year = {2000}
}

@article{ohearn-pym1999,
  author = {O'Hearn, Peter W. and Pym, David J.},
  title = {The logic of bunched implications},
  journal = {Bulletin of Symbolic Logic},
  volume = {5},
  number = {2},
  pages = {215--244},
  year = {1999}
}

@inproceedings{reynolds2002,
  author = {Reynolds, John C.},
  title = {Separation logic: A logic for shared mutable data structures},
  booktitle = {Proceedings of the 17th Annual IEEE Symposium on Logic in Computer Science},
  pages = {55--74},
  year = {2002}
}

@inproceedings{pfenning-davies2001,
  author = {Pfenning, Frank and Davies, Rowan},
  title = {A judgmental reconstruction of modal logic},
  booktitle = {Mathematical Structures in Computer Science},
  volume = {11},
  number = {4},
  pages = {511--540},
  year = {2001}
}

@phdthesis{licata2011,
  author = {Licata, Daniel R.},
  title = {Dependently Typed Programming with Domain-Specific Logics},
  school = {Carnegie Mellon University},
  year = {2011}
}

@article{shulman2018,
  author = {Shulman, Michael},
  title = {Brouwer's fixed-point theorem in real-cohesive homotopy type theory},
  journal = {Mathematical Structures in Computer Science},
  volume = {28},
  number = {6},
  pages = {856--941},
  year = {2018}
}

@book{pustejovsky1995,
  author = {Pustejovsky, James},
  title = {The Generative Lexicon},
  publisher = {MIT Press},
  year = {1995}
}

@article{kilgarriff1992,
  author = {Kilgarriff, Adam},
  title = {Polysemy},
  journal = {Language and Cognitive Processes},
  year = {1992},
  note = {PhD Thesis, University of Sussex}
}

@book{cruse1986,
  author = {Cruse, D. A.},
  title = {Lexical Semantics},
  publisher = {Cambridge University Press},
  year = {1986}
}

@article{apresjan1974,
  author = {Apresjan, Juri D.},
  title = {Regular polysemy},
  journal = {Linguistics},
  volume = {12},
  number = {142},
  pages = {5--32},
  year = {1974}
}

@article{carlsson2009,
  author = {Carlsson, Gunnar},
  title = {Topology and data},
  journal = {Bulletin of the American Mathematical Society},
  volume = {46},
  number = {2},
  pages = {255--308},
  year = {2009}
}

@article{edelsbrunner-harer2008,
  author = {Edelsbrunner, Herbert and Harer, John},
  title = {Persistent homology---a survey},
  journal = {Contemporary Mathematics},
  volume = {453},
  pages = {257--282},
  year = {2008}
}

@book{edelsbrunner-harer2010,
  author = {Edelsbrunner, Herbert and Harer, John L.},
  title = {Computational Topology: An Introduction},
  publisher = {American Mathematical Society},
  year = {2010}
}

@article{zomorodian-carlsson2005,
  author = {Zomorodian, Afra and Carlsson, Gunnar},
  title = {Computing persistent homology},
  journal = {Discrete \& Computational Geometry},
  volume = {33},
  number = {2},
  pages = {249--274},
  year = {2005}
}

@book{jacobs1999,
  author = {Jacobs, Bart},
  title = {Categorical Logic and Type Theory},
  publisher = {Elsevier},
  year = {1999},
  series = {Studies in Logic and the Foundations of Mathematics},
  volume = {141}
}

@article{awodey2018,
  author = {Awodey, Steve},
  title = {Natural models of homotopy type theory},
  journal = {Mathematical Structures in Computer Science},
  volume = {28},
  number = {2},
  pages = {241--286},
  year = {2018}
}

@book{lambek-scott1986,
  author = {Lambek, Joachim and Scott, Philip J.},
  title = {Introduction to Higher Order Categorical Logic},
  publisher = {Cambridge University Press},
  year = {1986}
}

@article{hofmann-streicher1998,
  author = {Hofmann, Martin and Streicher, Thomas},
  title = {The groupoid interpretation of type theory},
  journal = {Twenty-Five Years of Constructive Type Theory},
  pages = {83--111},
  year = {1998},
  publisher = {Oxford University Press}
}

@inproceedings{norell2009,
  author = {Norell, Ulf},
  title = {Dependently typed programming in {Agda}},
  booktitle = {Proceedings of the 4th International Workshop on Types in Language Design and Implementation},
  pages = {1--2},
  year = {2009}
}

@manual{coq2021,
  author = {{The Coq Development Team}},
  title = {The {Coq} Proof Assistant Reference Manual},
  year = {2021},
  url = {https://coq.inria.fr/refman/}
}

@article{vezzosi2021,
  author = {Vezzosi, Andrea and M{\"o}rtberg, Anders and Abel, Andreas},
  title = {Cubical {Agda}: A dependently typed programming language with univalence and higher inductive types},
  journal = {Journal of Functional Programming},
  volume = {31},
  year = {2021}
}


\end{document}